\def\dOi{10(4:1)2014}
\subjclass{F.4.1 Mathematical Logic}
\setlist[enumerate]{font=\normalfont,labelindent=*,leftmargin=*,start=1,label=(\roman*)}
\theoremstyle{plain}\newtheorem{theorem}{Theorem}
\newtheorem{definition}[theorem]{Definition}
\newenvironment{renumerate}{\begin{enumerate}}{\end{enumerate}}
\renewcommand{\tilde}{\widetilde}
\renewcommand{\bar}{\overline}
\newcommand{\dd}{\mathrm{D}}
\newcommand{\SOL}{\mathrm{SOL}}
\newcommand{\FOL}{\mathrm{FOL}}
\newcommand{\MSOL}{\mathrm{MSOL}}
\newcommand{\CMSOL}{\mathrm{CMSOL}}
\newcommand{\CFOL}{\mathrm{CFOL}}
\newcommand{\IFPL}{\mathrm{IFPL}}
\newcommand{\FPL}{\mathrm{FPL}}
\newcommand{\N}{{\mathbb N}}
\newcommand{\fA}{{\mathfrak A}}
\newcommand{\fB}{{\mathfrak B}}
\newcommand{\Z}{{\mathbb Z}}
\newcommand{\Q}{{\mathbb Q}}
\newcommand{\bP}{{\mathbf P}}
\newcommand{\bNP}{{\mathbf{NP}}}
\newcommand{\cL}{\mathcal{L}}
\newcommand{\cP}{\mathcal{P}}
\renewcommand{\L}{\cL}
\begin{document}

\title[Connection Matrices and the Definability of Graph Parameters]{Connection Matrices and the Definability of Graph Parameters}

\author[T.~Kotek]{Tomer Kotek\rsuper a}
\address{{\lsuper a}Faculty of Informatics \\ 
Vienna University of Technology \\ 
Vienna, Austria}
\email{kotek@forsyte.at}

\thanks{{\lsuper a}The first author was partially supported by the Fein Foundation, the Graduate School of the Technion--Israel
Institute of Technology, the Austrian National Research
    Network S11403-N23 (RiSE) of the Austrian Science Fund (FWF) and
    the Vienna Science and Technology Fund (WWTF) through
    grants PROSEED, ICT12-059, and VRG11-005.
}

\author[J.~A.~Makowsky]{Johann A. Makowsky\rsuper b}
\address{{\lsuper b}Faculty of Computer Science \\
Technion--Israel Institute of Technology \\
Haifa, Israel}	%required
\email{janos@cs.technion.ac.il}  %optional
\thanks{{\lsuper b}The second author was
partially supported by the Israel Science Foundation for the project ``Model Theoretic Interpretations
of Counting Functions'' (2007-2011) and the Grant for Promotion of Research by the
Technion--Israel Institute of Technology.}	%optional

%% mandatory lists of keywords and classifications:
\keywords{Model theory, finite model theory, graph invariants}

%\titlecomment{OPTIONAL comment concerning the title, \eg, if a variant
%or an extended abstract of the paper has appeared elsewehere}
%%%%%%%%%%%%%%%%%%%%%%%%%%%%%%%%%%%%%%%%%%%%%%%%%%%%%%%%%%%%%%%%%%%%%%%%%%%

%% the abstract has to PRECEED the command \maketitle:
%% be sure not to issue the \maketitle command twice!

\begin{abstract}
In this paper we extend and prove in detail the Finite Rank Theorem
for connection matrices of graph parameters definable in 
Monadic Second Order Logic with counting ($\CMSOL$) from 
B. Godlin, T. Kotek and J.A. Makowsky (2008) and J.A. Makowsky (2009).
We demonstrate its vast applicability in simplifying known and 
new non-definability results of graph properties and finding new non-definability
results for graph parameters. 
We also prove a Feferman-Vaught Theorem for the logic CFOL, First Order Logic with
the modular counting quantifiers. 
\end{abstract}

\maketitle

%-------------------------

\newcommand{\angl}[1]{\left\langle #1 \right\rangle}
\newcommand{\card}[3]{card_{\mathcal{#1},\bar{#2}}(#3(\bar{#2}))}
\newif\ifshort
\shorttrue
\newif\ifskip
\skiptrue

%-----------------------------------------------------------------------------
%\input{macros}

\section{Introduction}

\subsection*{Difficulties in proving non-definability}

Proving that a graph property $P$ is not definable in First Order Logic $\FOL$
can be a challenging task, especially on graphs with an additional linear order on the vertices.
Proving that a graph property such as 3-colorability, which is definable in
Monadic Second Order Logic $\MSOL$, is not definable in fixed point logic on ordered graphs
amounts to solving the famous $\bP \neq \bNP$ problem.

In the case of $\FOL$ and $\MSOL$ properties the basic tools for proving non-definability
are the various {\em Ehrenfeucht-Fra\"iss\'e games} also called  {\em pebble games}.
However,  proving the existence of winning strategies for these games can be
exasperating. Two additional tools can be used to make the construction of such winning strategies
easier and more transparent: the {\em composition of winning strategies} and the use of
{\em locality properties} such as Hanf locality and Gaifman locality. 
These techniques are by now well understood, even if not always simple
to apply, and are described in monographs such as \cite{bk:FMT,bk:Libkin2004}.
However these techniques are not easily applicable for stronger logics, such as
$\CFOL$ and $\CMSOL$, which extend $\FOL$ respectively $\MSOL$ with
modular counting quantifiers $\dd_{m,i}x \phi(x)$ which say that the number of elements satisfying $\phi$
equals $i$ modulo ${m}$. 
Furthermore, the pebble game method or the locality method may be difficult to use when dealing with
ordered structures or when proving non-definability for the case where the definition may use an order relation on the
universe in an {\em order-invariant} way.
Definability in $\MSOL$ and $\CMSOL$ depends on whether the vocabulary of graphs or hypergraphs is considered. 
The vocabulary of graphs is $\tau_{G}$ consisting of the edge relation symbol $E$. The vocabulary $\tau_{HG}$ of hypergraphs is 
two-sorted, with two unary relation symbols $V$ and $E$ whose interpretations partition the universe, and has
the binary incidence relation $R_{inc}$ between $V$ elements and $E$ elements. Structures of $\tau_{HG}$ can 
be not only graphs but hypergraphs. In \cite{bk:CourcelleEngelfriet2012} $\MSOL$ over $\tau_{G}$ is denoted
$MS_1$, while $\MSOL$ over $\tau_{HG}$ is denoted $MS_2$. In the case of $\FOL$ the choice of $\tau_G$ or $\tau_{HG}$ 
does not have any effect on definablility.

The notion of definability was extended in  \cite{ar:ArnborgLagergrenSeese88,ar:ArnborgLagergrenSeese91}
to integer valued graph parameters, and in
\cite{ar:CourcelleMakowskyRoticsDAM,ar:MakowskyTARSKI,ar:MakowskyZilber2006,ar:KotekMakowskyZilber08,ar:KotekMakowskyZilber11}
to real or complex valued graph parameters
and graph polynomials. 
In \cite{ar:MakowskyTARSKI} and \cite{ar:KotekMakowskyZilber08} graph polynomials definable in $\MSOL$ respectively $\SOL$ were introduced.
The techniques of pebble games and locality do not lend themselves easily, or are not useful at all, for
proving non-definability in these cases.

We assume the reader is familiar with the basics of
finite model theory \cite{bk:FMT,bk:Libkin2004} and graph theory \cite{bk:bollobas99,bk:Diestel05}.

\subsection*{Connection matrices}

{\em Connection matrices} were introduced in  \cite{ar:FreedmanLovaszSchrijver07,ar:Lovasz07}
by M. Freedman, L. Lov\'asz and A. Schrijver
where they were used to characterize 
to characterize various partition functions based on graph homomorphisms, cf. also \cite{bk:Lovasz12}.
Let $f$ be a graph parameter whose image is in some field $\mathbb{F}$ such as the real or the complex numbers.
A $k$-connection matrix $M(\sqcup_k, f)$ is an infinite matrix, where the rows and columns are indexed
by  finite $k$-labeled graphs $G_i$ and the entry $M(\sqcup_k, f)_{i,j}$ is given by the
value of $f(G_i \sqcup_k G_j)$. Here $\sqcup_k$ denotes the {\em $k$-sum} operation on $G_i$ and $G_j$, i.e. the operation of taking the disjoint union of $G_i$ and $G_j$
and identifying vertices with the corresponding  $k$-many labels.

In \cite{ar:GodlinKotekMakowsky08} connection matrices were used to show that certain
graph parameters and polynomials are not $\MSOL$-definable.
The main result of \cite{ar:GodlinKotekMakowsky08} is the {\em Finite Rank Theorem},
which states that the connection matrices of $\CMSOL$-definable graph polynomials have finite rank.
Connection matrices and the Finite Rank Theorem were generalized in \cite{ar:Makowsky09} to matrices $M(\Box, f)$ where
$\Box$ is a binary operation on labeled graphs subject to a smoothness condition depending on the logic
one wants to deal with. However, very few applications of the Finite Rank Theorem were given.

\subsection*{Properties not definable in \texorpdfstring{$\CFOL$}{CFOL} and \texorpdfstring{$\CMSOL$}{CMSOL}}
The purpose of this paper lies in the demonstration that the Finite Rank Theorem is a {\em truly manageable tool}
for proving non-definability which leaves no room for hand-waving arguments. 
To make our point we discuss graph properties (not)-definable in $\CFOL$ and $\CMSOL$. 
We also discuss  the corresponding (non)-definability questions in $\CMSOL$
for graph parameters  and graph polynomials.
Although one can derive pebble games for these two logics, see e.g. \cite{ar:KolaitisV95,ar:Nurmonen00},
using them to prove non-definability may be very awkward.

Instead we use
a Feferman-Vaught-type Theorem for $\CFOL$ for disjoint unions and Cartesian products, Theorem \ref{th:FV-CFOL},
which seems to be new for the case of products. The corresponding theorem for disjoint unions, Theorem \ref{th:FV-CMSOL}(i),
for $\CMSOL$ was proven by B. Courcelle \cite{bk:courcelle,bk:CourcelleEngelfriet2012,ar:MakowskyTARSKI}.

The proof of the Finite Rank Theorem for these logics follows from the Feferman-Vaught-type theorems.
The details will be spelled out in  Section \ref{se:fol}.

With the help of the Finite Rank Theorem we give new and uniform proofs for the following: 
\begin{renumerate}
\item
Using connection matrices for various generalizations of the Cartesian product $\times_{\Phi}$ we prove
non-definability of the following properties in $\CFOL$ with the vocabulary of graphs $\angl{V,E, <}$ 
with linear order:
\begin{itemize}
\item
Forests,
bipartite graphs,
chordal graphs,
perfect graphs,
interval graphs,
block graphs (every biconnected component, i.e., every block, is a clique),
parity graphs (any two induced paths joining the same pair of vertices have the same parity);
\item
Trees, connected graphs;
\item
Planar graphs, cactus graphs (graphs in which any two
cycles have at most one vertex in common) and
pseudo-forests (graphs in which every connected component has at most one cycle);
\item
Bridgeless graphs, $k$-connected. 
\end{itemize}
The case of connected graphs was also shown undefinable in $\CFOL$ by. J. Nurmonen in \cite{ar:Nurmonen00}
using his version of the pebble games for $\CFOL$.
\item
Using connection matrices for various generalizations of the disjoint union $\sqcup_{\Phi}$ we prove
non-definability of the following properties in $\CMSOL$ with the vocabulary of graphs $\angl{V,E, <}$
with linear order:
\begin{itemize}
\item
Hamiltonicity (via cycles or paths), graphs having a perfect matching,  cage graphs (regular graphs with as few vertices as possible for
their girth), well-covered graphs (where every minimal vertex cover has the same size
as any other minimal vertex cover). 
Here $\sqcup_{\Phi}$ is the join operation $\bowtie$. 
\item
The class of graphs which have a spanning tree of degree at most $3$.
Here $\sqcup_{\Phi}$ is a modified join operation \cite[Remark 5.21, Page 350]{bk:CourcelleEngelfriet2012}.
\end{itemize}
\item
Using connection matrices for various generalizations of the disjoint union $\sqcup_{\Phi}$ we prove
non-definability  of the following properties in $\CMSOL$ with the vocabulary of hypergraphs $\angl{V,E; R, <}$
with linear order:
\begin{itemize}
\item 
Regular graphs and
bi-degree graphs.
\item Graphs with average degree at most $\frac{|V|}{2}$;
\item Aperiodic digraphs (where the greatest common divisor of the lengths of all
cycles in the graph is $1$);
\item Asymmetric (also called rigid) graphs (i.e. graphs which have no non-trivial automorphisms).
\end{itemize}
\end{renumerate}

\subsection*{Graph parameters and graph polynomials not definable in \texorpdfstring{$\CMSOL$}{CMSOL}}

A graph parameter is $\CMSOL$-definable if it is the evaluation of a
$\CMSOL$-definable graph polynomial. The precise definition of definability of graph polynomials
is given in Section \ref{se:solpol}. 
Most prominent graph polynomials turned out to be definable in $\CMSOL$, sometimes using a linear order
on the vertices in an order-invariant way,
among them the various Tutte polynomials, interlace polynomials, matching polynomials, and many more, cf.
\cite{ar:MakowskyZoo}.
This led the second author to express his belief 
in \cite{ar:MakowskyZoo} that 
all ``naturally occurring graph polynomials''
are $\CMSOL$-definable. However, in \cite{ar:GodlinKotekMakowsky08} it was shown,
using connection matrices, that the graph polynomial counting harmonious colorings
is not $\CMSOL$-definable. A vertex coloring is harmonious if each pair of colors appears at most
once at the two end points of an edge, cf. \cite{ar:Edwards97,ar:HopcroftKrishnamoorthy83}.
That this is indeed a graph polynomial was shown in \cite{ar:KotekMakowskyZilber11}.
However, the main thrust of \cite{ar:GodlinKotekMakowsky08} consists in showing that certain graph parameters
are not evaluations of the standard prominent graph polynomials.

In Section \ref{se:solpolx}, we use connection matrices to show that
many ``naturally occurring graph polynomials'' are not $\CMSOL$-definable.
All these examples count various colorings and are graph polynomials by \cite{ar:KotekMakowskyZilber11}.
The corresponding notion of coloring is studied extensively in the literature.

To illustrate this we show that the following graph polynomials are not $\CMSOL$-definable in the language of graphs:
\begin{itemize}
 \item The {\em chromatic polynomial} $\chi(G,k)$ which counts proper vertex $k$-colorings of $G$;
 \item For every fixed $t\in\N^+$, $\chi_{mcc(t)}(G,k)$ which 
        counts vertex $k$-colorings $f:V(G)\to[k]$ for which no color
	  induces a subgraph with a connected component of size larger than $t$;
 \item The {\em vertex-acyclic polynomial} $\chi_{v-acyclic}(G,k)$ which counts 
       proper vertex $k$-color\-ings $f : V (G) \to [k]$ such that there is no two colored cycle in G.      
 \end{itemize}

\noindent We show that the following graph polynomials are not even $\CMSOL$-definable in the language of hypergraphs:
\begin{itemize}
\item The {\em rainbow polynomial} $\chi_{rainbow}(G,k)$ which counts {\em path-rainbow connected $k$-colorings}, which are functions
   $c: E(G) \rightarrow [k]$ such that between any two vertices $u,v \in V(G)$ there exists a path
     where all the edges have different colors.
\item $\chi_{convex}(G,k)$ is the number of {\em convex colorings}, which are vertex $k$-colorings \linebreak $f:V(G)\to[k]$
	     such that every color induces a connected subgraph of G.
\item For every fixed $t\in\N^+$, $\chi_{t-improper}(G,k)$ which counts {\em $t$-improper colorings}. A $t$-improper coloring is function
$f : V (G) \to [k]$ for which every color induces a graph in which no vertex
has degree more than $t$. 
\item $\chi_{non-rep}(G,k)$ which counts {\em non-repetitive colorings}.
A function f : E(G) → [k] is a non-repetitive coloring if the sequence of colors on any path in $G$ is non-repetitive.
A sequence $a_1 ,\ldots, a_r$ is non-repetitive if there is no $i, j \geq 1$ such
that $(a_i,\ldots,a_{i+j-1})=(a_{i+j},\ldots,a_{i+2j-1})$. 
\item The {\em harmonious polynomial} $\chi_{harm}(G,k)$ which counts 
       proper vertex $k$-colorings $f : V (G) \to [k]$ such that for any two distinct edges $(u_1,u_2)$ and $(v_1,v_2)$, it holds that
       $\{f(u_1),f(u_2)\}\not=\{f(v_1),f(v_2)\}$. The underlying proof idea here is similar to that in \cite{ar:GodlinKotekMakowsky08}. We
       bring it here for completeness. 
\end{itemize}
	     
\noindent Path-rainbow connected colorings were introduced in \cite{ar:Chartrand2008} and their complexity was studied in  
\cite{ar:ChakrabortyFischerMatsliahYuster2008}.  
$mcc(t)$-colorings were studied in \cite{ar:ADV03}, \cite{ar:LMST08} and \cite{ar:Farrugia04}.
Note $\chi_{mcc(1)}(G,k)$ is the chromatic polynomial.
Convex colorings were studied for their complexity
e.g. in \cite{ar:MS07} and \cite{arXiv:GoodallNoble2008}.
From \cite{ar:KotekMakowskyZilber11} we get that $\chi_{rainbow}(G,k)$, $\chi_{mcc(t)}(G,k)$, and $\chi_{convex}(G,k)$ are 
graph polynomials with $k$ as the variable.
Acyclic vertex
colorings were introduced in \cite{ar:Grunbaum73} and A. V. Kostochka proved in 1978
in his thesis that it is NP-hard to decide for a given G and k if the
there exists an acyclic vertex coloring with at most k colors, see \cite{ar:AlonMcDiarmidReed1991}.
Acyclic edge colorings were studied e.g. in \cite{ar:AlonZaks02,ar:Skulrattanakulchai04,ar:BasavarajuC08}.
It is NP-hard to determine whether G is
t-improperly 2-colorable for any fixed positive t (even if G is planar),
cf. \cite{ar:CowenGoddardJesurum1997}.
Non-repetitive colorings were introduced in \cite{pr:AGHR02}. Their complexity was studied in \cite{ar:Manin2007}.
The minimal number of colors needed to color $G$ in a non-repetitive
way is called the {\em Thue number} of $G$.

Section \ref{se:solpolx} contains more examples of graph polynomials and graph parameters not definable in $\CMSOL$.

%-------------------------------------------------
%\input{outline}  % Outline
\subsection*{Outline of the paper}
\label{se:outline}

In Section \ref{se:reg}  we illustrate the use of connection matrices
in the case of regular languages. This serves as a ``warm-up'' exercise.
In Section \ref{se:fol} we introduce the general framework for connection matrices
of graph properties, i.e., boolean graph parameters, and of properties of general $\tau$-structures.
In Section \ref{se:limits} we spell out the advantages and limitations of the method
of connection matrices in proving non-definability.
In Section \ref{se:even} we give a proof of the Feferman-Vaught theorem for $\CFOL$. 
In Section \ref{se:nondef} we illustrate the use of connection matrices and the Finite Rank Theorem
for proving non-definability of properties.
In Section \ref{se:solpol} we recall the framework of definable graph polynomials and $\tau$-polynomials
and the corresponding definable numeric parameters. 
In Section \ref{se:cmsol-pol-fv} we prove 
a Feferman-Vaught-type theorem for $\CMSOL$-polynomials with respect to sum-like operations, which
is the main ingredient in Finite Rank Theorems for $\CMSOL$-polynomials, and in Section \ref{se:solpolx}
we show how to prove non-definability of many numeric graph invariants.
%In Section \ref{se:means} we give more examples of graph parameters not definable in $\CMSOL$
%which are based on various notions of averages.
%-------------------------------------------------
%\input{regular}  % regular languages
\section{Connection Matrices for Regular Languages}
\label{se:reg}

Our first motivating examples deal with regular languages and the operation of concatenation $\circ$.
By the well-known B\"uchi-Elgot-Trakhtenbrot Theorem, see \cite{bk:FMT,bk:Libkin2004},
a language $L \subseteq \Sigma^*$ is regular if and only if 
the class $S_L$ of ordered structures representing the words of $L$
is definable in $\MSOL$ (or equivalently in $\CMSOL$ or $\exists\MSOL$, the existential fragment of $\MSOL$).
The connection matrix $M(\circ, L)$ with columns and rows indexed by all words of $\Sigma^*$ is defined by
$ M(\circ, L)_{u,v}= 1$  iff $u \circ v \in L$. \footnote{Strictly speaking 
we should use the characteristic function of $L$ rather than $L$. We allow this slight abuse of notation to achieve simpler notation. }

The Myhill-Nerode Theorem, see \cite{bk:HU,bk:Harrison1978}, 
can be used to derive the following properties of $M(\circ, L)$:

\begin{prop}
\label{p:reg-1}
Let $L \subseteq \Sigma^*$ be a regular language.
\begin{renumerate}
\item
There is a finite partition $\{U_1, \ldots, U_k\}$ of $\Sigma^*$
such that the sub-matrices obtained from restricting $M(\circ, L)$ to\footnote{$M(\circ, L)^{[U_i, U_j]}$ denotes the submatrix of $M(\circ, L)$ with rows and columns corresponding to $U_i$ and $U_j$ respectively. }
$M(\circ, L)^{[U_i, U_j]}$ have constant entries.
\item
In particular, the infinite matrix $M(\circ, L)$ has finite rank over any field $\mathcal{F}$.
\item
$M(\circ, L)$ has an infinite sub-matrix of rank at most $1$.
\end{renumerate}
\end{prop}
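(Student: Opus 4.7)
The plan is to derive all three claims from a two-sided refinement of the Myhill--Nerode equivalence associated with $L$. Standard Myhill--Nerode yields that regularity of $L$ implies the right equivalence $u \sim_R u'$, defined by $uw \in L \Leftrightarrow u'w \in L$ for every $w \in \Sigma^*$, has finite index. Applying the same principle to the reversal of $L$ (also regular, since regular languages are closed under reversal) gives the left equivalence $v \sim_L v'$, defined by $wv \in L \Leftrightarrow wv' \in L$ for every $w$, also of finite index. I would then let $\equiv$ be the common refinement of $\sim_R$ and $\sim_L$; as the intersection of two finite-index equivalences it still has finite index, and I take $\{U_1, \ldots, U_k\}$ to be its equivalence classes.

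The central step is to verify the following ``bilinear'' property: if $u \equiv u'$ and $v \equiv v'$, then $u \circ v \in L \Leftrightarrow u' \circ v' \in L$. This follows by a two-step chain, using $u \sim_R u'$ with right extension $v$ to obtain $uv \in L \Leftrightarrow u'v \in L$, and then $v \sim_L v'$ with left prefix $u'$ to obtain $u'v \in L \Leftrightarrow u'v' \in L$. Consequently each block $M(\circ, L)^{[U_i, U_j]}$ has constant entries, establishing (i).

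Claim (ii) is then an immediate consequence: any two rows of $M(\circ, L)$ indexed by elements of the same $U_i$ agree pointwise on $\Sigma^*$, so there are at most $k$ distinct rows, and hence the rank over any field $\mathcal{F}$ is bounded by $k$. For (iii), by the pigeonhole principle some block $U_{i_0}$ must be infinite (since $\Sigma^*$ is infinite while $k$ is finite); the infinite submatrix $M(\circ, L)^{[U_{i_0}, \Sigma^*]}$ has all rows coinciding, and so has rank at most $1$.

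The only conceptual subtlety is that the classical one-sided Myhill--Nerode theorem guarantees only equality of rows within each right-equivalence class, not the blockwise constancy demanded by (i); the two-sided refinement $\equiv = \sim_R \cap \sim_L$ is essential, and verifying that the intersection of two finite-index equivalences is finite-index is the one place where care is needed. Once this observation is in place, the remaining argument is routine bookkeeping and needs no further combinatorial input.
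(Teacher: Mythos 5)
Your proof is correct and follows the route the paper itself indicates: the paper offers no detailed argument, merely citing the Myhill--Nerode theorem, and your two-sided refinement $\equiv\; =\; \sim_R \cap \sim_L$ is exactly the standard way to extract the blockwise constancy of (i) from that theorem, with (ii) and (iii) following as you describe. The one point you rightly flag --- that the one-sided equivalence only gives constancy of rows, and the finite-index left equivalence (via closure of regular languages under reversal) is needed for constancy on blocks --- is handled correctly.
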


\noindent Now we can also look  at counting functions and numeric parameters of words,
such as the length $\ell(w)$ of a word $w$ or the number of words $s_L(w)$ in a language $L$ which are (connected)
sub-words of a given word $w$. The corresponding connection matrices $M(\circ, \ell)$
and $M(\circ, s_L)$ defined by 
$ M(\circ, \ell)_{u,v}= \ell(u \circ v)$ and
$ M(\circ, s_L)_{u,v}= s_L(u \circ v)$ respectively do not satisfy (i) and (iii) above,
but still have finite rank.
On the other hand the function $m_L(w)$ which gives the maximal size of a  word in $L$  
which occurs as a connected sub-word in $w$ gives rise to connection matrix $M(\bar{\circ}, s_L)$
of infinite rank. Here $u \bar{\circ} v = u \circ a \circ v$ where $ a \not\in \Sigma$
and therefore $m_L(u \bar{\circ} v) = \max \{ m_L(u), m_L(v) \}$.

We can use these connection matrices to show that $L_1= \{0^n \circ 1^n : n \in \N \}$ is not regular, by noting
that the sub-matrix $M(\circ, L_1)$ with columns indexed by $0^n$ and rows indexed by $1^n$
has $0$ everywhere but in the diagonal, hence has infinite rank, contradicting (ii) of Proposition \ref{p:reg-1}.

The numeric parameters on words $\ell, s_L$ are $\MSOL$-definable: $\ell(w) = \sum_{u <_{in} w} 1$, 
where $u <_{in} w$ means that $u$ is a proper possibly empty initial segment of $w$.
Similarly, $s_L(w) = \sum_{u <_{sw} w} 1 $, where $u <_{sw} w$ denotes the relation $u$ is a connected sub-word of $w$.
We shall give a general definition of $\MSOL$-definable numeric parameter in Section \ref{se:solpol}.
But we state here already

\begin{prop}
\label{p:reg-2}
The connection matrices $M(\circ, f)$ and $M(\bar{\circ}, f)$ have finite rank, provided $f$ is $\MSOL$-definable.
\end{prop}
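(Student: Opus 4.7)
The plan is to derive this from a Feferman--Vaught--type theorem for $\MSOL$-definable numeric parameters, applied to the concatenation and marked-concatenation operations. A word $w \in \Sigma^*$ is naturally viewed as a finite ordered structure $\langle [|w|], <, (P_a)_{a \in \Sigma} \rangle$, where $P_a$ marks the positions labelled by $a$. The concatenation $u \circ v$ arises from $u$ and $v$ by a \emph{sum-like} operation: take the disjoint union, extend the order so that every position of $u$ precedes every position of $v$, and take unions of the unary predicates. The marked concatenation $u \bar{\circ} v = u \circ a \circ v$, with $a \notin \Sigma$, is likewise sum-like, differing only by the insertion of one distinguished intermediate position.

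First I would invoke, in its truth-value form the classical theorem of Shelah and in its refined numeric-parameter form the forthcoming result of Section~\ref{se:cmsol-pol-fv}, a Feferman--Vaught reduction for $\MSOL$ along sum-like operations. Writing $\tau_q(w)$ for the $\MSOL$-theory of $w$ up to quantifier rank $q$, the reduction yields, for every $\MSOL$-definable numeric parameter $f$, a rank $q = q(f)$ and a function $F$ such that
\[
f(u \circ v) \;=\; F\bigl(\tau_q(u),\, \tau_q(v)\bigr) \quad \text{for all } u, v \in \Sigma^*,
\]
and an analogous identity (with a possibly different reduction function) for $\bar{\circ}$.

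Next I would combine this with the classical finiteness of Hintikka types: up to logical equivalence there are only finitely many $\MSOL$-formulas of quantifier rank $\leq q$ over the signature of words, so only finitely many rank-$q$ theories $\tau_q(w)$ occur. If $u, u' \in \Sigma^*$ realise the same rank-$q$ theory, the displayed identity gives $f(u \circ v) = f(u' \circ v)$ for every $v$, so the rows of $M(\circ, f)$ indexed by $u$ and $u'$ coincide. Hence $M(\circ, f)$ has at most as many distinct rows as there are rank-$q$ theories, a finite number, and so its rank over any field is finite. The same argument applied to $\bar{\circ}$ bounds the rank of $M(\bar{\circ}, f)$.

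The principal obstacle is the Feferman--Vaught reduction for numeric parameters rather than for mere truth values. An $\MSOL$-definable parameter of the kind relevant here is a sum of products indexed by assignments to second-order variables satisfying a defining $\MSOL$-formula. One must show that such a sum decomposes according to the natural split of each assignment into its restrictions to the left and right halves of a sum-like operation, and that this decomposition is expressible as a function of the two rank-$q$ theories. The counting and product parts behave multiplicatively across the split once one groups assignments by the pair of local theories they induce on each side; the careful verification of this decomposition is precisely what is carried out in Section~\ref{se:cmsol-pol-fv} and can be applied as a black box here.
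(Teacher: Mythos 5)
There is a genuine error in the key step. You claim a reduction of the form $f(u\circ v)=F(\tau_q(u),\tau_q(v))$, i.e.\ that the \emph{value} of the numeric parameter on $u\circ v$ is determined by the rank-$q$ theories of the two factors, and you conclude that $M(\circ,f)$ has only finitely many distinct rows. This identity is false for numeric parameters: take $f=\ell$, the length. Then $\ell(u\circ v)=\ell(u)+\ell(v)$, and there are words $u,u'$ with $\tau_q(u)=\tau_q(u')$ but $\ell(u)\neq\ell(u')$ (e.g.\ $0^n$ and $0^{n+1}$ for $n$ large relative to $q$), so no such $F$ exists. Indeed the paper points this out explicitly: $M(\circ,\ell)$ does \emph{not} satisfy property (i) of Proposition \ref{p:reg-1} (the finite partition into constant blocks, equivalently finitely many distinct rows), yet it has finite rank. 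Truth-value determinacy by Hintikka types is the right statement for \emph{properties} (Theorem \ref{th:FRT-tame}), not for numeric parameters.

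The correct reduction, which is what Section \ref{se:cmsol-pol-fv} actually provides (Theorem \ref{th:fv-disjoint-union}), is bilinear: there are finitely many auxiliary $\MSOL$-definable parameters $p_1,\dots,p_t$ and a fixed $t\times t$ matrix $A$ such that
\[
f(u\circ v)\;=\;\bigl(p_1(u),\dots,p_t(u)\bigr)\,A\,\bigl(p_1(v),\dots,p_t(v)\bigr)^{tr}.
\]
Finite rank then follows not from row-repetition but from the factorization $M(\circ,f)=P\,A\,Q^{tr}$ with $P,Q$ of width $t$, so $\mathrm{rank}\,M(\circ,f)\le t$. Your closing paragraph gestures at exactly this (grouping assignments by the pair of local theories and multiplying across the split), so the repair is local: replace the displayed identity and the ``finitely many distinct rows'' argument by the bilinear form and the rank bound it gives. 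The remaining ingredients of your proposal --- viewing words as ordered structures and recognizing $\circ$ and $\bar{\circ}$ as sum-like operations so that the machinery of Section \ref{se:cmsol-pol-fv} applies --- match the paper's intended route (Proposition \ref{p:reg-2} is deferred to the general Finite Rank Theorem for $\cL$-polynomials, Theorem \ref{maintheorem}).
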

\begin{cor}
The function $m_L(w)$ is not $\MSOL$-definable.
\end{cor}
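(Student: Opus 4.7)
The plan is to apply the contrapositive of Proposition~\ref{p:reg-2}: it suffices to exhibit a concrete language $L$ for which the connection matrix $M(\bar{\circ}, m_L)$ has infinite rank. The key identity, recorded just before the proposition, is $m_L(u \bar{\circ} v) = \max\{m_L(u), m_L(v)\}$; this holds because the inserted separator $a \notin \Sigma$ prevents any word of $L \subseteq \Sigma^*$ from straddling the boundary in $u \circ a \circ v$.

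For concreteness I would take $\Sigma = \{b\}$ and $L = \{b^n : n \geq 0\}$, so that $m_L(b^n) = n$ for every $n$. Restricting $M(\bar{\circ}, m_L)$ to the rows and columns indexed by $b^0, b^1, b^2, \ldots$ produces the infinite matrix $A$ whose entries are $A_{n,m} = \max\{n, m\}$. The remaining task is to check that $A$ has infinite rank over, say, $\Q$.

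To carry out this rank computation I would fix $N$, consider the upper-left $(N+1) \times (N+1)$ block $A^{(N)}$ with rows and columns indexed by $0, 1, \ldots, N$, and look at consecutive row differences. A direct calculation shows that $R_{i+1} - R_i$ is the vector having exactly $i+1$ leading ones followed by zeros, for $0 \leq i \leq N-1$. These $N$ vectors form an echelon staircase and are therefore linearly independent. Adjoining $R_0 = (0, 1, 2, \ldots, N)$, whose last coordinate is nonzero while every difference has last coordinate zero, gives $N+1$ independent vectors, so $A^{(N)}$ has full rank $N+1$. Letting $N$ grow shows that $M(\bar{\circ}, m_L)$ has infinite rank, and the contrapositive of Proposition~\ref{p:reg-2} delivers the corollary.

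I do not expect any serious obstacle in this argument; the only nontrivial step is the staircase computation for the max-matrix, and even that is routine linear algebra.
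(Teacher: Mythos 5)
Your argument is correct and takes essentially the same route as the paper, which derives the corollary from the identity $m_L(u\,\bar{\circ}\,v)=\max\{m_L(u),m_L(v)\}$, the consequent infinite rank of $M(\bar{\circ},m_L)$, and Proposition~\ref{p:reg-2}. Your concrete choice of $L$ and the explicit staircase computation for the $\max$-matrix merely supply details the paper leaves implicit.
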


Connection matrices for concatenation of words are known
in Automata Theory as Hankel-matrices and were introduced in
\cite{ar:CarlylePaz1971}, see also \cite{bk:BerstelReutenauer} and \cite{bk:DrosteKuichVogler}.

%-------------------------------------------------
%\input{framework}  % framework
\section{Connection Matrices for Properties: The Framework}
\label{se:fol}\label{se:framework}

%\subsection{The framework}
Let $\tau$ be a purely relational
finite vocabulary which may include constant symbols
and may include a distinguished binary relation symbol for a linear order.
A {\em $\tau$-property} is a class of finite $\tau$-structures closed under $\tau$-isomorphisms.
If the context is clear we just speak of properties and isomorphisms.
We denote by $\SOL(\tau)$ the set of $\SOL$ formulas over $\tau$. 
%of quantifier rank $q$. 
A sentence is a formula without free variables.

Let $\mathcal{L}$ be a subset of $\SOL$.
$\cL$ is a {\em fragment of $\SOL$} if the following conditions hold:
\begin{renumerate}
\item
For every finite relational vocabulary $\tau$ the set of $\cL(\tau)$
formulas contains all the atomic $\tau$-formulas and is closed under the
boolean operations $\land, \lor, \neg$ and under renaming of relation and constant symbols.
%and under $\cL$-substitution of sub-formulas. 
\item
$\cL$ is
equipped with a notion of {\em quantifier rank} $qr:\L\to \mathbb{N}$ and we denote
by $\cL_q(\tau)$ the set of formulas of quantifier rank at most $q$.
The quantifier rank of atomic formulas is $0$. 
The quantifier rank of a Boolean combination of formulas $\alpha_1,\ldots,\alpha_t \in \cL(\tau)$ is
the maximum quantifier rank of $\alpha_1,\ldots,\alpha_t$. 
The quantifier rank is sub-additive under substitution of sub-formulas.
\item
The set of formulas of $\cL_q(\tau)$ with a fixed set of free variables
is, up to logical equivalence, finite.
\item
Furthermore, if $\phi(x)$ is a formula of $\cL_q(\tau)$
with $x$ a free variable of $\cL$, then
there is a formula $\psi$ logically equivalent to
$\exists x \phi(x)$ in $\cL_{q'}(\tau)$ with $q' \geq q+1$.
\end{renumerate}

\noindent Typical fragments are 
$\FOL$ and $\MSOL$.
$\CMSOL$ and
the fixed point logics
$\IFPL$ and $\FPL$ 
and their corresponding finite variable subsets
correspond to fragments of $\SOL$ if we replace the counting or fixed-point operators 
by their $\SOL$-definitions. 

For two $\tau$-structures 
$\mathfrak{A}$ and
$\mathfrak{B}$ 
we define the {\em equivalence relation of $\cL_q(\tau)$- non-distingui\-shability}, and
we write
$\mathfrak{A} \equiv_q^{\cL} \mathfrak{B}$,
if they satisfy the same  sentences from $\cL_q(\tau)$.

Let $s: \N \rightarrow \N$ be a function.
A binary operation $\Box$ between $\tau$-structures is called {\em ($s$,$\cL$)-smooth},
if for all $q \in \N$ whenever 
$\mathfrak{A}_1 \equiv_{q+s(q)}^{\cL} \mathfrak{B}_1$ and
$\mathfrak{A}_2 \equiv_{q+s(q)}^{\cL} \mathfrak{B}_2$ then 
$$
\mathfrak{A}_1 \Box \mathfrak{A}_2 \equiv_q^{\cL} 
\mathfrak{B}_1 \Box \mathfrak{B}_2.
$$
%If $s(q)$ is identically $0$ we omit it.
A binary operation $\Box$ between $\tau$-structures is {\em $\cL$-smooth}
if it is ($0$,$\cL$)-smooth.

For two $\tau$-structures $\fA$ and $\fB$, we denote by 
$\fA \sqcup \fB$ the {\em disjoint union}, which is a $\tau$-structure\footnote{
The standard definition of disjoint union e.g. in \cite{bk:FMT} is for relational structures, however
we also allow contant symbols.  
For two relational $\tau$-structures $\fA$ and $\fB$ and tuples $\bar{a}$ and $\bar{b}$ 
of $A$ respectively $B$ elements, we denote by 
$\left\langle \fA,\bar{a}\right\rangle \sqcup \left\langle \fB,\bar{b}\right\rangle$
the disjoint union $\fA \sqcup \fB$ extended with the tuples $\bar{a}$ and $\bar{b}$, i.e. the structure
 $\left\langle \fA \sqcup \fB, \bar{a},\bar{b} \right\rangle$.
}; 
$\fA \sqcup_{rich} \fB$ the {\em rich disjoint union} which is the
disjoint union augmented with two unary predicates for the universes
$A$ and $B$ respectively;
$\fA \times \fB$ the {\em Cartesian product}, which is a $\tau$-structure;
and for graphs $G,H$ by $G \bowtie H$ the {\em join} of two graphs obtained from
the disjoint union of $G$ and $H$ by adding all possible edges between vertices of $G$ and
vertices of $H$. %\footnote{\label{foot:modified-join}}

A {\em $\cL$-transduction} of $\tau$-structures into $\sigma$-structures is given
by defining a $\sigma$-structure inside a given $\tau$-structure.
The universe of the new structure may be a definable subset of an $m$-fold Cartesian product of
the old structure. If $m=1$ we speak of {\em scalar} and otherwise of {\em vectorized} transductions.
For every $k$-ary relation symbol $R \in \sigma$ we need a $\tau$-formula in $k \cdot m$ free individual 
variables to define it. We denote by $\Phi$ a sequence of $\tau$-formulas which defines a transduction.
We denote by $\Phi^{\star}$ the map sending $\tau$-structures into $\sigma$-structures induced by $\Phi$.
We denote by $\Phi^{\sharp}$ the map sending $\sigma$-formulas into $\tau$-formulas induced by $\Phi$.
For a $\sigma$-formula 
$\Phi^{\sharp}(\theta)$ is the {\em backward translation} of $\theta$ into a $\tau$-formula.
$\Phi$ is {\em quantifier-free} if all its formulas are from $\FOL_0(\tau)$.
We skip the details, and refer the reader to \cite{bk:Libkin2004,ar:MakowskyTARSKI}.

A fragment $\cL$ is {\em  closed under scalar transductions}, if for $\Phi$ such that all the formulas of $\Phi$
are in $\cL(\tau)$, $\Phi$ scalar,  and $\theta \in \cL(\sigma)$, the backward substitution
$\Phi^{\sharp}(\theta)$ is also in $\cL(\tau)$.
A fragment of $\SOL$ is called {\em tame} if it is 
closed under scalar transductions and containment of the form $\forall x (\varphi(x)\to \psi(x))$. 
$\FOL$,  $\MSOL$ and $\CMSOL$ are all tame fragments. So are their finite variable versions.

%For vectorized transductions this is true for 
$\FOL$ and $\SOL$  are also closed under vectorized transductions, but the monadic fragments
$\MSOL$ and $\CMSOL$ are not.

We shall frequently use the following:
\begin{prop}
\label{p:fund}
Let $\Phi$ define a $\cL$-transduction from $\tau$-structures to $\sigma$-structures 
where  each formula is of quantifier rank at most $q$.
Let $\theta$ be a  $\cL(\sigma)_r$-formula.
Then
$$ \Phi^{\star}(\mathfrak{A}) \models \theta \mbox{   iff   } \mathfrak{A} \models \Phi^{\sharp}(\theta)$$
and $ \Phi^{\sharp}(\theta)$ is in $\cL(\tau)_{q+r}$.
\end{prop}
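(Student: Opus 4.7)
The plan is to prove both the semantic equivalence $\Phi^{\star}(\mathfrak{A}) \models \theta \iff \mathfrak{A} \models \Phi^{\sharp}(\theta)$ and the quantifier rank bound by a simultaneous induction on the construction of $\theta$. The fragment axioms (closure under Boolean connectives and existential quantification, and closure of tame fragments under scalar transductions) guarantee that each translated formula lies in $\cL(\tau)$, so the only real work is bookkeeping of quantifier ranks.

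For the base case, let $\theta$ be atomic. If $\theta$ is $R(\bar{y})$ with $R \in \sigma$ of arity $k$, then by the definition of the transduction $\Phi$ supplies a defining $\tau$-formula $\phi_R \in \cL_q(\tau)$, and $\Phi^{\sharp}(R(\bar{y}))$ is, up to substitution of free variables, $\phi_R$. The equivalence is immediate from the definition of $\Phi^{\star}$, and the rank bound $q = q + 0 = q + r$ holds. Equality is trivial in the scalar case and reduces to componentwise equality of tuples in the vectorized case. For Boolean combinations, $\Phi^{\sharp}$ commutes with $\land, \lor, \neg$, and since quantifier rank propagates through the maximum, the inductive hypothesis transfers unchanged.

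The main case is existential quantification. Suppose $\theta = \exists y\, \psi(y)$ has rank $r \geq 1$, so $\psi(y) \in \cL_{r-1}(\sigma)$. One sets $\Phi^{\sharp}(\theta) := \exists y\bigl(\delta(y) \land \Phi^{\sharp}(\psi(y))\bigr)$, where $\delta(y) \in \cL_q(\tau)$ is the domain formula of $\Phi$ restricting $y$ to the universe of the transduced structure. Semantically, a witness for $\theta$ in $\Phi^{\star}(\mathfrak{A})$ is precisely an element that lies in the transduced universe (i.e., satisfies $\delta$) and satisfies $\psi$; by the inductive hypothesis this translates to an element of $\mathfrak{A}$ satisfying $\delta \land \Phi^{\sharp}(\psi)$. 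For the rank bound, induction gives $\Phi^{\sharp}(\psi) \in \cL_{q+r-1}(\tau)$; the conjunction with $\delta$ still has rank $\max(q, q+r-1) = q+r-1$, and prefixing one existential yields rank at most $q + r$.

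The one subtlety that deserves care is the quantifier rank accounting when $\Phi$ is vectorized: a single $\exists y$ in $\theta$ then expands to a block of $m$ quantifiers in $\Phi^{\sharp}(\theta)$, so the clean $q+r$ bound in its present form applies cleanly to the scalar case (or, for vectorized transductions, under a convention counting a block of quantifiers as one). Beyond this, the entire argument is routine bookkeeping within the closure properties assumed of the fragment $\cL$.
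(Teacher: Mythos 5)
Your proof is correct and is exactly the standard backward-translation induction that the paper itself omits (it states Proposition \ref{p:fund} without proof, deferring to \cite{bk:Libkin2004,ar:MakowskyTARSKI}, where the same relativize-and-induct argument appears). Your rank accounting in the existential step and your explicit caveat that the clean $q+r$ bound is literally correct only for scalar transductions (a block of $m$ quantifiers otherwise inflates the bound) are both right; the only thing worth adding is that for $\CMSOL$/$\CFOL$ the modular counting quantifiers $\dd_{m,i}x\,\psi$ are handled verbatim like $\exists$, by relativizing to the domain formula $\delta$.
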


\begin{prop}[Smooth operations]
\label{p:fol-1}
\label{th:FV-CMSOL}
\ 
\begin{renumerate}
\item
The rich disjoint union $\sqcup_{rich}$ of $\tau$-structures and therefore also the disjoint union
are $\FOL$-smooth,
$\MSOL$-smooth and $\CMSOL$-smooth. They are not $\SOL$-smooth.
\item 
The $k$-sum $\sqcup_k$ of $\tau$-structures is $\MSOL$-smooth and $\CMSOL$-smooth, but not $\SOL$-smooth, for $k\in\mathbb \N$. 
\item
The Cartesian product $\times$ of $\tau$-structures is $\FOL$-smooth, but not
$\MSOL$-smooth
\item The join $\bowtie$ of $\tau$-structures is $\FOL$-smooth and $\MSOL$- and $\CMSOL$-smooth in the vocabulary of graphs,
but is not $\MSOL$- and $\CMSOL$-smooth in the vocabulary of hypergraphs. 
\item
Let $\Phi$ be a quantifier-free  scalar transduction of $\tau$-structures into $\tau$-structures and let $\Box$ be
an $\cL$-smooth operation.
Then the operation $\Box_{\Phi}(\mathfrak{A}, \mathfrak{B}) = \Phi^{\star}(\mathfrak{A} \Box \mathfrak{B})$
is $\cL$-smooth.
If $\Phi$ has quantifier rank at most $k$, it is $(k, \cL)$-smooth.
\end{renumerate} 
\end{prop}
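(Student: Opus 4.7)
The plan is to derive (v) directly from Proposition~\ref{p:fund}, then establish the positive smoothness claims in (i)--(iv) by Feferman--Vaught-style inductions on quantifier rank, and dispatch the negative claims with concrete counter-examples. For (v): given $\theta \in \cL_q(\tau)$ and $\Phi$ of quantifier rank at most $k$, Proposition~\ref{p:fund} gives $\Phi^{\star}(\fA_1 \Box \fA_2) \models \theta$ iff $\fA_1 \Box \fA_2 \models \Phi^{\sharp}(\theta)$ with $\Phi^{\sharp}(\theta) \in \cL_{q+k}(\tau)$; applying $\cL$-smoothness of $\Box$ to the hypothesis $\fA_i \equiv_{q+k}^{\cL} \fB_i$ yields $\fA_1 \Box \fA_2 \equiv_{q+k}^{\cL} \fB_1 \Box \fB_2$, so the $\Phi^{\star}$-images agree on $\theta$. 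This simultaneously gives $\cL$-smoothness when $k=0$ and $(k,\cL)$-smoothness in general.

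For (i), I would prove by induction on $q$ that the $\equiv_q^{\cL}$-type of $\fA \sqcup_{rich} \fB$ is a computable function of the types of $\fA$ and $\fB$. The sort predicates $V_A, V_B$ partition the universe, so a first-order witness for $\exists x\, \phi(x)$ lies in exactly one sort, and $\phi$ is replaced by the disjunction of its two sort-relativisations, each referring only to one component and of strictly smaller rank; $\MSOL$ set quantifiers split as $X = (X \cap V_A) \cup (X \cap V_B)$; and for the $\CMSOL$ modular counter $\dd_{m,i} x\, \phi(x)$ the count on the union is the sum modulo $m$ of the two sort-relativised counts, so the union sentence becomes $\bigvee_{j} \bigl(\dd_{m,j}x\, \phi^{V_A}(x) \wedge \dd_{m,i-j}x\, \phi^{V_B}(x)\bigr)$, which closes the induction since $\cL_q(\tau)$ has only finitely many types up to equivalence. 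The plain disjoint union $\sqcup$ inherits smoothness as a reduct of $\sqcup_{rich}$; items (ii)--(iv) then reduce to (i) via (an adaptation of) (v): the $k$-sum identifies $k$ designated constants across $\sqcup_{rich}$ by a quantifier-free transduction; the graph-vocabulary join is the quantifier-free transduction $E'(x,y) \leftrightarrow E(x,y) \vee (V_A(x) \wedge V_B(y)) \vee (V_B(x) \wedge V_A(y))$ on $\sqcup_{rich}$; and $\FOL$-smoothness of $\times$ is the classical Mostowski--Feferman--Vaught product reduction of $\FOL$-sentences on $\fA \times \fB$ to boolean combinations of $\FOL$-sentences on the factors.

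The non-smoothness claims would be treated via explicit witnesses: $\SOL$ expresses equicardinality of the two sorts, so $\SOL$-equivalent pairs of structures of distinct sizes yield non-$\SOL$-equivalent unions and $k$-sums; non-$\MSOL$-smoothness of $\times$ follows by taking $\MSOL_q$-equivalent paths of slightly different lengths whose Cartesian products are grids of distinct aspect ratios distinguished by $\MSOL$; and the hypergraph join fails $\CMSOL$-smoothness because $\bowtie$ in that vocabulary creates $|V_A|\cdot|V_B|$ fresh edge elements, so modular counting of edges separates joins of $\CMSOL$-equivalent hypergraphs whose vertex-count residues differ appropriately. The main obstacle will be the $\CMSOL$ step of (i): the reduction of $\dd_{m,i}$-sentences on the union to boolean combinations of $\dd_{m,j}$-sentences on the components demands careful book-keeping of modular residues across the finite quotient of $\cL_q(\tau)$ together with a simultaneous induction keeping the ranks of backward-translated formulas under control; once that is in place, items (ii)--(iv) drop out uniformly from (v) applied to scalar quantifier-free transductions over $\sqcup_{rich}$.
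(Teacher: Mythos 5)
Your plan matches the paper's proof in all essentials: the paper likewise derives (v) directly from Proposition~\ref{p:fund}, obtains (i) via Ehrenfeucht--Fra\"iss\'e games for $\FOL$/$\MSOL$ and Courcelle's Feferman--Vaught theorem for $\CMSOL$ (your uniform reduction-sequence induction, with the $\bigvee_j(\dd_{m,j}\wedge\dd_{m,i-j})$ split for the modular quantifier, is exactly the Courcelle argument), deduces (ii) and (iv) from (i) and (v) via quantifier-free scalar transductions of $\sqcup_{rich}$, and leaves the negative claims to well-known facts and to the constructions of Section~\ref{se:nondef}.

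One sub-step of your counterexamples would fail as stated: the product $\times$ in this paper is the direct (categorical) product of structures --- the reduction sequence for $E(u,v)$ is $E_1(u,v)\wedge E_2(u,v)$ --- so the product of two paths is a disjoint union of diagonal paths, not a grid, and the ``distinct aspect ratios distinguished by $\MSOL$'' witness does not apply. A correct witness is obtained from Section~\ref{se:nondef} together with (v) and the Finite Rank Theorem: the operation $\times_{\Phi_F}$ is a quantifier-free (bounded-rank) transduction of $\times$, the $\MSOL$-definable property ``contains a cycle'' has a $\times_{\Phi_F}$-connection matrix of infinite rank on directed paths, and this contradicts $\MSOL$-smoothness of $\times$. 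A similar caution applies to your modular-counting witnesses (e.g.\ edge counts under $\bowtie$): since $|E(G\bowtie H)|$ involves the product $|V(G)|\cdot|V(H)|$, a residue class that is already refined by the input equivalence cannot separate the outputs, so it is safer to use the paper's witness that $E_i\bowtie E_j=K_{i,j}$ is Hamiltonian iff $i=j$. With these witnesses substituted, the argument is complete and coincides with the paper's.
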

\begin{proof}[Sketch of proof]
(i) is shown for $\FOL$ and $\MSOL$ using the usual pebble games.
For $\CMSOL$ one can use Courcelle's version of the Feferman-Vaught Theorem for
$\CMSOL$, cf. \cite{bk:courcelle,bk:CourcelleEngelfriet2012,ar:MakowskyTARSKI}.
(iii) is again shown using  the pebble game for $\FOL$.
(v) follows from Proposition \ref{p:fund}.
The negative statements are well-known, but also follow from the developments in the sequel. 
(ii) and (iv) follow from (i) and (v). 
\end{proof}

\begin{thm}[Feferman-Vaught Theorem for $\CFOL$]
\label{th:FV-CFOL}
\ 
\begin{renumerate}
\item
The rich disjoint union $\sqcup_{rich}$ of $\tau$-structures, and therefore the disjoint union, too,
is $\CFOL$-smooth.
\item
The Cartesian product $\times$ of $\tau$-structures is $\CFOL$-smooth.
\end{renumerate}
\end{thm}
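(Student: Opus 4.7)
The plan is to establish, in parallel for both operations, a Feferman--Vaught-style syntactic reduction: for every $\CFOL$ formula $\phi(\bar{x})$ of quantifier rank at most $q$ that is to be evaluated in $\fA_1 \Box \fA_2$, produce an equivalent Boolean combination of pairs of $\CFOL$ formulas on the two factors, each still of rank at most $q$. Once this reduction is in hand, $0$-smoothness is immediate: if $\fA_j \equiv^{\CFOL}_q \fB_j$ for $j=1,2$, then corresponding factor formulas agree on both structures, and so does $\phi$ on the two products. I would proceed by induction on $\phi$; the atomic, Boolean and standard quantifier cases mirror the classical $\FOL$ Feferman--Vaught proof, so the only genuinely new work lies in the case of the modular counting quantifier $\dd_{m,i}$.

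For part (i), the rich disjoint union, the two unary predicates that mark the universes $A_1, A_2$ let me split every element-valued variable into a projection on one side or the other. Atomic formulas are trivially false across components or become atomic on a single side, while the Boolean and classical quantifier cases commute with the split in the standard way. For the new case $\dd_{m,i} y\, \phi(y,\bar{x})$, after inductively reducing $\phi$ and splitting $\bar{x}$ into $\bar{x}^{(1)}, \bar{x}^{(2)}$, the number of witnesses $y$ is the sum of the witness counts on each side for the induced one-sided formulas. Modular additivity then yields
\[
\dd_{m,i} y\, \phi\ \equiv\ \bigvee_{r_1 + r_2 \equiv i \,(\bmod\, m)} \bigl(\dd_{m,r_1} y^{(1)} \phi^{(1)} \wedge \dd_{m,r_2} y^{(2)} \phi^{(2)}\bigr),
\]
and rank is visibly preserved. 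This part also follows from Theorem~\ref{th:FV-CMSOL}(i) since $\CFOL \subseteq \CMSOL$, but the direct argument is short.

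For part (ii), the Cartesian product, I would replace each variable $x$ by its two projections $x^{(1)}, x^{(2)}$. Atomic formulas decompose component-wise by the product interpretation of the relations and of equality, and $\exists x$ on pairs is $\exists x^{(1)} \exists x^{(2)}$. The critical new case is again $\dd_{m,i} y\, \phi$: inductively rewrite $\phi$ as $\bigvee_k \psi_k^{(1)}(y^{(1)}, \bar{x}^{(1)}) \wedge \psi_k^{(2)}(y^{(2)}, \bar{x}^{(2)})$, refine this to a pairwise disjoint disjunction without leaving the factorized form, and count: the number of witnesses $y = (y^{(1)}, y^{(2)})$ equals $\sum_k N_k^{(1)} \cdot N_k^{(2)}$ with $N_k^{(j)} = |\{y^{(j)} : \psi_k^{(j)}\}|$. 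Since residues modulo $m$ are preserved under both $+$ and $\cdot$, I can enumerate over tuples of residues $(r_k^{(1)}, r_k^{(2)})$ satisfying $\sum_k r_k^{(1)} r_k^{(2)} \equiv i \pmod{m}$ and express $\dd_{m,i} y\, \phi$ as the disjunction over such tuples of the conjunctions $\bigwedge_k \bigl(\dd_{m,r_k^{(1)}} y^{(1)} \psi_k^{(1)} \wedge \dd_{m,r_k^{(2)}} y^{(2)} \psi_k^{(2)}\bigr)$.

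The hard part will be in (ii), specifically the disjointification step that turns an arbitrary Boolean combination of pair-formulas into a pairwise disjoint one while staying within the factor-formula calculus and without inflating the quantifier rank. I would handle this by the standard disjunctive normal form refinement carried out on each factor separately and then recombined, which preserves rank because it uses only negation and conjunction. Once disjointness is available, the arithmetic over residues is routine, and the rank of each new factor formula exceeds that of its sub-formula by at most the single $\dd_{m, \cdot}$ quantifier just added, matching the new quantifier on the product side; this yields $s(q) = 0$ and confirms that $\times$ is genuinely $\CFOL$-smooth.
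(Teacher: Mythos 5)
Your proposal is correct and follows essentially the same route as the paper: an inductive Feferman--Vaught reduction-sequence argument in which the modular counting quantifier is handled by disjointifying the inductive Boolean combination into complete conjunctions of the factor formulas (the paper's $C_j = C_j^A \land C_j^B$ with every literal present), so that the witness count becomes $\sum_j N_j^A \cdot N_j^B$ for $\times$ (resp.\ an additive count for $\sqcup_{rich}$) and its residue is tracked by enumerating residue tuples modulo $m$, exactly as in your general-$m$ formula (the paper only writes out $\dd_{2,0}$ and $\dd_{3,i}$ explicitly). The one loose aside is your claim that (i) ``follows from'' the $\CMSOL$ result merely because $\CFOL \subseteq \CMSOL$ --- $\CFOL$-equivalence is coarser than $\CMSOL$-equivalence, so smoothness does not transfer by inclusion alone --- but your direct additive argument renders this immaterial.
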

\begin{proof}[Sketch of proof]
The proof does not use pebble games, but Feferman-Vaught-type reduction sequences.
(i) can be proven using the same reduction sequences which are used in
\cite{bk:courcelle,bk:CourcelleEngelfriet2012}.
We will prove (ii) in Section \ref{se:even}. 
\end{proof}
To the best of our knowledge, (ii) of Theorem \ref{th:FV-CFOL} has not been stated in the literature before.

\begin{rem}
Theorem \ref{th:FV-CFOL}(ii)  is proven using modifications of the reduction sequences from
\cite[Theorem 1.6]{ar:MakowskyTARSKI}.
Reduction sequences are tuples of formulas $\left\langle \psi_1^{A},\ldots,\psi_m^{A},\psi_1^B,\ldots,\psi_n^{B} \right\rangle$ 
obtained from a formula $\phi$ so that the truth-value of $\fA \Box \fB$ on $\phi$ is a Boolean combination of
the truth-values of $\fA$ on $\psi_1^{A},\ldots,\psi_m^{A}$ and $\fB$ on $\psi_1^{B},\ldots,\psi_m^{B}$.
We call this a Feferman-Vaught Theorem, because our proof actually computes the
reduction sequences explicitly. 
One might also try to prove the theorem 
using the pebble games defined in \cite{ar:Nurmonen00}, but at least for the case
of the Cartesian product, the proof would be rather complicated and less transparent.
\end{rem}

\begin{thm}[Finite Rank Theorem for tame $\cL$, \cite{ar:GodlinKotekMakowsky08,ar:Makowsky09}]
\label{th:FRT-tame}
\ \\
Let $\cL$ be a tame fragment of $\SOL$.
Let $\Box$ be a binary operation between $\tau$-structures which is $\cL$-smooth.
Let $\mathcal{P}$ be a $\tau$-property which is definable
by a $\cL$-formula $\psi$ and  $M(\Box, \psi)$ be the connection matrix defined by
$$
M(\Box, \psi)_{\mathfrak{A},\mathfrak{B}}= 1 \mbox{   iff   } \mathfrak{A} \Box \mathfrak{B} \models \psi
\mbox{   and   } 0 \mbox{   otherwise  }.
$$
Then
\begin{renumerate}
\item
There is a finite partition $\{U_1, \ldots, U_k\}$ of the (finite) $\tau$-structures
such that the sub-matrices obtained from restricting $M(\Box, \psi)$ to
$M(\Box, \psi)^{[U_i, U_j]}$ have constant entries.
\item
In particular, the infinite matrix $M(\Box, \psi)$ has finite rank over any field $\mathcal{F}$.
\item
$M(\Box, \psi)$ has an infinite sub-matrix of rank at most $1$.
\end{renumerate}
\end{thm}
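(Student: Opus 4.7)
The plan is to combine the $\cL$-smoothness of $\Box$ with the basic finiteness-up-to-equivalence property of the fragment $\cL$. Let $q$ be the quantifier rank of $\psi$. By clause (iii) of the definition of a fragment, the set of $\cL_q(\tau)$-sentences is finite up to logical equivalence, and so the non-distinguishability relation $\equiv_q^{\cL}$ has only finitely many equivalence classes on the collection of finite $\tau$-structures. Let $U_1,\ldots,U_k$ enumerate these classes; they form a finite partition of the index set of the rows and columns of $M(\Box,\psi)$.

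For claim (i), I would fix $i,j \in \{1,\ldots,k\}$ and any two pairs $(\mathfrak{A}_1,\mathfrak{A}_2), (\mathfrak{B}_1,\mathfrak{B}_2)$ with $\mathfrak{A}_1,\mathfrak{B}_1 \in U_i$ and $\mathfrak{A}_2,\mathfrak{B}_2 \in U_j$. Since $\Box$ is $\cL$-smooth (i.e.\ $(0,\cL)$-smooth), we get
\[
\mathfrak{A}_1 \Box \mathfrak{A}_2 \;\equiv_q^{\cL}\; \mathfrak{B}_1 \Box \mathfrak{B}_2,
\]
and hence $\mathfrak{A}_1\Box\mathfrak{A}_2 \models \psi$ iff $\mathfrak{B}_1\Box\mathfrak{B}_2 \models \psi$, because $\psi$ has quantifier rank at most $q$. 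Therefore every entry of the submatrix $M(\Box,\psi)^{[U_i,U_j]}$ is the same $\{0,1\}$-value, which is (i).

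Claim (ii) is then immediate: within each class $U_i$ all rows of $M(\Box,\psi)$ coincide (any two rows agree block-by-block because of (i)), so there are at most $k$ distinct rows, giving rank at most $k$ over any field $\mathcal{F}$. For (iii) I would apply the pigeonhole principle: the collection of finite $\tau$-structures (up to isomorphism) is infinite, while there are only finitely many classes $U_i$, so some $U_{i_0}$ is infinite; by (i), the square submatrix indexed by $U_{i_0} \times U_{i_0}$ is constant, hence of rank at most $1$.

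I do not expect a serious obstacle here: the whole argument is a straightforward unpacking of the definitions of \emph{fragment} (giving the finite partition) and \emph{$\cL$-smoothness} (giving block-constancy). The closest thing to a subtlety is making sure that $\psi$'s quantifier rank is the correct parameter to use in $\equiv_q^{\cL}$; this is justified by clauses (ii) and (iii) of the definition of a fragment, which guarantee that $\psi$ has some quantifier rank $q < \infty$ and that $\equiv_q^{\cL}$ indeed has finite index. Tameness of $\cL$ plays no role in this particular statement; it will become relevant only when one wants to transfer the theorem along transductions, e.g.\ for the generalized operations $\Box_{\Phi}$ from Proposition \ref{p:fol-1}(v).
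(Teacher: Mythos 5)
Your proposal is correct and follows essentially the same route as the paper's own (sketched) proof: the finite index of $\equiv_q^{\cL}$ coming from clause (iii) of the fragment definition gives the partition, $\cL$-smoothness gives block-constancy, and (ii) and (iii) follow by counting distinct rows and by pigeonhole. Your side remark that only the ``fragment'' clauses (and not tameness proper) are used here is also accurate; tameness enters only when the theorem is applied to operations obtained via transductions.
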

\begin{proof}[Sketch of proof]
(i) follows from the definition of a tame fragment and of 
smoothness and the fact that there are only finitely many formulas
(up to logical equivalence) in $\cL(\tau)_q$.
(ii) and (iii) follow from (i).
\end{proof}

%-------------------------------------------------
%\input{limits}  % limits of the method
\section{Merits and Limitations of Connection Matrices}
\label{se:limits}

\subsection*{Merits}
The advantages 
of the Finite Rank Theorem for tame $\cL$ 
in proving that a property is not definable in $\cL$  
are the following:
\begin{renumerate}
\item
Once the $\cL$-smoothness of a binary operation has been established,
proofs of non-definability become surprisingly simple and transparent.
One of the most striking examples is the fact that asymmetric (rigid) graphs are not
definable in $\CMSOL$, cf. Corollary \ref{co:ms2}.
\item
Many properties can be proven to be non-definable using the same or similar
sub-matrices, i.e., matrices with the same row and column indices. 
This is well illustrated in the examples of Section \ref{se:nondef}.
\end{renumerate}

\subsection*{Limitations}
The classical method of proving non-definability in $\FOL$ using pebble games
is complete in the sense that a property is $\FOL(\tau)_q$-definable iff the class of its models is closed under
game equivalence of length $q$.
Using pebble games one proves easily that the class of structures without any relations of even cardinality, $\mathrm{EVEN}$,
is not $\FOL$-definable. This cannot be proven using connection matrices in the following sense:

\begin{prop}
Let $\Phi$ be a quantifier-free transduction between $\tau$-structures and let 
$\Box_{\Phi}$ be the binary operation  on $\tau$-structures:
$$\Box_{\Phi}(\mathfrak{A}, \mathfrak{B}) = \Phi^{\star}(\mathfrak{A} \sqcup_{rich} \mathfrak{B}) $$
Then the connection matrix $M(\Box_{\Phi}, \mathrm{EVEN})$ satisfies the properties (i)-(iii)
of Theorem \ref{th:FRT-tame}.
\end{prop}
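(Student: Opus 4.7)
The plan rests on the observation that, because $\mathrm{EVEN}$ depends only on the parity of the size of the resulting structure and $\Phi$ is quantifier-free, the cardinality of $\Phi^{\star}(\fA \sqcup_{rich} \fB)$ splits cleanly into separate contributions from $\fA$ and from $\fB$.

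First I would write the universe formula of the (scalar) transduction $\Phi$ as a quantifier-free formula $\delta(x)$ over the vocabulary of the rich disjoint union, i.e., $\tau$ enriched with the two unary predicates $P_A,P_B$ marking the two sides. Since $P_A,P_B$ partition the universe, one can rewrite
$$ \delta(x) \;\equiv\; (\delta_A(x) \wedge P_A(x)) \;\vee\; (\delta_B(x) \wedge P_B(x)) $$
where $\delta_A,\delta_B$ are quantifier-free $\tau$-formulas obtained by substituting the truth values of $P_A,P_B$. Because $\sqcup_{rich}$ introduces no atomic $\tau$-relation between an element of $A$ and an element of $B$, for every $a \in A$ we have $\fA \sqcup_{rich} \fB \models \delta_A(a)$ iff $\fA \models \delta_A(a)$, and symmetrically for $\delta_B$ on the $B$-side.

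Setting $N_A(\Cc) := |\{c \in C : \Cc \models \delta_A(c)\}|$ and $N_B(\Cc) := |\{c \in C : \Cc \models \delta_B(c)\}|$ for any finite $\tau$-structure $\Cc$, this yields
$$ |\Phi^{\star}(\fA \sqcup_{rich} \fB)| \;=\; N_A(\fA) + N_B(\fB), $$
so $M(\Box_\Phi, \mathrm{EVEN})_{\fA,\fB} = 1$ iff $N_A(\fA) \equiv N_B(\fB) \pmod 2$. I would then partition the finite $\tau$-structures into at most four classes $U_{ij}$ ($i,j \in \{0,1\}$) according to the pair $(N_A(\cdot) \bmod 2,\; N_B(\cdot) \bmod 2)$. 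On every block $M^{[U_{ij}, U_{i'j'}]}$ the entry is constant (it depends only on $i$ and $j'$), which proves (i); item (ii) follows at once since the matrix then has rank at most $4$; and (iii) follows because the set of finite $\tau$-structures up to isomorphism is infinite, so at least one class $U_{ij}$ is infinite and $M^{[U_{ij}, U_{ij}]}$ is an infinite constant sub-matrix of rank at most $1$.

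The main subtlety to watch out for is the case of vectorized transductions of dimension $m > 1$: there the universe of $\Phi^{\star}(\fA \sqcup_{rich} \fB)$ is a quantifier-free-definable subset of $(A \sqcup B)^m$ whose cardinality, obtained by summing falling-factorial contributions over the finitely many equality-and-side types of $m$-tuples, becomes an integer-valued polynomial $p(|A|,|B|)$. One then needs the elementary fact that the parity of such a polynomial is periodic in each argument (\eg via Lucas's theorem on binomial coefficients modulo $2$) in order to refine the partition by residue classes of $|A|$ and $|B|$ modulo a suitable power of $2$, yielding again a finite partition into constant-entry blocks.
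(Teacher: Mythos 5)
Your argument is correct; note that the paper states this proposition without proof, so there is no ``official'' argument to compare against, and yours is the natural one. The key point — that $\sqcup_{rich}$ creates no atomic facts mixing the two sides, so a quantifier-free universe formula evaluated at an $A$-element reduces to a quantifier-free $\tau$-formula evaluated in $\fA$ alone — is exactly what makes $|\Phi^{\star}(\fA \sqcup_{rich} \fB)| = N_A(\fA)+N_B(\fB)$, and the four-class partition by the parities of $N_A$ and $N_B$ then gives constant blocks, rank at most $4$, and an infinite constant sub-matrix, i.e.\ (i)--(iii) of Theorem \ref{th:FRT-tame}. One small simplification for your vectorized case: the cardinality of the new universe is a finite sum $\sum_k f_k(\fA)\,g_k(\fB)$, where the index $k$ ranges over the side-patterns, equality-patterns and quantifier-free $\tau$-types of $m$-tuples and $f_k,g_k$ count their realizations in $\fA$ and $\fB$; its parity is already determined by the finitely many residues $f_k(\fA) \bmod 2$ and $g_k(\fB) \bmod 2$, so a finite constant-block partition follows directly, with no need for Lucas's theorem. (The reduction to an integer polynomial in $|A|,|B|$ is literally valid only when $\tau$ has no relation symbols, which is admittedly the relevant reading of $\mathrm{EVEN}$ as the class of pure sets of even cardinality; if one instead reads $\mathrm{EVEN}$ over a nonempty $\tau$ as ``all relations empty and the universe even,'' the first condition is an $\FOL$-property of $\fA \sqcup_{rich} \fB$, so $\FOL$-smoothness of $\sqcup_{rich}$ together with Proposition \ref{p:fund} supplies a further finite refinement of your partition and the conclusion stands.)
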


%-------------------------------------------------
%\input{even} %FV for CFOL, proof
\section{Proof of the smoothness of Cartesian product in \texorpdfstring{$\CFOL$}{CFOL}} %Theorem \ref{th:FV-CFOL}}
\label{se:even}
Recall that we denote by  $\dd_{m,i}$ the
modular counting quantifiers $\dd_{m,i}x \phi(x)$ which says that the number of elements satisfying $\phi$
equals $i$ modulo ${m}$. 

The proof of Theorem \ref{th:FV-CFOL}(ii) follows exactly the proof of Theorem 1.6 in \cite{ar:MakowskyTARSKI}, in which 
an analogous statement was proven for the ordered sum of structures. 
We spell out the changes needed in the proof from \cite{ar:MakowskyTARSKI} for the ordered product $\times$.

\begin{proof}[Proof of Theorem \ref{th:FV-CFOL}]
Given $G_1=\angl{V_1,E_2,<_1}$ and $G_2=\angl{V_2,E_2,<_2}$, their 
ordered product is given by $\angl{V_1,E_2} \times \angl{V_2,E_2}$ together with the lexicographic order $<$ on $V_1\times V_2$
induced by $<_1$ and $<_2$. 

The proof proceeds by induction and computes the reduction sequences of $\CFOL$-formulas for $\times$. 
Reduction sequences are tuples of formulas $\left\langle \psi_1^{1},\ldots,\psi_m^{1},\psi_1^2,\ldots,\psi_n^{2} \right\rangle$ 
obtained from a formula $\phi$ so that the truth-value of $G_1 \times G_2$ on $\phi$ is a Boolean combination of
the truth-values $b_1^1,\ldots,b_m^1$
of $G_1$ on $\psi_1^{1},\ldots,\psi_m^{1}$ and the truth-values $b_1^2,\ldots,b_m^2$
of $G_2$ on $\psi_1^{2},\ldots,\psi_m^{2}$.
The existence of reduction sequences directly implies smoothness. 

The reduction sequences and Boolean functions of the atomic relations are given as follows.
The difficult cases are those of the quantifiers. We discuss $\dd_{2,0}$ in detail. For all other 
cases we just state the reduction sequences and the corresponding Boolean formulas. 
See the proof of Theorem 1.6 in \cite{ar:MakowskyTARSKI} for a detailed discussion of $\exists x \phi$. 
\begin{description}
 \item[For $E(u,v)$]~\\
      Reduction sequence: $\angl{E_1(u,v),E_2(u,v)}$\\
      Boolean function: $b_1^1\land b_1^2$. 
 \item[For $u\approx v$]~\\
      Reduction sequence: $\angl{u\approx_1 v,u \approx_2 v}$\\
      Boolean function: $b_1^1\land b_1^2$. 
 \item[For $u<v$]~\\
      Reduction sequence: $\angl{u<_1 v, u\approx_1 v, u<_2 v, u\approx_2 v }$\\
      Boolean function: $b_1^1 \lor \left(b_2^1 \land b_1^2\right)$. 
\end{description}

Let $\Phi = \left\langle \phi_1^1,\ldots,\phi_m^1,\phi_1^2,\ldots,\phi_m^2\right\rangle$ 
and $\Psi = \left\langle \psi_1^1,\ldots,\psi_m^1,\psi_1^2,\ldots,\psi_m^2\right\rangle$
be reduction sequences for $\phi$ and $\psi$ respectively, and let $B_\phi(\bar{b})$ and 
$B_\psi(\bar{b'})$ be the corresponding Boolean functions with disjoint variables. . 

\begin{description}
\item[For $(\phi\land\psi)$]~\\
      Reduction sequences: $\angl{\Phi,\Psi}$\\
      Boolean function: $B_\phi(\bar{b}) \land B_\psi(\bar{b'})$. 
\item[For $\neg \phi$]~\\
      Reduction sequences: $\Phi$\\
      Boolean function: $\neg B_\phi(\bar{b})$. 
\end{description}

Now we turn to $\dd_{2,0}x \phi$. 
We look at $B_\phi(\bar{b})$ in disjunctive normal form: 
\[B_1 = \bigvee_{j\in J} C_j\]
with
\[
 C_j = C_j^A \land C_j^B
\]
and
\[
 C_j^A = 
\left(
\bigwedge_{i\in J(j,A,pos)} b^A_i
\bigwedge_{i\in J(j,A,neg)} \neg b^A_i
\right)
\]
and 
\[
 C_j^B = \left( 
\bigwedge_{i\in J(j,B,pos)} b^B_i
\bigwedge_{i\in J(j,B,neg)} \neg b^B_i
\right)\,.
\]
%(this is the same as for $\exists$ with added notation for $C_j^A$ and $C_j^B$). 

$B_1$ has $2m$ Boolean variables, $b_1^A,\ldots,b_m^A,b_1^B,\ldots,b_m^B$. 
We assume without loss of gene\-rality that every $C_j$ contains all of the variables. 
In other words, 
$\{1,\ldots,m\}\backslash J(j,A,pos)$ $=J(j,A,neg)$ and $\{1,\ldots,m\}\backslash J(j,B,pos)=J(j,B,neg)$
for every $j\in J$. 

Now let:
\[
\alpha^A_j = 
\left( 
\bigwedge_{i\in J(j,A,pos)} \phi^A_i(x)
\bigwedge_{i\in J(j,A,neg)} \neg \phi^A_i(x)
\right)
\]
%($\alpha^A_j$ is the same as $\theta^A_j$ except without $\exists x$)\\
and similarly
\[
\alpha^B_j = 
\left( 
\bigwedge_{i\in J(j,B,pos)} \phi^B_i(x)
\bigwedge_{i\in J(j,B,neg)} \neg \phi^B_i(x)
\right)
\]

Consider the formula $\dd_{2,0}\,x \phi$. $\fA \times \fB\models \dd_{2,0}\,x \phi$ iff the number of pairs $(a,b)\in A\times B$
such that $\left\langle\fA\times\fB, (a,b)\right\rangle \models \phi$ is even.
$\left\langle\fA\times\fB, (a,b)\right\rangle \models \phi$ iff $B_1$ holds for $\left\langle \fA,a\right\rangle$
and $\left\langle \fB,b\right\rangle$. 
Note that $\left\langle \fA,a\right\rangle$ satisfies at most one of the $\alpha^A_j$
and similarly for $\left\langle \fB,b\right\rangle$.

The number of pairs $(a,b)$ for which 
$B_1$ holds 
is even 
iff
the number of $C_j$ such that $C_j$ holds for an odd number of pairs $(a,b)$ is even.
This holds iff
the number of $C_j$ such that $C_j^A$ holds for an odd number of $a\in A$ and
$C_j^B$ holds for an odd number of $b\in B$ , is even.

Let 
\[
\beta^A_j = \dd_{2,0}\,x \alpha^A_j 
\mbox{ \ \ \ and \ \ \ }
\beta^B_j = \dd_{2,0}\,x \alpha^B_j 
\]
and let 
\[
P=\{ T\subseteq J \mid |T|\mbox { is even }\} \,.
\]
Finally we put 
\[
 B_{\dd_{2,0}}(\bar{c}) = \bigvee_{T\in P} \left(\bigwedge_{j\in T} (\neg c_j^A \land \neg c_j^B) \bigwedge_{j\not\in T} (c_j^A \lor  c_j^B)\right)
\]
where $c_j^A=1$ iff $\fA \models \beta_j^A$ and 
$c_j^B=1$ iff $\fB \models \beta_j^B$.
 
So, we have:
\begin{description}
 \item[For $\dd_{2,0} x \phi$]~\\
      Reduction sequence: $\left\langle \beta_1^A,\ldots,\beta_{m(J)}^A,\beta_1^B,\ldots,\beta_{m(J)}^B \right\rangle$\\
      Boolean function: $B_{\dd_{2,0}}(\bar{c})$.
 \item[For $\exists x \phi$]~\\
      Reduction sequence: $\angl{\theta^A_1,\ldots,\theta^{A}_{m(J)},\theta^B_1,\ldots,\theta^{B}_{m(J)}}$\\
      Boolean function: $\displaystyle{B_{\exists}(\bar{c}) = \bigvee_{j\in J} (c_j^A \land c_j^B)}$.

\end{description}
\end{proof}

A similar proof covers all quantifiers $\dd_{a,b}$. E.g. consider the quantifiers $\dd_{3,0},\dd_{3,1},\dd_{3,2}$. 
Here we set, for each $i\in\{0,1,2\}$:
\begin{eqnarray*}
 B_{\dd_{3,i}}(\bar{c}) = &\displaystyle{\bigvee_{T_{1,1},T_{1,2},T_{2,1},T_{2,2}}} \Bigg(
& \bigwedge_{j\in T_{1,1}} (c_{j,1}^A \land c_{j,1}^B) \land \bigwedge_{j\in T_{1,2}}(c_{j,1}^A \land c_{j,2}^B)  \land\\
& & \bigwedge_{j\in T_{2,1}} (c_{j,2}^A \land c_{j,1}^B) \land \bigwedge_{j\in T_{2,2}} (c_{j,2}^A \land c_{j,2}^B) \land  \\
& & \bigwedge_{j\in J\backslash(T_{1,1}\cup T_{1,2}\cup T_{2,1}\cup T_{2,2})} (c_{j,0}^A \lor c_{j,0}^B) \ \ \  \ \ \Bigg)
\end{eqnarray*}
where the outer $\bigvee$ is over tuples $(T_{1,1},T_{1,2},T_{2,1},T_{2,2})$ of disjoint subsets of $J$, which 
additionally satisfy that 
\[
 |T_{1,1}|+|T_{2,2}|+2|T_{1,2}|+2|T_{2,1}| \equiv i \mod{3}\,.
\]
For every $i\in\{0,1,2\}$, $c_{j,i}^A = 1$ iff $\fA \models \dd_{3,i}x \alpha_j^A$ 
and $c_{j,i}^B = 1$ iff $\fB \models \dd_{3,i}x \alpha_j^A$.  
%Let $\beta_{j,i}^A = \dd_{3,i} x \alpha_{j}^A$ and 
%$\beta_{j,i}^B = \dd_{3,i} x \alpha_{j}^B$. 
The reduction sequence
 of $\dd_{3,i}x \phi$ is
\footnote{If we were interested in making the reduction sequence shorter, we could have omitted $\dd_{3,2}x \alpha_j^A$ and $\dd_{3,2}x \alpha_j^B$ 
and express them using $\dd_{3,0}x \alpha_j^A$, $\dd_{3,1}x \alpha_j^A$, $\dd_{3,0}x \alpha_j^B$ and $\dd_{3,1}x \alpha_j^B$,
in a similar way to our treatment of $\dd_{2,0}$ and $\dd_{2,1}$ in the proof above. 
}
\begin{eqnarray*}
 & & \left\langle \dd_{3,0}x \alpha_1^A, \dd_{3,1}x \alpha_1^A,\dd_{3,2} x\alpha_1^A,\ldots,\dd_{3,0}x 
\alpha_{m(J)}^A, \dd_{3,1}x \alpha_{m(J)}^A,\dd_{3,2} x\alpha_{m(J)}^A,\right. \\
 & &  \left. 
\ \,\dd_{3,0}x \alpha_1^B, \dd_{3,1}x \alpha_1^B,\dd_{3,2} x\alpha_1^B,\ldots, \dd_{3,0}x \alpha_{m(J)}^B, 
\dd_{3,1}x \alpha_{m(J)}^B, \dd_{3,2} x\alpha_{m(J)}^B
\right\rangle\,.
\end{eqnarray*}

%-------------------------------------------------
%\input{properties-new}  % properties
\section{Proving Non-definability of Properties}
\label{se:nondef}

\subsection*{Non-definability on \texorpdfstring{$\CFOL$}{CFOL}}
We will prove non-definability in $\CFOL$ using Theorem \ref{th:FV-CFOL} for
Cartesian products combined with $\FOL$ transductions. It is useful to 
consider a slight generalization of the Cartesian product as follows. We 
add two constant symbols $start$ and $end$ to our graphs. In $G^1 \times G^2$
the symbol $start$ is interpreted as the pair of vertices $(v^1_{start},v^2_{start})$ from $G^1$ and $G^2$
respectively such that $v^i_{start}$ is the interpretation of $start_i$ (i.e. $start$ in $G^i$) for $i=1,2$. 

The transduction
$ \Phi_{sym}(x,y)=E_{D}(x,y)\lor E_{D}(y,x)$
transforms
a digraph $D=(V_{D},E_{D})$
into an
undirected graph whose edge relation is the symmetric
closure of the edge relation of the digraph.

The following transduction $\Phi_{F}'$ transforms
the Cartesian product of two directed graphs
$G^i=(V_{1},E_{1},v^i_{start},v^i_{end})$
with the two constants $start_{i}$ and $end_{i}$,
$i=1,2$ into a certain digraph. 
It is convenient to describe $\Phi_{F}'$ as a transduction of the two input graphs $G^1$ and $G^2$: 
\begin{eqnarray*}
\Phi_{F}\left(\left(v_{1},v_{2}\right),\left(u_{1},u_{2}\right)\right) 
& = &
\left(E_{1}(v_{1},u_{1})\land E_{2}(v_{2},u_{2})\right)\lor\\
&  &
 \left(\left(v_{1},v_{2}\right),\left(u_{1},u_{2}\right)\right)
=
\left(\left(start_{1},start_{2}\right),\left(end_{1},end_{2}\right)\right)
\end{eqnarray*}

Consider the transduction $\Phi_F$ obtained from $\Phi_F'$ by applying $\Phi_{sym}$
when the input graphs are directed paths $P^i_{n_i}$ of length $n_i$. 
The input graphs look like this:

\includegraphics{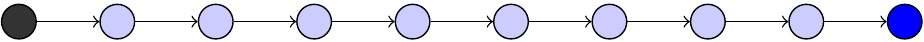}

The result of the application of the transduction is given in Figure \ref{fg:Forests}.
\begin{figure}[ht]
\begin{center}
\begin{tabular}{ccc}
\includegraphics[scale=0.8]{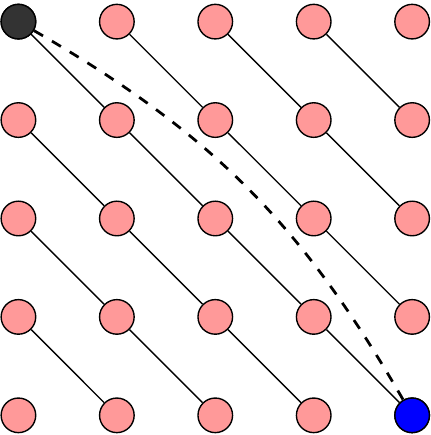} 
&  & \includegraphics[scale=0.8]{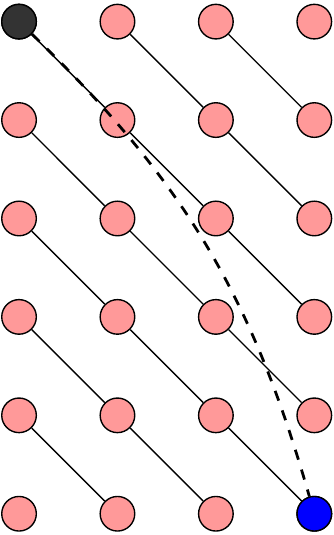}\tabularnewline
%&  & \\ \tabularnewline
$n_{1}=n_{2}$ 
& ~~~~ &
$n_{1}\not=n_{2}$%\tabularnewline
\end{tabular}
 \caption{\label{fg:Forests} The result of applying $\Phi_F$ on the two directed paths of length $n_1$ and $n_2$.  
There is a cycle iff the two directed paths are of the same length. The black vertex corresponds to the two initial vertices
of the two directed paths. 
}
\end{center}
\end{figure}
The result of the transduction has a cycle iff $n_{1}=n_{2}$. 
The length of this cycle is $n_1$.
Hence,
the connection sub-matrix with rows and columns labeled by
directed paths of odd (even) length has ones on the main diagonal and zeros everywhere
else, so it has infinite rank.
Thus we have shown:

\begin{thm}
The graphs without cycles of odd (even) length are not $\CFOL$-definable even in the presence of a
linear order. 
\end{thm}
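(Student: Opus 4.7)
The plan is to derive a contradiction with the Finite Rank Theorem (Theorem \ref{th:FRT-tame}), exploiting the $\CFOL$-smoothness of the ordered Cartesian product (Theorem \ref{th:FV-CFOL}(ii)) together with the transduction $\Phi_F$ already constructed above. I set up a smooth binary operation adapted to the problem by putting
\[
\Box_F(G^1, G^2) \;=\; \Phi_F^{\star}(G^1 \times G^2),
\]
viewed as an operation on graphs with distinguished constants $start$, $end$ and a linear order (which $\times$ pushes forward to the lexicographic order on $V_1 \times V_2$). Regarded as a transduction of the product structure with universe $V_1\times V_2$, $\Phi_F$ is a quantifier-free scalar transduction: both $\Phi_F'$ and the symmetrisation $\Phi_{sym}$ are quantifier-free. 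Hence Proposition \ref{p:fol-1}(v), applied to the $\CFOL$-smooth operation $\times$, yields that $\Box_F$ is $\CFOL$-smooth.

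Suppose, for contradiction, that the class $\mathcal{P}_{\mathrm{odd}}$ of graphs containing no cycle of odd length is $\CFOL$-definable by some sentence $\psi$ on linearly ordered graphs. Since $\CFOL$ is a tame fragment of $\SOL$, Theorem \ref{th:FRT-tame} guarantees that the connection matrix $M(\Box_F, \psi)$ has finite rank over every field. Consider the sub-matrix indexed by the directed paths $P_n$ of odd length $n$, with $start, end$ interpreted as the extremal vertices of $P_n$. By the discussion accompanying Figure \ref{fg:Forests}, the graph $\Box_F(P_{n_1}, P_{n_2})$ contains a cycle whose length has the same parity as $n_1$ precisely when $n_1 = n_2$, and is otherwise acyclic. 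Therefore the diagonal entries of this sub-matrix equal $0$ (since $\psi$ fails on any graph possessing an odd cycle), while all off-diagonal entries equal $1$; the sub-matrix is $J - I$, which has infinite rank over every field, contradicting Theorem \ref{th:FRT-tame}(ii). The even-cycle case is handled verbatim by restricting to directed paths of even length.

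The only conceptual subtlety is verifying that $\Box_F$ is $\CFOL$-smooth — that is, that composing a $\CFOL$-smooth operation with a quantifier-free transduction preserves smoothness. This is precisely the content of Proposition \ref{p:fol-1}(v), so once Theorem \ref{th:FV-CFOL}(ii) is available, no genuine obstacle remains; the rest of the argument is a clean application of the connection-matrix machinery.
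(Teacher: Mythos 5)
Your proof is correct and follows essentially the same route as the paper: apply the quantifier-free transduction $\Phi_F$ to the ordered Cartesian product of two directed paths, observe that a cycle (of length $n_1$) appears exactly when $n_1=n_2$, restrict to paths of odd (resp.\ even) length, and conclude that the resulting connection sub-matrix has infinite rank, contradicting the Finite Rank Theorem via the $\CFOL$-smoothness of $\times$ and Proposition \ref{p:fol-1}(v). The only cosmetic difference is that you obtain the matrix $J-I$ for the property itself where the paper records $I$ for its complement; both have infinite rank, so nothing is affected.
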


\begin{cor}Not definable in $\CFOL$ with order are:
\begin{enumerate}
\item Forests, bipartite graphs, chordal graphs, perfect graphs
\item interval graphs (cycles are not interval graphs)
\item Block graphs (every biconnected component 
%(i.e, every block) 
is a clique)
\item Parity graphs (any two induced paths joining the same pair of vertices
have the same parity)
\end{enumerate}
\end{cor}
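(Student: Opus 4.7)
My plan is to reuse the transduction $\Phi_F$ from the proof of the preceding theorem and apply the Finite Rank Theorem (Theorem~\ref{th:FRT-tame}) along the $\CFOL$-smooth operation $\times$ (Theorem~\ref{th:FV-CFOL}(ii)). Recall that applied to the Cartesian product of two directed paths of lengths $n_1,n_2$, the transduction $\Phi_F$ produces a graph whose non-isolated part is a cycle of length $n_1$ when $n_1=n_2$, and a simple path when $n_1\neq n_2$. Since $\Phi_F$ is a quantifier-free scalar transduction and $\CFOL$ is tame, any assumed $\CFOL$-definition $\psi_{\mathcal{P}}$ of a property $\mathcal{P}$ pulls back via $\Phi_F^{\sharp}$ to a $\CFOL$-formula about the Cartesian product, so it suffices to exhibit an infinite-rank sub-matrix of $M(\times,\Phi_F^{\sharp}(\psi_{\mathcal{P}}))$ to rule out $\CFOL$-definability of $\mathcal{P}$.

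For each class $\mathcal{P}$ in the corollary, the off-diagonal entries correspond to disjoint unions of a simple path and isolated vertices---forests---so they lie in every class on the list. The work, then, is to choose an infinite set $S_{\mathcal{P}}\subseteq\mathbb{N}$ of admissible path-lengths so that $C_k$ (plus isolated vertices) lies outside $\mathcal{P}$ for every $k\in S_{\mathcal{P}}$. I would take $S=\mathbb{N}$ for forests; the odd integers for bipartite graphs; $\{k:k\geq 4\}$ for chordal, interval, and block graphs (a chordless $k$-cycle with $k\geq 4$ is non-chordal, hence non-interval, and is itself a biconnected non-clique block); and the odd integers $\geq 5$ for perfect and parity graphs (using the strong perfect graph theorem to handle odd holes, and observing that in $C_{2\ell+1}$ with $\ell\geq 2$ one finds two induced paths of different parities between the same pair of antipodal vertices). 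Restricting the connection matrix to directed paths whose lengths lie in $S_{\mathcal{P}}$ yields an infinite matrix of the form $J-I$, whose rank is infinite (otherwise $I=J-(J-I)$ would have finite rank), contradicting the Finite Rank Theorem.

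The only step I expect to require genuine care is verifying that the off-diagonal graphs really lie in each $\mathcal{P}$. This is immediate for bipartite, chordal, perfect, block, and forest classes, since every forest belongs to all of them. For interval graphs one should explicitly note that a disjoint union of a path and finitely many isolated vertices is an interval graph: realize the path by consecutive overlapping unit intervals and the isolated vertices by far-apart singleton intervals. For parity graphs, one should note that in a disjoint union of a simple path and isolated vertices the only pairs of joined vertices lie in the path, and within a path the induced path between any two vertices is unique, so the parity condition holds vacuously.
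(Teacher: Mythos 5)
Your proposal is correct and follows the paper's own route: the paper derives this corollary from the preceding theorem via exactly this $\Phi_F$/Cartesian-product construction and the Finite Rank Theorem, leaving the choice of admissible path-length sets implicit, which you have filled in correctly (indeed, for perfect graphs the odd hole $C_{2\ell+1}$, $\ell\geq 2$, is already imperfect since $\chi=3>2=\omega$, so the strong perfect graph theorem is not needed). The only immaterial imprecision is that the off-diagonal graphs are disjoint unions of several paths (the grid diagonals) rather than a single path plus isolated vertices, but these are still forests, so every membership argument you give goes through unchanged.
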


\noindent The transduction $\Phi_T$, obtained from $\Phi_T'$ below by additionally applying $\Phi_{sym}$, 
transforms the two directed paths into the structures in Figure \ref{fg:Trees}.
\begin{eqnarray*}
\Phi_{T}'\left(\left(v_{1},v_{2}\right),\left(u_{1},u_{2}\right)\right) 
& = & \left(E_{1}(v_{1},u_{1})\land E_{2}(v_{2},u_{2})\right)\lor\\
&  & \left(v_{1}=u_{1}=start_{1}\land E(v_{2},u_{2})\right)\lor\\
&  & \left(v_{1}=u_{1}=end_{1}\land E(v_{2},u_{2})\right),
\end{eqnarray*}

\begin{figure}[ht]
\begin{center}
\begin{tabular}{ccc}
\includegraphics[scale=0.8]{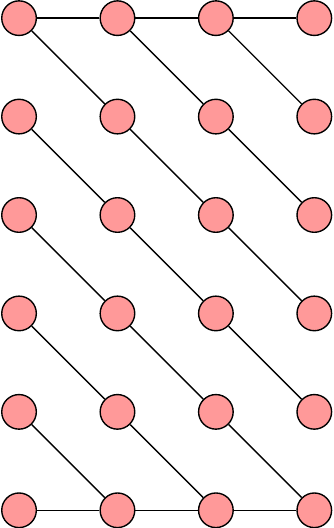}~~ 
& ~~\includegraphics[scale=0.8]{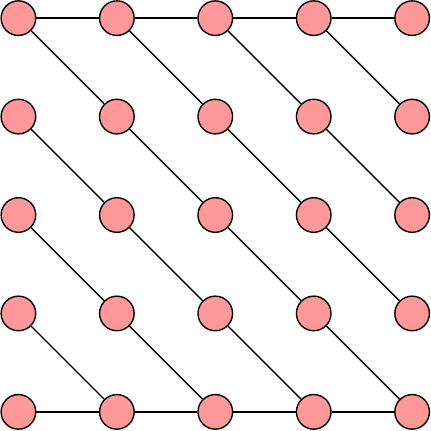}~~ &
 ~~\includegraphics[scale=0.8]{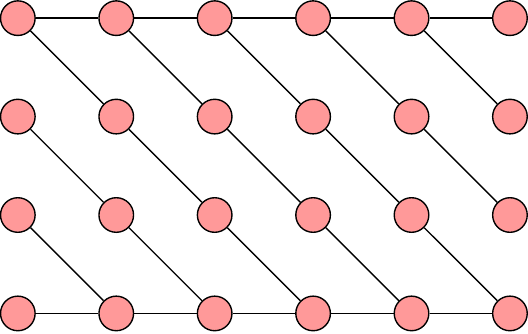}\tabularnewline
&  & \tabularnewline
$n_{1}>n_{2}$ & 
$n_{1}=n_{2}$ & 
$n_{1}<n_{2}$\tabularnewline
\end{tabular} \\
\caption{\label{fg:Trees} The result of applying $\Phi_T$ on two directed paths. 
We get a tree iff the two directed paths are of equal length. }
\end{center}
\end{figure}

So, the result of the transduction is a tree iff $n_{1}=n_{2}$. It
is connected iff $n_{1}\leq n_{2}$. Hence, both the connection matrices
with directed paths as row and column labels
of the property of being a tree and of connectivity have infinite
rank.

\begin{thm}
The properties of being a tree or a connected graph are not
$\CFOL$-definable even in the presence of linear order.
\end{thm}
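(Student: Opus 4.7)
The plan is to recycle the transduction $\Phi_T$ constructed just above, together with the $\CFOL$-smoothness of Cartesian product (Theorem \ref{th:FV-CFOL}(ii)) and the Finite Rank Theorem for tame fragments (Theorem \ref{th:FRT-tame}), in exactly the same style as the preceding theorem about odd/even cycles. Since $\CFOL$ is a tame fragment of $\SOL$, both tools apply directly.

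First I would set up the composite operation $\Box_{\Phi_T}(G^1,G^2) = \Phi_T^{\star}(G^1 \times G^2)$ on graphs (with constants $start, end$ and a linear order in the vocabulary). The transduction $\Phi_T$ is quantifier-free (the combination of $\Phi_T'$ and the symmetric-closure $\Phi_{sym}$ is built from atomic formulas and equations between constants), so by Proposition \ref{p:fol-1}(v) the operation $\Box_{\Phi_T}$ is $\CFOL$-smooth. Next, suppose for contradiction that a $\CFOL$ sentence $\psi_{\mathrm{tree}}$ (respectively $\psi_{\mathrm{conn}}$) defines the class of trees (respectively of connected graphs). By the Finite Rank Theorem, the connection matrix $M(\Box_{\Phi_T}, \psi)$ must have finite rank over any field.

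Then I would compute the submatrix indexed on both sides by the directed paths $P_n = (V_n, E_n, start_n, end_n)$ equipped with the natural linear order, for $n = 1, 2, 3, \ldots$. By the analysis of Figure \ref{fg:Trees}, the entry at $(P_{n_1}, P_{n_2})$ of the restriction of $M(\Box_{\Phi_T}, \psi_{\mathrm{tree}})$ equals $1$ iff $n_1 = n_2$, giving the infinite identity matrix, which has infinite rank. Similarly, the entry at $(P_{n_1}, P_{n_2})$ of $M(\Box_{\Phi_T}, \psi_{\mathrm{conn}})$ equals $1$ iff $n_1 \leq n_2$, giving an infinite upper-triangular $(0,1)$-matrix whose rows $P_1, P_2, \ldots$ are pairwise linearly independent, hence also of infinite rank. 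Both contradict Theorem \ref{th:FRT-tame}(ii).

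There is no real obstacle — the whole engine (smoothness of $\times$, preservation of smoothness under quantifier-free transductions, the Finite Rank Theorem) has already been installed. The only point that requires a moment of care is the verification that $\Phi_T$ is indeed quantifier-free as a scalar transduction on the vocabulary extended with $start$, $end$ and the linear order, so that Proposition \ref{p:fol-1}(v) applies with smoothness constant $0$; and that adding the inherited lexicographic order on $V_1 \times V_2$ does not interfere with the combinatorial reading of Figure \ref{fg:Trees}, which it does not, since the presence or absence of cycles in the transduced graph is determined entirely by the edge relation computed by $\Phi_T$.
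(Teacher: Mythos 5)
Your proof is correct and follows essentially the same route as the paper: apply the composite operation $\Box_{\Phi_T}(G^1,G^2)=\Phi_T^{\star}(G^1\times G^2)$ to directed paths, read off from Figure \ref{fg:Trees} that the result is a tree iff $n_1=n_2$ and connected iff $n_1\le n_2$, and conclude that the resulting identity (resp.\ upper-triangular) submatrices force infinite rank, contradicting the Finite Rank Theorem via the $\CFOL$-smoothness of $\times$ and Proposition \ref{p:fol-1}(v). The paper states this more tersely but uses exactly the same machinery and the same witnessing family of structures.
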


The case of connected graphs was also shown non-definable in $\CFOL$ by. J. Nurmonen in \cite{ar:Nurmonen00}
using his version of the pebble games for $\CFOL$.

%------------
%Planar graphs:

For our next connection matrix
we use the $2$-sum of the following two $2$-graphs. 
Recall that the $2$-sum $G\sqcup_2 H$ of two $2$-graphs 
$G$ and $H$ is obtained by taking the disjoint union of $G$ and $H$ and identifying
the corresponding labels. In this case the two labels are the constant symbols $start$ and $end$. 
\begin{enumerate}
\item the $2$-graph $(G,a,b)$ obtained from $K_{5}$ by choosing
two vertices $a$ and $b$ and removing the edge between them 
\item the symmetric closure of the Cartesian product of the two digraphs 
%$\left(H,\left(start_{1},start_{2}\right),\left(end_{1},end_{2}\right)\right)$,
%where $H$ is the symmetric closure of the graph obtained by the following
%$\FOL$ transduction on 
$P_{n_{1}}^{1}$ and $P_{n_{2}}^{2}$: 
%\begin{eqnarray*}
%\varphi_{P}\left(\left(v_{1},v_{2}\right),\left(u_{1},u_{2}\right)\right) 
%& = &
%E_{1}(v_{1},u_{1})\land E_{2}(v_{2},u_{2})
%\end{eqnarray*}
\end{enumerate}
We denote this transduction by $\Phi_P$, see Figure \ref{fg:Planar}. 

\begin{figure}[ht]
\begin{center}
\begin{tabular}{ccc}
\includegraphics[scale=0.8]{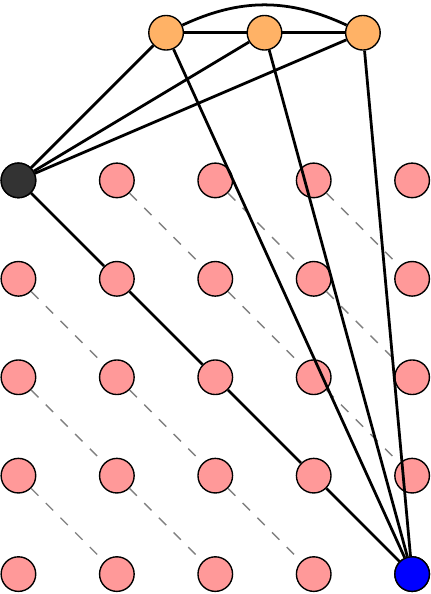}~~ 
&  & 
~~\includegraphics[scale=0.8]{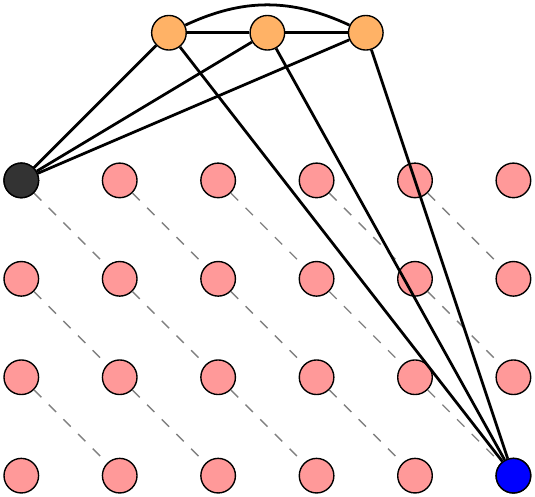}~~\tabularnewline
&  & \tabularnewline
$n_{1}=n_{2}$ &  & $n_{1}\not=n_{2}$\tabularnewline
\end{tabular} \\
\caption{\label{fg:Planar} The result of $\Phi_P$ on two directed paths. The graph obtained here is planar
iff the two directed paths are not of equal length. 
}
\end{center}
\end{figure}

So, the result of this construction has a clique of size $5$ as a
minor iff $n_{1}=n_{2}$. It can never have a $K_{3,3}$ as a minor.

\begin{thm}
\label{th:planar}
The class of planar graphs is
not $\CFOL$-definable on ordered graphs.
\end{thm}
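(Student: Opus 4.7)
The plan is to apply the Finite Rank Theorem (Theorem \ref{th:FRT-tame}) with $\cL = \CFOL$, exactly following the template of the preceding theorems of this section. Assume for contradiction that planarity is defined by some $\CFOL$-sentence $\psi$ over ordered graphs. Let $\Box_P$ denote the derived binary operation $\Box_P(G_1,G_2) = \Phi_P(G_1,G_2)$. Once $\Box_P$ is shown to be $\CFOL$-smooth, Theorem \ref{th:FRT-tame} forces the connection matrix $M(\Box_P, \neg\psi)$ to have finite rank over any field.

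First I would verify $\CFOL$-smoothness of $\Box_P$. Factor $\Phi_P$ as the Cartesian product $G_1 \times G_2$, which is $\CFOL$-smooth by Theorem \ref{th:FV-CFOL}(ii), followed by a quantifier-free transduction that symmetrizes the edge relation and performs the $2$-sum with the fixed graph $(K_5-e, a, b)$ at $start, end$. By Proposition \ref{p:fol-1}(v), composition with a quantifier-free scalar transduction preserves $\CFOL$-smoothness. The only subtle point is that the $2$-sum with $K_5-e$ formally introduces three new vertices and so is not literally a scalar transduction. Since $K_5-e$ has bounded size, this can be absorbed either by pre-enlarging the input paths with three marked auxiliary vertices (so that the required points are already present inside $V(G_1)\times V(G_2)$ and can be picked out quantifier-freely using the constants and the order) or by invoking the broader smoothness framework of \cite{ar:Makowsky09}. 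I expect this bookkeeping to be the main obstacle; the smoothness itself then follows routinely.

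Second, the combinatorial content is straightforward. For directed paths $G_i = P^i_{n_i}$ with $start_i$ the source and $end_i$ the sink, the only edges of $G_1\times G_2$ run along the diagonals, so $start = (start_1, start_2)$ and $end = (end_1, end_2)$ lie in the same connected component of the symmetric closure of $G_1 \times G_2$ iff $n_1 = n_2$, in which case they are joined by the main diagonal. Attaching $K_5-e$ at $\{start,end\}$ then produces a subdivision of $K_5$, hence a non-planar graph, precisely when $n_1=n_2$. When $n_1 \neq n_2$, the graph consists of the planar $K_5-e$ with a few pendant diagonal paths glued at $start$ and $end$, which remains planar; the absence of a $K_{3,3}$-minor is also immediate because only the five vertices of the $K_5-e$ gadget have degree at least $3$.

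Finally, restrict $M(\Box_P, \neg\psi)$ to the infinite submatrix whose rows and columns are indexed by the directed paths $P_1, P_2, P_3, \ldots$ (ordered by length). By the previous step, the entry at position $(P_i, P_j)$ equals $1$ iff $\Phi_P(P_i, P_j)$ is non-planar iff $i = j$. The resulting submatrix is the infinite identity matrix, which has infinite rank, contradicting Theorem \ref{th:FRT-tame}. Hence planarity is not $\CFOL$-definable on ordered graphs, proving Theorem \ref{th:planar}.
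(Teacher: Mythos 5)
Your proposal is correct and follows essentially the same route as the paper: glue $K_5-e$ onto the symmetrized Cartesian product of two directed paths at $(start_1,start_2)$ and $(end_1,end_2)$, observe that the result is non-planar (via a $K_5$ subdivision along the main diagonal) exactly when $n_1=n_2$, and conclude infinite rank of the connection matrix against Theorems \ref{th:FV-CFOL} and \ref{th:FRT-tame}. You are in fact more careful than the paper on the one genuinely delicate point — that the $2$-sum with the fixed gadget $K_5-e$ is not literally a scalar transduction of the product (the cleanest fix being that $2$-sum with a fixed structure is itself a quantifier-free transduction of a rich disjoint union, hence $\CFOL$-smooth by Theorem \ref{th:FV-CFOL}(i) and Proposition \ref{p:fol-1}(v)) — which the paper passes over in silence.
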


If we modify the above construction by taking $K_3$ instead of $K_5$
and making \linebreak
$\left(start_{1},start_{2}\right)$ and $\left(end_{1},end_{2}\right)$
adjacent, we get 

\begin{cor}
The following classes of graphs
are not $\CFOL$-definable on ordered graphs.
\begin{renumerate}
\item
Cactus graphs, i.e. graphs in
which any two cycles have at most one vertex in common. 
\item Pseudo-forests, i.e. graphs in which every connected component has at
most one cycle.
\end{renumerate}
\end{cor}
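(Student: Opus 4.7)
The plan is to specialize the construction $\Phi_P$ from the proof of Theorem~\ref{th:planar} by replacing the $2$-labeled graph $K_5 \setminus \{a,b\}$ with $K_3 \setminus \{a,b\}$, i.e.\ the two-edge path $a - c - b$, and by additionally inserting the edge $e$ between $(start_1,start_2)$ and $(end_1,end_2)$ in the Cartesian-product factor. Denote the resulting binary operation by $\Box_C$. The same argument that establishes the $\CFOL$-smoothness of $\Phi_P$ used in Theorem~\ref{th:planar}, namely Proposition~\ref{p:fol-1}(v) together with Theorem~\ref{th:FV-CFOL}(ii) for $\times$, applies verbatim and yields $\CFOL$-smoothness of $\Box_C$.

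Next I would analyse the cycle structure of $\Box_C(P_{n_1}^1, P_{n_2}^2)$ for directed paths. In the Cartesian product of two directed paths every edge points from $(i,j)$ to $(i{+}1,j{+}1)$, so the underlying undirected graph after symmetric closure is a disjoint union of paths, one for each diagonal. Adding the chord $e$ therefore closes a cycle inside this part exactly when $(start_1,start_2)$ and $(end_1,end_2)$ lie on the same diagonal, i.e.\ precisely when $n_1 = n_2$. In the $2$-sum, the vertex $c$ together with $e$ always forms the triangle $a - c - b - a$. Consequently, for $n_1 \neq n_2$ this triangle is the unique cycle of $\Box_C(P_{n_1}^1, P_{n_2}^2)$, and the graph is simultaneously a cactus and a pseudo-forest. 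For $n_1 = n_2 \geq 2$, the diagonal of length $n_1$ from $(start_1,start_2)$ to $(end_1,end_2)$, closed by $e$, gives a second cycle of length $n_1+1$ which shares both endpoints and the entire edge $e$ with the triangle; the two cycles thus share more than one vertex, violating the cactus condition, and they lie in the same connected component, violating the pseudo-forest condition.

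To conclude I would invoke Theorem~\ref{th:FRT-tame}. For $\mathcal{P}$ equal to either ``cactus'' or ``pseudo-forest'', the submatrix of $M(\Box_C, \mathcal{P})$ indexed by $\{P_n : n \geq 2\}$ has $0$ on the diagonal and $1$ off the diagonal, so it has infinite rank over any field. Combined with the $\CFOL$-smoothness of $\Box_C$, this contradicts item (ii) of the Finite Rank Theorem under the assumption that $\mathcal{P}$ is $\CFOL$-definable, even in the presence of a linear order, which is exactly the corollary. The main subtlety I anticipate is the diagonal decomposition showing that the symmetric closure of the Cartesian product of two directed paths is acyclic before $e$ is added --- this is the ingredient responsible for the clean dependence of the cycle count on whether $n_1 = n_2$; once that is in place, the rest of the argument is a straightforward combinatorial bookkeeping of cycles.
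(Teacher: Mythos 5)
Your proposal is correct and is essentially the paper's own (one-line) argument worked out in full: the paper likewise replaces $K_5$ by $K_3$ in the $\Phi_P$ construction and makes $(start_1,start_2)$ and $(end_1,end_2)$ adjacent, so that the gadget triangle is the only cycle when $n_1\neq n_2$ while the closed diagonal produces a second cycle sharing the edge $e$ when $n_1=n_2$. Your diagonal decomposition of the symmetric closure of the product of two directed paths and the resulting rank argument match the paper's intent exactly.
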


%----------------------------------------------------------------------------------------------------
% 2-Connected, Bridgeless

\noindent For the next transduction we need to refer to the second and second to last vertices of $G^1$, both
of which are definable using $E_1$ and $start_1$ or $end_1$. The transduction $\Phi_{B}$ adds 
to the Cartesian product of $G^1$ and $G^2$ an edge between the second vertex in the first column and the 
second to last vertex in the last column. Moreover, $\Phi_{B}$ adds all the edges in the first and last rows and columns, 
except for two edges, the first edge in the first column and the last edge in the last column. 

The transduction $\Phi_B$, obtained from $\Phi_B'$ below by additionally applying $\Phi_{sym}$, 
transforms the two directed paths into the structures in Figure \ref{fg:Bridgeless}.
\begin{eqnarray*}
\Phi_{B}'\left(\left(v_{1},v_{2}\right),\left(u_{1},u_{2}\right)\right) 
& = & \left(E_{1}(start_1,v_1)\land E_{1}(u_1,end_1) \land (start_2=v_2) \land (end_2=u_2) \right)\lor \\
&   & \left(E_{1}(v_{1},u_{1})\land E_{2}(v_{2},u_{2})\right)\lor\\
&  & \left(v_{1}=u_{1}=start_{1}\land E(v_{2},u_{2})\right)\lor\\
&  & \left(v_{2}=u_{2}=start_{2}\land E(v_{1},u_{1})\land (v_1\not=start_1)\right)\lor\\
&  & \left(v_{1}=u_{1}=end_{1}\land E(v_{2},u_{2})\right)\lor\\
&  & \left(v_{2}=u_{2}=end_{2}\land E(v_{1},u_{1})\land (u_1\not=end_1)\right),
\end{eqnarray*}

\begin{figure}[ht]
\begin{center}
\begin{tabular}{ccc}
\includegraphics[scale=0.8]{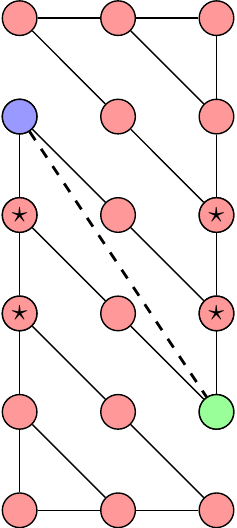}~~ 
& ~~\includegraphics[scale=0.8]{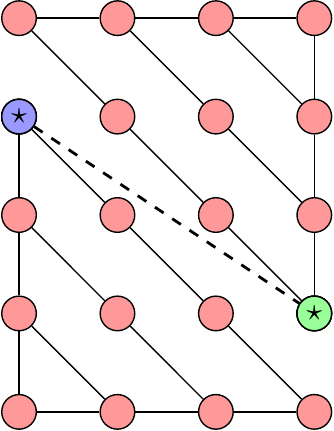}~~ &
 ~~\includegraphics[scale=0.8]{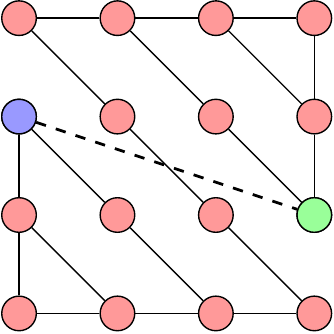}\tabularnewline
&  & \tabularnewline
$n_{1}>n_{2}+1$ & 
$n_{1}=n_{2}+1$ & 
$n_{1}< n_{2}+1$\tabularnewline
\end{tabular} \\
\caption{\label{fg:Bridgeless} The result of applying $\Phi_B$ on two directed paths. 
We always get a connected graph as a result of this transduction.
The resulting graph has articulation vertices (vertices whose removal leaves the graph disconnected) iff $n_1\geq n_2+1$. 
The same is true for the existence of bridges (edges whose removal leaves the graph disconnected).
We marked the articulation vertices with $\star$. The bridges are the edges between adjacent vertices marked with $\star$.
Note that if $n_1=n_2+1$ then there are two articulation vertices and one bridge, whereas if $n_1>n_2+1$ there are four articulation vertices 
and two bridges. 
}
\end{center}
\end{figure}

The result of the transduction has a bridge iff $n_{1}\geq n_{2}+1$. It
is $2$-connected iff $n_{1}\leq n_{2}$. Hence, the connection matrices
of  bridgelessness  and of $2$-connectivity
with directed paths as row and column labels
 have infinite
rank.

\begin{thm}
The properties of being bridgeless or $2$-connected are not
$\CFOL$-definable even in the presence of linear order.
\end{thm}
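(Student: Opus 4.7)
The plan is to apply the Finite Rank Theorem (Theorem~\ref{th:FRT-tame}) contrapositively to the construction $\Phi_B$ already in hand. First I would check that $\Phi_B$ (the symmetric closure of $\Phi_B'$, composed with $\Phi_{sym}$) is a quantifier-free scalar $\FOL$-transduction on ordered graphs equipped with the two constants $start$ and $end$; inspection of the defining disjuncts of $\Phi_B'$ shows only atomic formulas and Boolean combinations, so no quantifiers are used. The lexicographic order on $V_1\times V_2$ required for order-invariance comes for free from the ordered product defined in Section~\ref{se:even}. Combining Theorem~\ref{th:FV-CFOL}(ii) (the ordered Cartesian product is $\CFOL$-smooth) with Proposition~\ref{p:fol-1}(v), the derived binary operation
\[
\Box_{\Phi_B}(G^1,G^2)\;=\;\Phi_B^{\star}(G^1\times G^2)
\]
is therefore $\CFOL$-smooth on ordered graphs with these distinguished constants.

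Now suppose, for a contradiction, that bridgelessness is defined by a $\CFOL$-sentence $\psi_B$. Since $\CFOL$ is a tame fragment of $\SOL$, Theorem~\ref{th:FRT-tame}(ii) applied to $\Box=\Box_{\Phi_B}$ and $\psi_B$ guarantees that the connection matrix $M(\Box_{\Phi_B},\psi_B)$ has finite rank. I would then restrict to the sub-matrix whose rows are indexed by the directed paths $P^1_{n_1}$ with endpoints $start_1,end_1$, and whose columns are indexed analogously by $P^2_{n_2}$. According to Figure~\ref{fg:Bridgeless} and the accompanying analysis, $\Phi_B^{\star}(P^1_{n_1}\times P^2_{n_2})$ contains a bridge precisely when $n_1\geq n_2+1$, so the $(n_1,n_2)$-entry of this sub-matrix equals $1$ iff $n_1\leq n_2$. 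This is the infinite upper-triangular all-ones matrix over $\N\times\N$; any $k\times k$ principal sub-matrix is upper-triangular with $1$s on its diagonal, so has rank $k$. This forces infinite rank, contradicting Theorem~\ref{th:FRT-tame}(ii). The argument for $2$-connectivity is completely parallel: by the same figure, $\Phi_B^{\star}(P^1_{n_1}\times P^2_{n_2})$ is $2$-connected iff $n_1\leq n_2$, producing the same infinite upper-triangular all-ones sub-matrix and the same contradiction.

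The main obstacle is not conceptual but a careful verification, namely that the "closing" edge introduced by the first disjunct of $\Phi_B'$ and the boundary rows and columns introduced by the remaining disjuncts together have exactly the stated effect on articulation vertices and bridges for every pair $(n_1,n_2)$, including the small boundary cases $n_2\in\{0,1\}$. Once this case analysis is in place, the rank lower bound comes from a triangular matrix and no further computation is needed.
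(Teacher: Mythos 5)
Your proposal follows exactly the paper's argument: the same transduction $\Phi_B$ composed with the $\CFOL$-smooth ordered Cartesian product (via Theorem~\ref{th:FV-CFOL}(ii) and Proposition~\ref{p:fol-1}(v)), the same characterization that the resulting graph is bridgeless, respectively $2$-connected, iff $n_1\leq n_2$, and the same infinite upper-triangular sub-matrix contradicting the Finite Rank Theorem. The proposal is correct and in fact spells out the appeal to Theorem~\ref{th:FRT-tame} more explicitly than the paper does.
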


For any $\ell>0$, if we join $K_\ell$ to the result of $\Phi_B$
we get non-definability of $(\ell+2)$-connectivity, so we have:
\begin{cor}
For every $\ell\geq 0$, the property of being $(\ell+2)$-connected is not
$\CFOL$-definable even in the presence of linear order.
\end{cor}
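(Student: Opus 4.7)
The plan is to adapt the transduction $\Phi_B$ from the proof of the preceding theorem by adjoining a fixed clique $K_\ell$ to its output. The underlying graph-theoretic fact is standard: for any graph $G$ with at least two vertices, $G \bowtie K_\ell$ is $(k+\ell)$-connected iff $G$ is $k$-connected. Indeed, any separating set of size at most $k+\ell-1$ must include every vertex of $K_\ell$ (otherwise a remaining $K_\ell$-vertex is universal to the rest), reducing the situation to a separator of size at most $k-1$ in $G$. Applied to the previous construction, this gives that joining $K_\ell$ to $\Phi_B(P^1_{n_1}, P^2_{n_2})$ produces a graph that is $(\ell+2)$-connected precisely when $n_1 \leq n_2$.

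To realize this within the framework of the Finite Rank Theorem, I would encode the $\ell$ new vertices as $\ell$ distinguished pairs of the Cartesian product itself. Enrich the first input by $\ell$ fresh constant symbols $c_1, \ldots, c_\ell$, interpreted in an extended structure $\bar{P}^1_{n_1}$ as $\ell$ additional, otherwise unrelated, vertices. Now define a scalar transduction $\Phi^\ell_B$ on $\bar{P}^1_{n_1} \times P^2_{n_2}$ whose universe consists of the pairs $(v,u)$ with $v$ on the directed path, together with the $\ell$ pairs $(c_i, start_2)$; on the path-part the edges are those defined by $\Phi_B$, the $\ell$ distinguished pairs form a $K_\ell$, and each distinguished pair is adjacent to every path-part vertex. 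All defining formulas are atomic, so $\Phi^\ell_B$ is a quantifier-free scalar transduction. Since the Cartesian product is $\CFOL$-smooth by Theorem \ref{th:FV-CFOL}(ii), Proposition \ref{p:fol-1}(v) then yields that the composite operation $\Phi^\ell_B$ is itself $\CFOL$-smooth.

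With smoothness in hand, the conclusion is routine: if $(\ell+2)$-connectivity were $\CFOL$-definable by some sentence $\psi$, Theorem \ref{th:FRT-tame} would force the connection matrix $M(\Phi^\ell_B, \psi)$ to have finite rank; but its submatrix indexed by pairs of directed paths has entry $1$ iff $n_1 \leq n_2$, which is upper triangular with ones on and above the diagonal, hence of infinite rank over any field, a contradiction.

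The main technical point I foresee is the bookkeeping that makes the clique $K_\ell$ sit inside $\bar{P}^1_{n_1} \times P^2_{n_2}$ with only quantifier-free definitions and a scalar universe restriction; once that is in place, the connectivity computation is immediate from the previous theorem together with the $K_\ell$-join identity.
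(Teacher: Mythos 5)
Your proposal is correct and follows essentially the same route as the paper, which simply says to join $K_\ell$ to the output of $\Phi_B$ and observe that this shifts $2$-connectivity to $(\ell+2)$-connectivity; your write-up just supplies the details the paper leaves implicit (the join-with-$K_\ell$ connectivity identity and the encoding of the extra clique as a quantifier-free scalar transduction via added constants). No gaps.
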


%----------------------------------------------------------------------------------------------------

%-------------------------------------------------
%\input{msol}  % msol-properties
%\subsection{Connection Matrices for $\MSOL$ and $\CMSOL$}
%\label{se:msol}

\subsection*{Non-definability in \texorpdfstring{$\CMSOL$}{CMSOL}}
Considering the connection matrix where the rows and columns are labeled by
the graphs on $n$ vertices but without edges $E_n$,
the graph $E_{i}\Join E_{j}=K_{i,j}$ is
\begin{enumerate}
\item Hamiltonian iff $i=j$;
\item has a perfect matching iff $i=j$;
\item is a cage graph (a regular graph with as few vertices as possible for
its girth) iff $i=j$;
\item is a well-covered graph (every minimal vertex cover has the same size
as any other minimal vertex cover) iff $i=j$. 
\end{enumerate}
All of these connection matrices have infinite rank, so we get
\begin{cor}
\label{co:ms1}
None of the properties above are $\CMSOL$-definable as graphs even in the presence of an order.
\end{cor}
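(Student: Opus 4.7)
The plan is to reduce to the Finite Rank Theorem (Theorem \ref{th:FRT-tame}) applied to the join operation $\bowtie$. By Proposition \ref{p:fol-1}(iv), $\bowtie$ is $\CMSOL$-smooth in the vocabulary of graphs; when a linear order is present, we place all vertices of the first graph before those of the second, and this ordered join is a quantifier-free scalar transduction of the rich disjoint union $\sqcup_{rich}$, hence is still $\CMSOL$-smooth by Proposition \ref{p:fol-1}(i) together with Proposition \ref{p:fol-1}(v). Suppose for contradiction that one of the four properties is defined by a $\CMSOL$-formula $\psi$; then Theorem \ref{th:FRT-tame}(ii) would force $M(\bowtie,\psi)$ to have finite rank over any field.

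I would then restrict attention to the sub-matrix indexed by the edgeless graphs $E_n$ with $n\geq 2$. Since $E_i\bowtie E_j=K_{i,j}$, its entries are determined by whether $K_{i,j}$ has the property $\psi$. It suffices to verify, for each of the four cases, the combinatorial fact already claimed in the preamble of the corollary: $K_{i,j}$ has a Hamiltonian cycle iff $i=j$; has a perfect matching iff $i=j$ (by Hall's theorem, or directly); is a cage iff $i=j$ (cages are by definition regular, and $K_{i,i}$ is the $(i,4)$-cage); and is well-covered iff $i=j$, since the minimal vertex covers of $K_{i,j}$ are precisely its two sides, of sizes $i$ and $j$. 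For the Hamiltonian-path variant alluded to earlier in the paper, I would restrict to even $n$, so that $i\neq j$ forces $|i-j|\geq 2$, which also rules out a Hamiltonian path.

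In each of the four cases, the restricted sub-matrix is the infinite identity matrix, which has infinite rank, contradicting Theorem \ref{th:FRT-tame}(ii). The conceptual work is entirely in invoking the Finite Rank Theorem for $\bowtie$; the rest is routine verification of standard properties of complete bipartite graphs. There is no serious obstacle beyond choosing the correct restriction of rows and columns (particularly in the Hamiltonian-path case) and confirming that the presence of an order does not disturb $\CMSOL$-smoothness of the join, which is handled by the scalar transduction argument above.
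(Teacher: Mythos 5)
Your proposal is correct and follows essentially the same route as the paper: apply the Finite Rank Theorem to the $\CMSOL$-smooth join $\bowtie$ on (ordered) graphs, restrict the connection matrix to the edgeless graphs $E_n$, observe $E_i\bowtie E_j=K_{i,j}$ and that each property holds iff $i=j$, and conclude infinite rank. Your extra care with the Hamiltonian-path variant (restricting to even $n$) and with the order (realizing the ordered join as a quantifier-free scalar transduction of $\sqcup_{rich}$) only makes explicit what the paper leaves implicit.
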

Using a modification $\widetilde{\Join}$ of the join operation from \cite[Remark 5.21, Page 350]{bk:CourcelleEngelfriet2012}
one can show the same
for the class of graphs which have a spanning tree of degree at most $3$.
Let $G$ and $H$ be graphs. Let $u_1,\ldots,u_m$ be the vertices of $H$. 
$G \widetilde{\Join} H$ is obtained from $G\Join H$ by taking two additional copies $H_1$ and $H_2$ of $H$ with vertices of the form $u_i^{b}$, $b=1$ or $b=2$ respectively,
and adding, for every $u_i\in V(H)$, the edges $(u_i,u_i^1)$ and $(u_i,u_i^2)$. 
If $G=\overline{K_n}$ and $H=\overline{K_m}$ are edgeless graphs with $n$ respectively $m$ vertices, then $G \widetilde{\Join} H$
has a spanning tree of degree at most $3$ iff $|V(G)|+2\geq |V(H)|$. The operation $\widetilde{\Join}$ is $\MSOL$-smooth over the vocabulary of graphs
since it can be defined as a transduction of the rich disjoint union of $G$ and $H$. The connection matrix of $G \widetilde{\Join} H$
and the property of having a spanning tree of degree at most $3$ has infinite rank, implying that the property is not $\MSOL$-definable
in the vocabulary of graphs. 

For any fixed natural number $d>3$, by performing a transduction on $G \widetilde{\Join} H$ which 
attaches $d-3$ new vertices as pendants to each vertex of $G \widetilde{\Join} H$, the non-definability result extends
to the class of graphs which have a spanning tree of degree at most $d$. 

%\subsection*{Proving non-definability with edge quantification and order}
For the language of hypergraphs we cannot use the join operation, since it is not smooth.
Note also that Hamiltonian and having a perfect matching are both definable in $\CMSOL$ in
the language of hypergraphs.
But using the connection sub-matrices of the disjoint union we still get:

\begin{enumerate}
\item Regular: 
$K_{i}\sqcup K_{j}$ is regular iff $i=j$;
\item 
A generalization of regular graphs are
{\em bi-degree} graphs, i.e., graphs where every vertex has one of two possible degrees.
$K_{i}\sqcup (K_{j}\sqcup K_{1})$  is a bi-degree graph iff $i=j$.
\item The average degree of $K_{i}\sqcup K_{j}$  is at most $\frac{|V|}{2}$ iff $i=j$;
\item A digraph is {\em aperiodic}  if the common denominator of the lengths of all
cycles in the graph is $1$. We denote by $C_i^d$ the directed cycle with $i$ vertices.
For prime numbers $p,q$ the digraphs
$C_{p}\sqcup C_{q}$ is aperiodic iff $p \neq q$.
%are taken from the set of prime numbers (the connection matrix we
%get has ones on the diagonal and zeros elsewhere). 
\item A graph is {\em asymmetric (or rigid)} if it has no non-trivial automorphisms.
It was shown by P. Erd\"os and A. R\'enyi \cite{ar:ErdosRenyi63}
that almost all finite graphs are asymmetric. So there is an infinite set $I \subseteq \N$
such that for $i \in I$ there is  an asymmetric graph $R_i$ of cardinality $i$.
$R_{i}\sqcup R_{j}$ is asymmetric iff  $i \neq j$.
\end{enumerate}

\begin{cor}
\label{co:ms2}
None of the properties above are $\CMSOL$-definable as hypergraphs even in the presence of an order.
\end{cor}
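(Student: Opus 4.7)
The plan is to apply the Finite Rank Theorem (Theorem \ref{th:FRT-tame}) to the disjoint union operation $\sqcup$ and, for each of the five listed properties $\psi$, to exhibit an infinite submatrix of $M(\sqcup, \psi)$ of infinite rank, contradicting the hypothesis that $\psi$ would be $\CMSOL$-definable.

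First I would observe that the disjoint union remains $\CMSOL$-smooth on ordered $\tau_{HG}$-structures. By Proposition \ref{p:fol-1}(i) the rich disjoint union $\sqcup_{rich}$ is $\CMSOL$-smooth, and the order on $\fA \sqcup \fB$ placing all of $A$ before $B$ is defined from $\fA \sqcup_{rich} \fB$ by the quantifier-free formula $x <' y \equiv (A(x) \wedge B(y)) \vee (x < y)$. By Proposition \ref{p:fol-1}(v) this operation is still $\CMSOL$-smooth, so Theorem \ref{th:FRT-tame} forces $M(\sqcup, \psi)$ to have finite rank for any $\CMSOL$-definable $\psi$.

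Next, for each property I would exhibit an infinite submatrix whose pattern is either the identity matrix $I$ or its complement $J - I$; both have infinite rank (for $J - I$: since $J$ has rank one, finite rank of $J - I$ would force finite rank of $I = J - (J - I)$). For the regular property, the submatrix indexed by $\{K_i\}_{i \in \bN}$ is $I$, since $K_i \sqcup K_j$ is regular iff $i = j$. For bi-degree, indexing rows by $K_i$ and columns by $K_j \sqcup K_1$ with $i, j \geq 2$ yields $I$, as the degree multiset $\{i-1, j-1, 0\}$ has exactly two distinct values iff $i = j$. For aperiodic digraphs, rows and columns indexed by directed cycles $C_p$ of prime length $p$ give $J - I$, since $\gcd(p, q) = 1$ iff $p \neq q$. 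For asymmetric graphs, I would invoke the Erd\H{o}s--R\'enyi result to choose asymmetric $R_i$ of pairwise distinct sizes; components of non-isomorphic size cannot be swapped, so $\mathrm{Aut}(R_i \sqcup R_j)$ is trivial for $i \neq j$ while $R_i \sqcup R_i$ admits a swap automorphism, again yielding $J - I$. For average degree at most $|V|/2$, a direct computation shows that the entry for $K_i \sqcup K_j$ is $1$ iff $(i - j)^2 \leq 2(i + j)$; restricting to the sparse subsequence $n_k = 2^k$ with $k \geq 5$, the inequality fails for all $k \neq \ell$, producing an identity submatrix.

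The only case requiring any real calculation is the average-degree one, where the raw condition is not immediately $i = j$ and a sufficiently sparse subsequence of complete graphs must be chosen to clear out the off-diagonal entries. All other cases are essentially immediate from the descriptions in items (1)--(5) preceding the corollary, once smoothness of $\sqcup$ is in hand and Theorem \ref{th:FRT-tame} is invoked.
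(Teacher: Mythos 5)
Your proof follows the paper's approach exactly: the disjoint union is $\CMSOL$-smooth (also on ordered hypergraphs, via a quantifier-free transduction of the rich disjoint union), and each of the five properties yields an infinite identity or $J-I$ submatrix of the connection matrix, contradicting Theorem \ref{th:FRT-tame}. Your treatment of the average-degree case is in fact more careful than the paper's item (3), whose claim that the average degree of $K_i\sqcup K_j$ is at most $|V|/2$ iff $i=j$ is literally false for nearby $i\neq j$ (e.g.\ $i=2$, $j=3$ gives $(i-j)^2=1\leq 2(i+j)=10$); your restriction to the sparse subsequence $2^k$, $k\geq 5$, correctly repairs this.
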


\begin{rem} 
The case of asymmetric graphs illustrates that it is not always necessary to find explicit 
infinite families of graphs whose connection matrices are of infinite rank
in order to show that such a family exists. 
\end{rem}
%-------------------------------------------------

%\fi
%-------------------------------------------------
%\input{solpol}  % msol polynomials and FRT for polynomials
\section{\texorpdfstring{$\cL$}{L}-Definable Graph Polynomials and Graph Parameters}
\label{se:sol-s}
\label{se:solpol}

\subsection*{\texorpdfstring{$\cL(\tau)$}{L(tau)}-polynomials}
Here we follow closely the exposition from \cite{ar:KotekMakowskyZilber11}.
Let $\cL$ be a tame fragment of $\SOL$.
We are now ready to introduce the $\cL$-definable polynomials. 
They are defined for $\tau$-structures and generally are called
{\em $\cL(\tau)$ invariants} as they map $\tau$-structures into some
commutative semi-ring
$\mathcal{R}$, which contains the semi-ring of the integers $\N$,
and are invariant under $\tau$-isomorphisms.
If $\tau$ is the vocabulary of graphs or hypergraphs,
we speak of graph invariants and graph polynomials.

For our discussion 
$\mathcal{R}=\N$ or
$\mathcal{R}=\Z$ suffices, but the definitions generalize.
Our polynomials have a fixed finite set of variables (indeterminates,
if we distinguish them from the variables of $\cL$), $\mathbf{X}$.

\begin{definition}[$\cL$-simple monomials]
\label{def:monomials-simple}
Let $\mathcal{M}$ be a $\tau$-structure.
Elements of $\N\cup \mathbf{X}$ are 
$\cL(\tau)$-definable
simple $\mathcal{M}$-monomials. 
\end{definition}

\begin{definition}[$\cL$-monomials]
\label{def:monomials}
Let $\mathcal{M}$ be a $\tau$-structure.
We define the  
$\cL(\tau)$-definable {\em $\mathcal{M}$-monomials}
inductively.
\begin{renumerate}
\item
$\cL(\tau)$-definable simple monomials are  
$\cL(\tau)$-definable
 monomials. 
\item
Let $\phi(a)$ be a $\tau \cup \{a\}$-formula in $\cL$,
where $a$ is a constant symbol not in $\tau$
and $U\in\tau$ is a unary relation symbol. 
Let $t$ be a simple monomial. Then
$$
\prod_{a: \langle \mathcal{M},a \rangle \models \phi(a)} t
$$
is a
$\cL(\tau)$-definable
$\mathcal{M}$-monomial.
\item
Finite products of $\cL(\tau)$-definable monomials and simple monomials are
$\cL(\tau)$-def\-inable
monomials. 
\end{renumerate}
%The monomial $t$ may depend on relation or function
%symbols  occurring in $\phi$.
\end{definition}

\noindent Note the degree of a monomial is polynomially
bounded by the cardinality of~$\mathcal{M}$.

\begin{definition}[$\cL$-polynomials]
\label{def:polynomials}
The
$\mathcal{M}$-polynomials definable in $\cL(\tau)$ are defined inductively: 
\begin{renumerate}
\item
$\cL(\tau)$-definable
monomials are
$\cL(\tau)$-definable
polynomials. 
\item
Let $\phi(\bar{a})$ be a $\tau \cup \{\bar{a}\}$-formula in $\cL$
where $\bar{a} = (a_1, \ldots , a_m)$ be a finite sequence of constant 
symbols not in $\tau$.
Let $t$ be a $\cL(\tau \cup \{\bar{a}\})$-polynomial. Then
$$
\sum_{\bar{a}: \langle \mathcal{M},\bar{a} \rangle \models \phi(\bar{a})} t
$$
is a $\cL(\tau)$-definable polynomial.
\item
Let $\phi(\bar{R})$ be a $\tau \cup \{\bar{R}\}$-formula in $\cL$
where $\bar{R} = (R_1, \ldots , R_m)$ be a finite sequence of relation 
symbols not in $\tau$.
Let $t$ be a $\cL(\tau \cup \{\bar{R}\})$-polynomial definable in $\cL$. Then
$$
\sum_{\bar{R}: \langle \mathcal{M},\bar{R} \rangle \models \phi(\bar{R})} t
$$
is a $\cL(\tau)$-definable polynomial.
\end{renumerate}
The polynomial $t$ may depend on relation or function
symbols  occurring in $\phi$.
\end{definition}
An $\mathcal{M}$-polynomial $p_{\mathcal{M}}(\mathbf{X})$
is an expression with parameter $\mathcal{M}$.
The family of polynomials, which we obtain from this expression
by letting $\mathcal{M}$ vary over all $\tau$-structures,
is called, by abuse of terminology, 
a $\cL(\tau)$-polynomial.

The quantifier rank of a $\cL(\tau)$-polynomial $p$ is the maximal quantifier rank of the formulas
defining $p$. 

Among the $\cL$-definable polynomials we find 
most of the known graph polynomials from the literature, 
cf. \cite{ar:MakowskyZoo,ar:KotekMakowskyZilber11}.
\begin{exa}[Matching polynomial] The matching generating polynomial of a graph $G$ is the generating function of matchings in $G$:
\[
 m(G)=\sum_{i=0}^{|E(G)|} m_i \mathbf{X}^i
\]
where $m_i$ is the number of matchings of size $i$ of $G$. $m(G)$ is an $\MSOL$-polynomial given by 
 \[
  \sum_{M\subseteq E(G): \left\langle G,M \right\rangle \models \varphi_{\mathit{match}}} \,\,
  \prod_{v:\left\langle G,M,v \right\rangle \models T} \mathbf{X}
 \]
 where $T$ is any tautology, $\varphi_{\mathit{sym}}(R) = \forall x \forall y (R(x,y)\to R(y,x))$, and 
 $\varphi_{\mathit{match}}(R) = \varphi_{\mathit{sym}}\land \forall x\forall y \forall z \neg (R(x,y)\land R(x,z))$.  
 The monomial $t=\prod_{v:\left\langle G,M,v \right\rangle \models T} \mathbf{X}$ depends on $M$. 
 We simplify the notation by writing 
\begin{gather}\label{eq:matching-pol}
 m(G,\mathbf{X})=\sum_{M \subseteq E(G): \varphi_{match}(M)} \mathbf{X}^{|M|}\,.
\end{gather}
\end{exa}

\begin{definition}[$\cL$-parameters and $\cL$-properties]
{\em $\cL$-definable numeric graph parameters} are evaluations of
$\cL$-definable polynomials and take values in $\mathcal{R}$.
{\em $\cL$-definable properties} are  special cases of numeric parameters
which have boolean values. 
\end{definition}

\begin{rem}[$\FOL$- and $\CFOL$-parameters]\label{rm:fol-param}
For $\FOL$- and $\CFOL$-parameters, the formulas must not contain any second order variables. Therefore, sums of the form 
$\sum_{\bar{R}: \langle \mathcal{M},\bar{R} \rangle \models \phi(\bar{R})} t$ are not allowed, 
while sums of the form $\sum_{\bar{a}: \langle \mathcal{M},\bar{a} \rangle \models \phi(\bar{a})} t$ are allowed. 
Moreover, the monomials are required to be simple monomials. 
\end{rem}

%Some simple graph parameters are even $\FOL$-definable, e.g. $|V|$, the number of vertices
%and $|E|$, the number of edges. However, we leave the discussion of $\FOL$-definable parameters for
%the journal version of this paper, and concentrate on tame fragments of $\SOL$ which do have
%second order variables.

\subsection*{Sum-like and product-like operations}
For the proof of the
 Finite Rank Theorem for $\cL$-polynomials which involve second order variables
it is not enough that the binary operation $\Box$ on $\tau$-structures is $\cL$-smooth.
We need a way to uniquely decompose the relation over which we perform summation in $\fA \Box \fB$
into relations in $\fA$ and $\fB$ respectively, from which we can reconstruct the relation
in $\fA \Box \fB$.
For our discussion here it suffices to restrict $\Box$ to 
{\em $\cL$-sum-like operations}.
$\fA \Box \fB$ 
is {\em $\cL$-sum-like} if there is a {\em scalar} $\cL$-transduction $\Phi$
such that
$$\fA \Box \fB  = \Phi^{\star}(\fA \sqcup_{rich} \fB).$$
An operation is {\em $\cL$-product-like} if instead of scalar transductions we also
allow {\em vectorized} transductions.
Typically, the Cartesian product is $\FOL$-product-like, but not sum-like.
The $k$-sum and the join operation on graphs are $\FOL$-sum-like (but, in the case of join, not on hypergraphs). 
\begin{rem}[$\CFOL$-polynomials]
In Remark \ref{rm:fol-param} we required that $\CFOL$-parameters only have simple monomials. 
We did so because the Feferman-Vaught theorem and the Finite Rank theorem for $\times$ do not hold when allowing 
general monomials. 
Let $\CFOL$-polynomials be the extension of $\CFOL$-parameters by allowing monomials which are not simple. 

For example, consider the following $\CFOL$-polynomial, which is not a $\CFOL$-parameter:
$$
\prod_{a \in M} \mathbf{X} = \mathbf{X}^{|M|}
$$
The connection matrix $M(\times,\mathbf{X}^{|M|})$ restricted to $E_n$, the edgeless graphs, has entries $m_{i,j}=\mathbf{X}^{ij}$, is a Vandermonde matrix
and therefore has infinite rank.  

The graph polynomial $\mathbf{X}^{|M|}$ does not satisfy a bilinear reduction theorem 
such as Theorem \ref{th:fv-disjoint-union}. 
Assume there is a polynomial $Q$, and $\CFOL$-polynomials $p_1,\ldots,p_t$ such that
\[p(G_1 \times G_2)= Q(p_1(G_1),\ldots,p_t(G_1),p_1(G_2),\ldots,p_t(G_2))
\]
Let $G_1=G_2=E_n$ where $E_n$ is the edgeless graph with $n$ vertices.
By definition of $\CFOL$-polynomials, $p_i(E_n)$ has degree at most $n$ in $\mathbf{X}$. 
Therefore, $Q(p_1(G_1),\ldots,p_t(G_1),p_1(G_2),$ $\ldots,p_t(G_2))$ has degree at most $c n$ in  $\mathbf{X}$ for some $c\in \mathbb{N}$. 
In contrast, $p(E_n \times E_n)=\mathbf{X}^{n^2}$. Therefore, it cannot be the case that there is such $Q$.

\end{rem}

\subsection*{The Finite Rank Theorem for \texorpdfstring{$\cL$}{L}-polynomials}

Now we can state the Finite Rank Theorem for $\cL$-polynomials.
The proof uses the same techniques as in \cite{ar:CourcelleMakowskyRoticsDAM,ar:MakowskyTARSKI}.

\begin{thm}[The Finite Rank Theorem for $\cL$-polynomials]
\label{maintheorem}
\ \\
Let $\cL$ be a tame fragment of $\SOL$ such that $\sqcup_{rich}$ is $\L$-smooth
and $\Box$ be an $\cL$-sum-like operation between $\tau$-structures.
Let $P$ be a $\cL(\tau)$-polynomial.
Then the connection matrix $M(\Box, P)$ has finite rank.
\end{thm}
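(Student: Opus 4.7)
The plan is to reduce the statement to the special case $\Box = \sqcup_{rich}$ using the sum-like hypothesis, and then to establish a bilinear Feferman--Vaught-type decomposition for $\cL$-polynomials over $\sqcup_{rich}$; this decomposition will exhibit the connection matrix as a finite sum of rank-one matrices.

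First I would carry out the reduction to $\sqcup_{rich}$. Since $\Box$ is $\cL$-sum-like, there is a scalar $\cL$-transduction $\Phi$ with $\fA \Box \fB = \Phi^{\star}(\fA \sqcup_{rich} \fB)$. Applying the backward translation $\Phi^{\sharp}$ to every $\cL$-formula appearing in the construction of $P$ yields, by tameness of $\cL$, a new $\cL(\tau)$-polynomial $P'$ such that $P(\fA \Box \fB) = P'(\fA \sqcup_{rich} \fB)$ for all finite $\tau$-structures $\fA, \fB$. In particular $M(\Box, P) = M(\sqcup_{rich}, P')$, so it suffices to prove the theorem for $\sqcup_{rich}$.

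Next I would prove, by induction on the construction of a $\cL(\tau)$-polynomial $Q$ of bounded quantifier rank $q$ (following Definitions \ref{def:monomials} and \ref{def:polynomials}), the existence of finitely many pairs of $\cL(\tau)$-polynomials $(u_i, v_i)_{i \le N}$ such that
\[
Q(\fA \sqcup_{rich} \fB) \;=\; \sum_{i=1}^{N} u_i(\fA)\,v_i(\fB).
\]
The base cases (constants and indeterminates) are immediate. For a monomial product $\prod_{a : \phi(a)} t$, the range of $a$ splits as $A \sqcup B$; by $\cL$-smoothness of $\sqcup_{rich}$ and the finiteness of $\cL_q(\tau)$ modulo logical equivalence, the truth of $\phi(a)$ in $\fA \sqcup_{rich} \fB$ for $a \in A$ (resp.\ $a \in B$) depends only on the $\cL_q$-type of $a$ in $\fA$ (resp.\ in $\fB$) and on the $\cL_q$-theory of the other side, so the product factors bilinearly. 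First-order summations $\sum_{\bar{a} : \phi(\bar{a})} t$ are handled by partitioning tuples according to which coordinates of $\bar{a}$ lie in $A$ and which in $B$; each such partition yields a bilinear contribution. Second-order summations $\sum_{\bar{R} : \phi(\bar{R})} t$ are handled by decomposing each $\bar{R}$ on $A \cup B$ into its $A$-part, its $B$-part, and---when components have arity at least $2$---its cross parts; finitely many cross patterns are enumerated and encoded via auxiliary unary predicates added to $\fA$ and $\fB$, to which $\cL$-smoothness of $\sqcup_{rich}$ is then applied. Granted such a decomposition for $P'$, the matrix $M(\sqcup_{rich}, P')$ is a sum of at most $N$ rank-one matrices and hence has rank at most $N$.

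The main obstacle will be the inductive step for second-order summations over relations of arity $\geq 2$, since their cross parts live in neither $\fA$ nor $\fB$. Handling them requires a careful enumeration of cross patterns together with an encoding by finitely many $\cL_q$-types on each side, mirroring the reduction-sequence argument of Section \ref{se:even}; this is where the hypothesis of $\cL$-smoothness of $\sqcup_{rich}$ and the tameness of $\cL$ (closure under scalar transductions) are used most heavily, and where the proof closely follows the techniques of \cite{ar:CourcelleMakowskyRoticsDAM,ar:MakowskyTARSKI}.
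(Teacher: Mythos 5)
Your overall route---reduce $\Box$ to $\sqcup_{rich}$ by backward translation along the scalar transduction, then establish a bilinear decomposition $P'(\fA \sqcup_{rich} \fB)=\sum_{i\le N} u_i(\fA)\,v_i(\fB)$ with finitely many $\cL(\tau)$-polynomials $u_i,v_i$---is the same as the paper's, which first flattens the polynomial via a normal form lemma and then extracts the bilinear form directly from Hintikka sentences and smoothness rather than by induction on the construction of the polynomial; these are cosmetic differences. (For the reduction step, note that tameness is used not only for closure of $\cL$ under $\Phi^{\sharp}$ but also to relativize the summation variables and the product ranges to the universe formula of $\Phi$; this is exactly why closure under containments $\forall x(\varphi(x)\to\psi(x))$ is built into the definition of a tame fragment.)

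The genuine problem is your inductive step for second-order summations $\sum_{\bar R:\phi(\bar R)}t$ when some $R_i$ has arity at least $2$. The cross part of a binary $R$ on $A\sqcup B$ is an arbitrary subset of $(A\times B)\cup(B\times A)$; there are $2^{2|A||B|}$ such subsets, whereas any fixed finite family of auxiliary unary predicates on $\fA$ and on $\fB$ produces only $2^{O(|A|+|B|)}$ configurations, so no encoding of the cross patterns by unary predicates (or by finitely many $\cL_q$-types per side) can exist. Worse, the bilinear decomposition is simply false in that generality: take $p(\fM)=\sum_{R:\,\mathrm{true}}1=2^{|M|^2}$ with $R$ a binary relation symbol, which is an $\FOL$-polynomial according to Definition \ref{def:polynomials}. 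The $\sqcup$-connection matrix restricted to structures of cardinalities $i$ and $j$ has entries $2^{(i+j)^2}=2^{i^2}2^{j^2}4^{ij}$, which after rescaling rows and columns is the Vandermonde-type matrix $(4^{ij})_{i,j}$ of infinite rank. So this step of your proof cannot be repaired; the theorem has to be read with the second-order summation variables restricted to be unary, which is what the paper does by working with polynomials in the form of Equation (\ref{eq:p}) (and which covers $\CMSOL$-polynomials). Under that restriction your argument---splitting $U$ as $(U\cap A)\sqcup(U\cap B)$ and invoking smoothness together with the finiteness of $\cL_q(\tau)$ up to logical equivalence---goes through and coincides with the paper's.
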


The proof is given in Section \ref{se:cmsol-pol-fv}
In \cite{ar:GodlinKotekMakowsky08} the theorem was only formulated for
$k$-sums, and the join operation and for the logic $\CMSOL$.

\begin{thm}[The Finite Rank Theorem for $\CFOL$-parameters]
\label{maintheorem-fo}
\ \\
Let $\boxtimes$ be an $\CFOL$-product-like operation between
$\tau$-structures.
Let $P$ be an $\CFOL(\tau)$-parameter.
Then the connection matrix $M(\boxtimes, P)$ has finite rank.
\end{thm}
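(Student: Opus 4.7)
The plan is to prove, by structural induction on the expression defining $P$, the stronger bilinear decomposition $P(\fA \boxtimes \fB) = \sum_{k=1}^{N} p_k^A(\fA) \cdot p_k^B(\fB)$ for some $N$ and some $\CFOL$-parameters $p_k^A, p_k^B$. Such a decomposition immediately implies the theorem: writing $M(\boxtimes, P) = \sum_{k=1}^{N} u_k v_k^{T}$ with $u_k$ indexed by rows $\fA$ via $p_k^A(\fA)$ and $v_k$ indexed by columns $\fB$ via $p_k^B(\fB)$, we obtain a presentation as a sum of $N$ rank-one matrices, hence rank at most $N$.

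Since $\boxtimes$ is $\CFOL$-product-like, I fix a vectorized $\CFOL$-transduction $\Phi$ of some vectorization degree $k$ with $\fA \boxtimes \fB = \Phi^{\star}(\fA \sqcup_{rich} \fB)$. By Remark \ref{rm:fol-param}, every $\CFOL$-parameter is built from simple monomials $c \in \N \cup \mathbf{X}$ using finite products (inside monomials) and first-order sums $\sum_{\bar{a}: \phi(\bar{a})} t$, with neither second-order sums nor product-monomials indexed by a formula available. The base case is then a constant in $\N[\mathbf{X}]$, which decomposes trivially as $c = c \cdot 1$, and a product $P_1 \cdot P_2$ of parameters inherits a decomposition from the two factors by bilinear expansion.

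The key inductive case is a sum $P = \sum_{\bar{a}: \phi(\bar{a})} t(\bar{a})$ over an $m$-tuple $\bar{a}$ of elements of $\fA \boxtimes \fB$. By Proposition \ref{p:fund} I pull back $\phi(\bar{a})$ along $\Phi$ to a $\CFOL$-formula $\Phi^{\sharp}(\phi)(\bar{a}^{*})$ over $\fA \sqcup_{rich} \fB$, where $\bar{a}^{*}$ is the induced $(mk)$-tuple from $A \cup B$. Since $\sqcup_{rich}$ supplies unary predicates for the sides $A$ and $B$, I may partition the sum according to patterns $\pi \in \{A,B\}^{mk}$ specifying the side of each coordinate, writing $\bar{a}^{*} = (\bar{a}_A^{*}, \bar{a}_B^{*})$. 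Applying Theorem \ref{th:FV-CFOL}(i) to the $\CFOL$-smooth operation $\sqcup_{rich}$, for each $\pi$ the formula $\Phi^{\sharp}(\phi)$ is equivalent to a Boolean combination of $\CFOL$-formulas $\psi_i^A(\bar{a}_A^{*})$ over $\fA$ and $\psi_j^B(\bar{a}_B^{*})$ over $\fB$. Putting this combination in DNF, each conjunct factors as $\psi^A(\bar{a}_A^{*}) \wedge \psi^B(\bar{a}_B^{*})$. Applying the induction hypothesis to the smaller polynomial $t$ in the vocabulary enriched by constants for $\bar{a}$ (with each constant's side fixed by $\pi$) and summing over all patterns $\pi$ and DNF conjuncts, the sum defining $P(\fA \boxtimes \fB)$ reduces to a finite sum of products of the form
\[
\bigl(\textstyle\sum_{\bar{a}_A^{*}: \psi^A} f(\fA, \bar{a}_A^{*})\bigr) \cdot \bigl(\textstyle\sum_{\bar{a}_B^{*}: \psi^B} g(\fB, \bar{a}_B^{*})\bigr),
\]
each factor itself a $\CFOL$-parameter of $\fA$ or of $\fB$, completing the inductive step.

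The main obstacle will be managing the interplay between the vectorization of $\Phi$ and the induction on $t$: a single element of $\fA \boxtimes \fB$ is already a $k$-tuple from $A \sqcup B$, so the partition of $\bar{a}^{*}$ produces $2^{mk}$ cases and the inductive decomposition for $t$ must be parameterized by the pattern $\pi$, carrying constants whose side is already decided. The restriction to simple monomials is essential here: as highlighted in the remark preceding the theorem, a monomial such as $\mathbf{X}^{|M|}$ produces a Vandermonde submatrix of infinite rank for $\times$ already on the edgeless graphs $E_n$, which blocks any bilinear reduction and hence any proof along these lines.
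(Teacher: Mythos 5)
Your plan is sound and reaches the same destination as the paper, namely a bilinear decomposition $P(\fA \boxtimes \fB) = \sum_{k=1}^{N} p_k^A(\fA)\cdot p_k^B(\fB)$ whose existence immediately bounds the rank of $M(\boxtimes,P)$ by $N$. The paper itself gives no separate proof of Theorem \ref{maintheorem-fo}: it says the argument is ``similar to'' that of Theorem \ref{maintheorem}, whose proof (Theorem \ref{th:fv-disjoint-union}) first flattens the polynomial via the Normal Form Lemma \ref{lem:normal-form} and then decomposes the single resulting sum using Hintikka sentences and the function $g_\tau$ of Proposition \ref{prop:reform}. You instead do a structural induction on the defining expression and use the reduction sequences of Theorem \ref{th:FV-CFOL} put into DNF; these are two presentations of the same mechanism (your factored DNF conjuncts $\psi^A \wedge \psi^B$ are exactly the pairs $(\theta_1,\theta_2)$ of Hintikka sentences with $g(\theta_1,\theta_2)\models\phi$ in the paper). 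Your version has the advantage of making explicit the two points the paper elides for this theorem: how vectorization is absorbed (each element of $\fA\boxtimes\fB$ is a $k$-tuple over $A\sqcup B$, giving the $2^{mk}$ side-patterns $\pi$, with the expanded-vocabulary operations $\boxtimes_\pi$ carrying the already-placed constants through the induction), and why the restriction to simple monomials is what saves the argument from the $\mathbf{X}^{|M|}$ counterexample. The paper's normal-form route avoids having to re-parameterize the induction by $\pi$, at the cost of a heavier one-shot computation.

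One step you should tighten: when you write $P(\fA\boxtimes\fB)$ as a sum ``over all patterns $\pi$ and DNF conjuncts,'' the identity $\sum_{\bar a : \bigvee_j C_j} = \sum_j \sum_{\bar a : C_j}$ requires the conjuncts $C_j$ to be pairwise inconsistent; an arbitrary DNF of the Boolean combination supplied by Theorem \ref{th:FV-CFOL} will double-count tuples satisfying two conjuncts. The fix is the one the paper uses in its proof of Theorem \ref{th:FV-CFOL} (``we assume without loss of generality that every $C_j$ contains all of the variables''): complete each conjunct to a full conjunction of all reduction-sequence formulas and their negations, so the $C_j$ are mutually exclusive; equivalently, work directly with Hintikka sentences as in Proposition \ref{prop:Theta}(ii). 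Relatedly, the pulled-back sum must be restricted to $mk$-tuples satisfying the universe formula $\phi_{univ}$ of $\Phi$ (as the paper does in Section \ref{subse:proof-sum-like}); this is absorbed into $\Phi^{\sharp}(\phi)$ if you set up Proposition \ref{p:fund} to include it, but it should be said. With these two adjustments the induction closes and the proof is correct.
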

The proof of Theorem \ref{maintheorem-fo} is similar to that of Theorem \ref{maintheorem}.

%-------------------------------------------------

%-----------------------------------------------------------------------------
\newcommand{\HinSet}{\Theta}
\newcommand{\hinfor}{\theta}
\newcommand{\fM}{\mathfrak{M}}
\newcommand{\tp}{\tilde{p}}
%-----------------------------------------------------------------------------

\section{A Feferman-Vaught-type theorem for \texorpdfstring{$\cL$}{L}-polynomials and smooth operations}
\label{se:cmsol-pol-fv}
In this section we prove an analogue of the Feferman-Vaught theorem for sum-like operations, Theorem \ref{maintheorem}.
We start by explicitly giving a Feferman-Vaught theorem for the matching polynomial, 
proceed with the case of rich disjoint union, 
and end with all sum-like operations. 

%In Sections \ref{subse:proof-rich}, \ref{subse:semi-rings} and \ref{subse:proof-sum-like}, let $\cL$ be a tame fragment of $\SOL$, let $\Box$ be an $\cL$-sum-like operation between
%$\tau$-structures which is $\cL$-smoot, and let $P$ be an $\cL(\tau)$-polynomial.

\subsection{The matching polynomial}
As an example, we prove here a Feferman-Vaught theorem for the matching generating polynomial with the $1$-sum operation. 
$m(G,\mathbf{X})$ is a polynomial in $\Z[\mathbf{X}]$. As we need a field we work in the field of rational functions $\Q(\mathbf{X})$.

We will need the following auxiliary graph polynomials whose inputs are $1$-labeled graphs $(G,v_G)$:
\begin{eqnarray}
 m^+(G,v_G,\mathbf{X})&=&\sum_{M \subseteq E_G: \varphi_{match}(M)\land \varphi_{+}(M,v_G)} \mathbf{X}^{|M|}\\
 m^-(G,v_G,\mathbf{X})&=&\sum_{M \subseteq E_G: \varphi_{match}(M)\land \neg \varphi_{+}(M,v_G)} \mathbf{X}^{|M|}
\end{eqnarray}
where $\varphi_{+}$ says that $v_G$ is incident to an edge of $M$. 
The definition of $m(G,\mathbf{X})$ in Equation (\ref{eq:matching-pol}) extends to $1$-labeled graphs $(G,v_G)$ 
naturally by ignoring the label on $v_G$. 
For a $1$-labeled graph $(G,v_G)$, 
$m(G,\mathbf{X})=m(G,v_G,\mathbf{X})$ and 
\begin{gather}
  m(G,v_G,\mathbf{X})=m^{+}(G,v_G,\mathbf{X})+m^{-}(G,v_G,\mathbf{X})\,.
\end{gather}

\begin{thm}
Let $\mathfrak{m}(G, v_G;\mathbf{X}) =(m^{+}(G,v_G,\mathbf{X}),m^{-}(G,v_G,\mathbf{X}), m(G,v_G,\mathbf{X}))$.
\begin{renumerate}
\item
 There exists a polynomial $Q\in \Z[\mathbf{X}_1,\mathbf{X}_2,\mathbf{Y}_1,\mathbf{Y}_2]$ such that 
 for any two $1$-labeled graphs $(G,v_G)$ and $(H,v_H)$,
 \begin{gather}
 \begin{array}{lll}
  m((G,v_G)\sqcup_1 (H,v_H),\mathbf{X}) &=& Q\Big(m^+(G,v_G,\mathbf{X}),m^-(G,v_G,\mathbf{X}),\\
  && m^+(H,v_H,\mathbf{X}),m^-(H,v_H,\mathbf{X})\Big)\,.
  \end{array}
 \end{gather}
\item
There exist a matrix $A \in \Q(\mathbf{X})^{(3 \times 3)}$
such that $Q$ can be written as the bilinear form
 \begin{gather}
  m((G,v_G)\sqcup_1 (H,v_H),\mathbf{X}) = \mathfrak{m}(G, v_G,\mathbf{X}) \cdot A \cdot \mathfrak{m}(H, v_H,\mathbf{X})^{tr}.
 \end{gather}
\end{renumerate}
\end{thm}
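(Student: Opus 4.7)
The plan is to exploit the simple structural fact that $(G,v_G)\sqcup_1(H,v_H)$ glues $G$ and $H$ at a single vertex $v$ (the identified image of $v_G$ and $v_H$), so that its edge set is the disjoint union $E(G)\sqcup E(H)$. Consequently every matching $M$ of the $1$-sum decomposes \emph{uniquely} as $M=M_G\cup M_H$ with $M_G\subseteq E(G)$ and $M_H\subseteq E(H)$ each a matching, subject to the sole extra constraint that at most one of $M_G, M_H$ covers the identified vertex $v$.

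For part (i), I would split the sum defining $m\bigl((G,v_G)\sqcup_1(H,v_H),\mathbf{X}\bigr)$ into three mutually exclusive and exhaustive cases depending on the status of $v$: either $v$ is uncovered in $M$, or $v$ is covered by an edge of $M_G$, or $v$ is covered by an edge of $M_H$. Using $|M|=|M_G|+|M_H|$ to factor $\mathbf{X}^{|M|}$ and distributing the sums over $M_G$ and $M_H$ independently, each case yields a product of an $m^\pm(G,v_G,\mathbf{X})$ with an $m^\pm(H,v_H,\mathbf{X})$:
\begin{multline*}
m\bigl((G,v_G)\sqcup_1(H,v_H),\mathbf{X}\bigr) = m^-(G,v_G,\mathbf{X})\,m^-(H,v_H,\mathbf{X}) \\
{} + m^+(G,v_G,\mathbf{X})\,m^-(H,v_H,\mathbf{X}) + m^-(G,v_G,\mathbf{X})\,m^+(H,v_H,\mathbf{X}).
\end{multline*}
Taking $Q(X_1,X_2,Y_1,Y_2):=X_2Y_2+X_1Y_2+X_2Y_1\in\Z[X_1,X_2,Y_1,Y_2]$ then witnesses (i) immediately.

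For part (ii), I would use the linear identity $m^-=m-m^+$ to rewrite the expression from (i) in the three-coordinate basis $\mathfrak{m}=(m^+,m^-,m)$. A short substitution shows the right-hand side collapses to $m(G,v_G,\mathbf{X})\,m(H,v_H,\mathbf{X})-m^+(G,v_G,\mathbf{X})\,m^+(H,v_H,\mathbf{X})$, which is manifestly a bilinear form in $\mathfrak{m}(G,v_G,\mathbf{X})$ and $\mathfrak{m}(H,v_H,\mathbf{X})$ realized by
$$
A=\begin{pmatrix} -1 & 0 & 0 \\ 0 & 0 & 0 \\ 0 & 0 & 1 \end{pmatrix}\in\Q(\mathbf{X})^{3\times 3}.
$$
Because $m^-=m-m^+$ is a linear dependency among the three coordinates, several choices of $A$ in fact work; the statement only asks for existence, so any such $A$ suffices.

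The only real hazard is in the case analysis for (i), where one must verify that the three cases are genuinely disjoint --- a matching cannot cover $v$ by both an $M_G$-edge and an $M_H$-edge simultaneously --- which is immediate from the definition of a matching. Beyond that, the argument is purely combinatorial and makes no appeal to the $\CMSOL$-machinery developed in Section~\ref{se:cmsol-pol-fv}; the theorem serves as a concrete warm-up illustrating the shape of the general Finite Rank Theorem~\ref{maintheorem}.
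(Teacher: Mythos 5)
Your proof is correct and takes essentially the same route as the paper: the same unique decomposition of a matching of the $1$-sum into $M_G\sqcup M_H$ with at most one part covering the identified vertex, the same three-case analysis, and the same polynomial $Q=\mathbf{X}_1\mathbf{Y}_2+\mathbf{X}_2\mathbf{Y}_1+\mathbf{X}_2\mathbf{Y}_2$. The only difference is the choice of $A$ in (ii) --- the paper simply reads off the three products as the $(0,1)$-matrix with ones in positions $(1,2)$, $(2,1)$, $(2,2)$ instead of invoking $m^-=m-m^+$ --- but, as you note, the statement only asks for existence and either matrix works.
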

\begin{proof}
(i):
A matching $M$ of $(G,v_G)\sqcup_1 (H,v_H)$ is a disjoint union of two matchings $M_G$ and $M_H$ of $G$ and $H$ respectively
 such that at most one of $M_G$ and $M_H$ is incident to $v_G$ respectively $v_H$.
 $m^{+}((G,v_G)\sqcup_1 (H,v_H))$ is the generating function of matchings $M_G\sqcup M_H$ such that 
 one of $M_G$ and $M_H$ is incident to $v_G$ respectively $V_H$.
 Analogously,
 $m^{-}((G,v_G)\sqcup_1 (H,v_H))$ is the generating function of matchings $M_G\sqcup M_H$ such that 
 both $M_G$ and $M_H$ are not incident to $v_G$ respectively $v_H$.  
 So we have:
\begin{eqnarray*}
m((G,v_G)\sqcup_1 (H,v_H),\mathbf{X}) &=& m^{+}(G,v_G,\mathbf{X}) \cdot m^{-}(H,v_H,\mathbf{X})+ \\ 
& & m^{-}(G,v_G,\mathbf{X}) \cdot m^{+}(H,v_H,\mathbf{X})+ \\
& & m^{-}(G,v_G,\mathbf{X}) \cdot m^{-}(H,v_H,\mathbf{X})\,.
\end{eqnarray*}
The theorem holds with the polynomial
\begin{gather}
 Q = \mathbf{X}_1 \mathbf{Y}_2 + \mathbf{X}_2 \mathbf{Y}_1 + \mathbf{X}_2 \mathbf{Y}_2 \,.
\end{gather}
(ii):
The matrix $A$ can be taken to be
\begin{gather}
A = \left( \begin{array}{ccc}
0 & 1 & 0  \\
1 & 1 & 0  \\
0 & 0 & 0  \\
\end{array} \right)
\end{gather}
\end{proof}
\begin{cor}
The connection matrix $M(\sqcup_1,m)$ has rank $2$. 
\end{cor}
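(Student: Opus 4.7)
The plan is to use the bilinear form of the preceding theorem directly. Part (ii) gives
\[
m((G,v_G)\sqcup_1 (H,v_H),\mathbf{X}) = \mathfrak{m}(G, v_G;\mathbf{X}) \cdot A \cdot \mathfrak{m}(H, v_H;\mathbf{X})^{tr},
\]
where the $3\times 3$ matrix $A$ displayed in the proof has its entire third row and third column equal to zero, so $A$ has rank exactly $2$ over $\mathbb{Q}(\mathbf{X})$. Therefore I would factor $A = B\cdot C$ with $B\in\mathbb{Q}(\mathbf{X})^{3\times 2}$ and $C\in\mathbb{Q}(\mathbf{X})^{2\times 3}$, which immediately expresses every entry of $M(\sqcup_1,m)$ as the inner product of a row vector depending only on $(G,v_G)$ (namely $\mathfrak{m}(G,v_G;\mathbf{X})\cdot B$) and a column vector depending only on $(H,v_H)$ (namely $C\cdot\mathfrak{m}(H,v_H;\mathbf{X})^{tr}$). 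This yields the upper bound $\mathrm{rank}\, M(\sqcup_1,m)\le 2$.

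For the matching lower bound, I would exhibit a concrete $2\times 2$ submatrix whose determinant is nonzero. Take $(G_1,v_1)$ to be the one-vertex graph $K_1$ with its only vertex labeled, and $(G_2,v_2)$ to be $K_2$ with one of its two vertices labeled. A direct computation gives $G_1\sqcup_1 G_1 = K_1$, $G_1\sqcup_1 G_2 = G_2\sqcup_1 G_1 = K_2$, and $G_2\sqcup_1 G_2 = P_3$, so the corresponding submatrix of $M(\sqcup_1,m)$ is
\[
\begin{pmatrix} 1 & 1+\mathbf{X} \\ 1+\mathbf{X} & 1+2\mathbf{X} \end{pmatrix},
\]
whose determinant is $-\mathbf{X}^2\neq 0$ in $\mathbb{Q}(\mathbf{X})$. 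Combining the two bounds gives rank exactly $2$.

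The reasoning is essentially a one-line consequence of the bilinear representation, so there is no serious obstacle; the only point requiring any care is that rank is taken over the field of rational functions $\mathbb{Q}(\mathbf{X})$, which is why the entries of $\mathfrak{m}$ (polynomials in $\mathbf{X}$) and the determinant $-\mathbf{X}^2$ are valid witnesses for the rank computation.
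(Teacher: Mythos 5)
Your proposal is correct, and for the upper bound it is essentially the paper's argument in different clothing: the paper simply observes that every row of $M(\sqcup_1,m)$ is spanned by the two rows corresponding to $m^+$ and $m^-$, which is exactly what your factorization $A=B\cdot C$ with $B\in\mathbb{Q}(\mathbf{X})^{3\times 2}$, $C\in\mathbb{Q}(\mathbf{X})^{2\times 3}$ encodes. Where you go beyond the paper is the lower bound. The paper's proof only establishes $\mathrm{rank}\,M(\sqcup_1,m)\le 2$ even though the corollary asserts rank exactly $2$; no nonsingular $2\times 2$ submatrix is exhibited there. Your witness is correct: $K_1\sqcup_1 K_1=K_1$, $K_1\sqcup_1 K_2=K_2$, $K_2\sqcup_1 K_2=P_3$, giving the submatrix
\[
\begin{pmatrix} 1 & 1+\mathbf{X} \\ 1+\mathbf{X} & 1+2\mathbf{X} \end{pmatrix}
\]
with determinant $(1+2\mathbf{X})-(1+\mathbf{X})^2=-\mathbf{X}^2\neq 0$ in $\mathbb{Q}(\mathbf{X})$, so the rank is at least $2$. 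In short, your proof is complete where the paper's is not; the only thing the two approaches "buy" differently is that your explicit submatrix closes the lower-bound gap, at the cost of a small concrete computation.
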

\begin{proof}
The connection matrix $M(\sqcup_1,)$ is spanned by two rows which correspond to $m^+$ and $m^-$.
Hence, the rank of $M(\sqcup_1,m)$ is a t most $2$. 
\end{proof}

\begin{rem}
The matrix $A$ turns out to be a $(0,1)$-matrix due to a good choice of auxiliary graph polynomials.
If we replace $\mathfrak{m}(G, v_G,\mathbf{X})$ by
\begin{gather}
\mathfrak{m}_1(G, v_G,\mathbf{X})= \mathfrak{m}(G, v_G,\mathbf{X}) \cdot B, 
\end{gather}
where $B$ is any $(3 \times 3)$-matrix over $\Q(\mathbf{X})$,
we get
 \begin{gather}
  m((G,v_G)\sqcup_1 (H,v_H),\mathbf{X}) = \mathfrak{m}_1(G, v_G,\mathbf{X}) \cdot A_1  \cdot \mathfrak{m}_1(H, v_H,\mathbf{X})^{tr}
 \end{gather}
with $A_1 = B \cdot A \cdot B^{tr}$  where $A_1$ can have arbitrary entries from $\Q(\mathbf{X})$.
\end{rem}

\subsection{Rich disjoint union}\label{subse:proof-rich}

Here we prove a Feferman-Vaught-type theorem for rich disjoint union and structures of 
some vocabulary $\tau$. We do this for $\cL(\tau)$-polynomials of the form
\begin{equation}\label{eq:p}
 p(\fM)= \sum_{U\subseteq M: \Omega(U)} \left(\prod_{c \in M: \phi_\mathbf{X}(c,U)} \mathbf{X}\right) \left( \prod_{c \in M: \phi_\mathbf{Y}(c,U)} \mathbf{Y}\right)
\end{equation}
where $\Omega,\phi_\mathbf{X},\phi_\mathbf{Y}$ are $\cL$-sentences over the appropriate expansion of $\tau$. 
Generalization to all $\cL(\tau)$-polynomials is not hard 
using the normal form lemma for $\cL$-polynomials, Lemma \ref{lem:normal-form}:

%----------------------------------------- Normal Form Lemma --------------------------------------------------

\begin{lem}[Normal Form Lemma, \cite{ar:KotekMakowskyZilber11}]
 \label{lem:normal-form} Let $\cL$ be a fragment of $\SOL$. 
  Let $p$  be a $\cL(\tau)$-polynomial. 
  Then there exist $s,t\in \mathbb{N}$, $\cL$-formulas
  $\Omega$,   and
  $\phi_1,\ldots,\phi_t$,
  and $X_1,\ldots,X_t\in \mathbf{X}\cup \mathbb{N}$
  such that 
 \begin{equation}
  p(\fM) = 
  \sum_{U_1,\ldots,U_{s},a_1,\ldots,a_r:\Omega(\bar{U},\bar{a})}\,\,
  \left(\prod_{c \in M: \phi_1(c,\bar{U},\bar{a})} {X}_1\right)
  \cdots
  \left(\prod_{c \in M: \phi_t(c,\bar{U},\bar{a})} {X}_t\right)
  \end{equation}
  where $\bar{U}=U_1,\ldots,U_s$ and $\bar{a}=a_1,\ldots,a_r$. 
\end{lem}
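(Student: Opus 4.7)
I will prove Lemma \ref{lem:normal-form} by structural induction on the definition of $\cL(\tau)$-polynomials (Definitions \ref{def:monomials-simple}--\ref{def:polynomials}). The key observation is that the grammar of $\cL(\tau)$-polynomials is highly restrictive: products only combine elementary products $\prod_{a:\phi(a)} X$ of \emph{simple} monomials, while sums only appear on top of polynomials. Consequently every polynomial has the syntactic shape ``nested sums on top of a finite product of elementary products''. The normal form is then obtained by flattening the nested sums into a single outer sum whose condition is the conjunction of the intermediate conditions (after renaming bound variables apart), while leaving the innermost product of elementary products in place.

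\textbf{Key steps.} First, the base cases: an elementary product $\prod_{a:\phi(a)} X$ (clause (ii) of Definition \ref{def:monomials}) is already in the required form with $s=r=0$, $\Omega=\top$, $t=1$, $X_1=X$, $\phi_1=\phi$; a bare simple monomial $X \in \mathbf{X} \cup \mathbb{N}$ is absorbed as a degenerate case, corresponding to an empty outer sum over $\Omega=\top$ with a single trivial factor. Second, the product step: given a finite product $m_1 \cdots m_k$ of monomials (clause (iii) of Definition \ref{def:monomials}), I apply the induction hypothesis to each $m_j$ to express it as a product of elementary products, and then concatenate all of these into a single product of elementary products, relabelling so the $X_i$ indices and defining formulas use disjoint names. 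Third, the sum steps: given $\sum_{\bar{a}:\phi(\bar{a})} t$ of Definition \ref{def:polynomials}(ii), apply the IH to $t$ to obtain a normal form $\sum_{\bar{U}',\bar{a}':\Omega'(\bar{U}',\bar{a}')} \prod_i \prod_{c:\phi_i'(c,\bar{U}',\bar{a}')} X_i$ with $\bar{U}',\bar{a}'$ renamed disjoint from $\bar{a}$, and rewrite the whole expression as $\sum_{\bar{a},\bar{U}',\bar{a}':\phi(\bar{a})\land\Omega'} \prod_i \prod_{c:\phi_i'} X_i$. The case $\sum_{\bar{R}:\phi(\bar{R})} t$ of clause (iii) is entirely parallel, just appending the $R_i$ to the tuple $\bar{U}$ of second-order variables. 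Closure of $\cL$ under Boolean combinations guarantees that $\phi(\bar{a}) \land \Omega'$ is again in $\cL$.

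\textbf{Main obstacle.} The logical content of the argument is light; the genuine care lies in the bookkeeping. Bound variables in nested sums must be systematically renamed pairwise disjoint before their guard formulas are conjoined, since otherwise substitution would silently alter meaning. A second subtlety is that the tuple $\bar{U}$ produced when eliminating a clause-(iii) sum of Definition \ref{def:polynomials} may need to contain second-order variables of arbitrary (not necessarily unary) arities, matching the arities of the $R_i$ originally quantified; the lemma merely speaks of ``$\cL$-formulas'' over a tuple of second-order variables, so no encoding of $k$-ary relations by unary ones is required, which is what lets the induction go through cleanly. The one genuinely awkward point is the degenerate base case of a bare simple monomial viewed as a polynomial in its own right, which is vacuous in practice and handled by a trivial notational convention; once this is acknowledged, the rest of the induction is mechanical.
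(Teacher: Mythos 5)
Your proof is correct and follows essentially the same route as the paper's (very terse) argument: the monomials are already a product of elementary factors, and closure of the fragment under conjunction lets you flatten the nested sums into a single guarded sum after renaming bound variables apart. The degenerate case of a bare simple monomial, which you flag as a notational convention, is glossed over in exactly the same way by the paper, so nothing is missing relative to its standard of rigor.
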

The monomials of $\cL$-polynomials are already in the desired form. To prove Lemma \ref{lem:normal-form}, 
the main property of $\L$ that we use here is closure under conjunction which allows us to eliminate
nested sums. 

%----------------------------------------- Normal Form Lemma END --------------------------------------------------

%----------------------------------------- Hintikka sentences ------------------------------------------------------

The following is well known \cite{bk:FMT}:
\begin{prop}[Hintikka sentences]
\label{prop:Theta}
Let $\L$ be a fragment of $\SOL$. 
Let $\tau$ be a vocabulary.
For every $q\in\N$ there is a finite set $\HinSet_{\tau,q}(\tau)$
%\[
%\HinSet_{\tau,q}(\tau) = \{\hinfor_{1},\ldots,\hinfor_{\alpha}\}
%\]
of $\L(\tau)$-sentences of quantifier rank $q$ 
such that:
\begin{renumerate}
\item
every $\hinfor \in \HinSet_{\tau,q}(\tau)$ has a model;
\item
the conjunction of any two sentences $\hinfor_1, \hinfor_2 \in \HinSet_{\tau,q}(\tau)$
is not satisfiable;
\item
every $\L(\tau)$ sentence $\psi$ of quantifier rank at most $q$ is equivalent to exactly one 
finite disjunction of sentences in $\HinSet_{\tau,q}(\tau)$;
\item
every finite $\tau$-structure $\fA$ satisfies exactly one sentence $\theta_{\tau,q}(\fA)$ of $\HinSet_{\tau,q}(\tau)$. 
\end{renumerate}
\end{prop}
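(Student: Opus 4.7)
The plan is to construct the Hintikka sentences as complete $\L_q(\tau)$-descriptions of finite $\tau$-structures. The decisive ingredient is clause (iii) of the definition of a fragment, which says that the set of $\L_q(\tau)$-sentences is, up to logical equivalence, finite. Fix representatives $\psi_1,\ldots,\psi_N$ of the equivalence classes of $\L_q(\tau)$-sentences. For each finite $\tau$-structure $\fA$ define
\[
\theta_{\tau,q}(\fA) \;=\; \bigwedge_{i:\,\fA\models\psi_i}\psi_i \;\wedge\; \bigwedge_{i:\,\fA\not\models\psi_i}\neg\psi_i.
\]
Because $\L(\tau)$ is closed under the Boolean operations and because quantifier rank is sub-additive under Boolean combinations (clause (ii) of the definition of a fragment), $\theta_{\tau,q}(\fA)$ is an $\L_q(\tau)$-sentence. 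Let $\HinSet_{\tau,q}(\tau)$ be the collection of all the sentences $\theta_{\tau,q}(\fA)$ as $\fA$ ranges over finite $\tau$-structures, taken up to logical equivalence; since there are only $2^N$ possible sign patterns on $\psi_1,\ldots,\psi_N$, the set $\HinSet_{\tau,q}(\tau)$ is finite.

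Next I would verify the four properties in order. For (iv), the construction gives $\fA\models\theta_{\tau,q}(\fA)$, and uniqueness follows from (ii). For (i), every member of $\HinSet_{\tau,q}(\tau)$ comes from a witnessing finite structure, so it is satisfiable. For (ii), two distinct Hintikka sentences $\theta_{\tau,q}(\fA_1)$ and $\theta_{\tau,q}(\fA_2)$ must disagree on some $\psi_i$, and then their conjunction entails $\psi_i\wedge\neg\psi_i$. For (iii), given any $\L_q(\tau)$-sentence $\psi$, let $S=\{\theta\in\HinSet_{\tau,q}(\tau):\theta\wedge\psi\text{ is satisfiable}\}$. Every model of $\psi$ satisfies exactly one $\theta\in S$ by (iv) and (ii), and conversely any model of any $\theta\in S$ satisfies $\psi$ because $\theta$ already decides the equivalence class of $\psi$. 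Hence $\psi$ is equivalent to $\bigvee_{\theta\in S}\theta$, and uniqueness of this representation is forced by pairwise unsatisfiability.

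The only genuinely delicate point is the bookkeeping on quantifier rank: one must see that the big Boolean combinations defining the $\theta_{\tau,q}(\fA)$ stay within $\L_q(\tau)$ rather than rising to rank $q+1$. This is precisely what the sub-additivity clause in the definition of a fragment supplies, so no extra work is needed. A second minor issue is that $\HinSet_{\tau,q}(\tau)$ is defined as a set of equivalence classes and all four properties need to be read accordingly; since the entire argument takes place up to logical equivalence this is purely cosmetic. Everything else reduces to straightforward propositional reasoning about the sign patterns of $\psi_1,\ldots,\psi_N$.
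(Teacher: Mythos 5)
Your construction is the standard one and matches exactly what the paper intends: the paper gives no proof, citing the result as well known, and remarks only that it relies on $\L_q(\tau)$ being finite up to logical equivalence and closed under Boolean operations --- precisely the two ingredients your maximal-conjunction argument uses, with the quantifier-rank bookkeeping handled by the max-rule for Boolean combinations in the definition of a fragment. The proof is correct; no further comment is needed.
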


\noindent Proposition \ref{prop:Theta} uses that $\L$ is fragment and therefore $\L_q(\tau)$ is finite up to equivalence and is closed under Boolean operations.

%----------------------------------------- Hintikka sentences END ------------------------------------------------------

%----------------------------------------- Rich disjoint union ---------------------------------------------------------

% It is convenient to consider the rich disjoint union  $\fA \sqcup_{rich} \fB$  of 
% two structures $\fA$ and $\fB$ with vocabularies $\tau_1$ and $\tau_2$ respectively
% as the structure whose universe is $A\sqcup B$
% and which contains all of the relations and constants in both $\fA$ and $\fB$ as well as unary predicates for $A$ and $B$. 
% The vocabulary of $\fA \sqcup_{rich} \fB$ is the disjoint union of $\tau_1$ and $\tau_2$. 
% %Note that the rich disjoint union of two structures with disjoint vocabularies is $\CMSOL$-smooth. 

The following proposition reformulating $\L$-smoothness follows directly from the definition of $\L$-smoothness in Section \ref{se:framework} and Proposition \ref{prop:Theta}:
\begin{prop} \label{prop:reform}
Let $\L$ be a fragment of $\SOL$ such that $\sqcup_{rich}$ is $\L$-smooth. 
Let $\tau_1$ and $\tau_2$ be a vocabularies and let $\sigma$ be the vocabulary of the rich disjoint union of a $\tau_1$-structure and $\tau_2$-structure.
There exists a function 
$g_{\tau_1,\tau_2}:\HinSet_{\tau_1,q} \times \HinSet_{\tau_2,q}\to \HinSet_{\sigma,q}$
such that for every two $\tau_1$-structures $\fA$ and $\fB$, 
\[
\theta_{\sigma,q}(\fA \sqcup_{rich} \fB) = g_{\tau_1,\tau_2}(\theta_{\tau,q}(\fA),\theta_{\tau_2}(\fB))\,. 
\]
We write $g_\tau$ rather than $g_{\tau_1,\tau_2}$ when $\tau = \tau_1 =\tau_2$. 
\end{prop}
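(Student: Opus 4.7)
The plan is to unpack the definition of $\L$-smoothness and combine it with the four properties of Hintikka sentences listed in Proposition \ref{prop:Theta}. The key observation is that Hintikka sentences partition finite $\tau$-structures into precisely the $\equiv_q^{\cL}$-equivalence classes: by (iv) every structure satisfies exactly one $\hinfor \in \HinSet_{\tau,q}$, and by (iii) two structures $\fA, \fB$ satisfy the same Hintikka sentence iff $\fA \equiv_q^{\cL} \fB$ (since every $\cL_q(\tau)$-sentence is equivalent to a disjunction of Hintikka sentences, any sentence satisfied by one is satisfied by the other).

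First I would define the function $g_{\tau_1,\tau_2}$. Given $\hinfor_1 \in \HinSet_{\tau_1,q}$ and $\hinfor_2 \in \HinSet_{\tau_2,q}$, both are satisfiable by property (i) of Proposition \ref{prop:Theta}, so I can pick witnesses $\fA \models \hinfor_1$ and $\fB \models \hinfor_2$ and set
\[
g_{\tau_1,\tau_2}(\hinfor_1,\hinfor_2) = \theta_{\sigma,q}(\fA \sqcup_{rich} \fB).
\]
The non-trivial step is to show this definition does not depend on the choice of witnesses. Suppose $\fA' \models \hinfor_1$ and $\fB' \models \hinfor_2$ are alternative witnesses. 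Then $\fA \equiv_q^{\cL} \fA'$ and $\fB \equiv_q^{\cL} \fB'$ by the correspondence above. By Proposition \ref{p:fol-1}(i), the operation $\sqcup_{rich}$ is $\cL$-smooth, so
\[
\fA \sqcup_{rich} \fB \,\equiv_q^{\cL}\, \fA' \sqcup_{rich} \fB'.
\]
Applying the correspondence again in the reverse direction, $\theta_{\sigma,q}(\fA \sqcup_{rich} \fB) = \theta_{\sigma,q}(\fA' \sqcup_{rich} \fB')$, so $g_{\tau_1,\tau_2}$ is well defined.

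Once well-definedness is established, the displayed equation in the statement follows immediately: for any $\fA$ and $\fB$, take them as their own witnesses, so that by construction $g_{\tau_1,\tau_2}(\theta_{\tau_1,q}(\fA),\theta_{\tau_2,q}(\fB)) = \theta_{\sigma,q}(\fA \sqcup_{rich} \fB)$. The main (indeed only) obstacle is the well-definedness argument, but it is essentially a direct translation of $\cL$-smoothness from the language of $\equiv_q^{\cL}$ to the language of Hintikka sentences; no new machinery is required beyond Proposition \ref{prop:Theta} and the smoothness of $\sqcup_{rich}$ that is already in hand from Proposition \ref{p:fol-1}(i).
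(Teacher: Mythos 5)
Your argument is correct and is exactly the intended one: the paper states only that the proposition ``follows directly from the definition of $\L$-smoothness and Proposition \ref{prop:Theta}'', and your witness-based definition of $g_{\tau_1,\tau_2}$ together with the well-definedness check via smoothness fills in precisely those details. (One small remark: the smoothness of $\sqcup_{rich}$ is already a hypothesis of the proposition for the arbitrary fragment $\L$, so you need not invoke Proposition \ref{p:fol-1}(i), which only establishes it for the specific fragments $\FOL$, $\MSOL$ and $\CMSOL$.)
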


%----------------------------------------- Rich disjoint union END ---------------------------------------------------------

%Let $\tau$ be a vocabulary and let $\sigma$ be the vocabulary of the rich disjoint union of  two $\tau$-structures. 
In the proof of Theorem \ref{th:fv-disjoint-union} we will
use $\sigma$ with various subscripts to denote vocabularies 
which correspond to structures obtained as rich disjoint unions, and
$\tau$ with various subscripts to denote vocabularies of pre-union structures). 

%Given a $\CMSOL(\tau)$-polynomial $p$, $p$ extends naturally to a $\CMSOL(\sigma)$-polynomial,
%by ignoring 
\begin{thm}[Bilinear Reduction Theorem]\label{th:fv-disjoint-union}
Let $\L$ be a tame fragment of $\SOL$ such that $\sqcup_{rich}$ is $\L$-smooth. 
Let $\tau$ be a vocabulary and let $\sigma$ be the vocabulary of the rich disjoint union of $\tau$-structures.
Let $p$ be a $\L(\sigma)$-polynomial in the form of Equation (\ref{eq:p}). 
There exist $t\in \mathbb{N}$, $\L(\tau)$-polynomials $p_1,\ldots,p_{2t}$ 
and a polynomial $Q\in \Z[{w}_1,\ldots, {w}_t,{v}_1,\ldots,{v}_t]$ 
such that
\begin{renumerate}
 \item
  for any two $\tau$-structures $\fA$ and $\fB$
 \[
  p(\fA \sqcup_{rich} \fB) = Q\left(p_1(\fA),\ldots,p_t(\fA),p_{t+1}(\fB),\ldots,p_{2t}(\fB)
\right)\,.
 \]
 \item
  $Q(\bar{w}, \bar{v})$ can be written as a bilinear form
   $$\bar{w}^{tr} M \bar{v}$$
where the matrix $M$ has as entries polynomials with coefficients from
$\Z$ and is independent of the $\tau$-structures. 
\end{renumerate}
\end{thm}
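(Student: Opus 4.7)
The plan is to split the sum in~(\ref{eq:p}) over $U\subseteq A\cup B$ into a double sum over $U_A = U\cap A$ and $U_B = U\cap B$, classify the pairs $(\fA,U_A)$ and $(\fB,U_B)$ by their Hintikka types, and use the $\cL$-smoothness of $\sqcup_{rich}$ to re-express the truth values of $\Omega$, $\phi_\mathbf{X}$ and $\phi_\mathbf{Y}$ in the union as combinatorial data read off separately from each half.

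First I would fix $q$ exceeding the quantifier ranks of $\Omega$, $\phi_\mathbf{X}$ and $\phi_\mathbf{Y}$, let $\tau' = \tau\cup\{U\}$ (with $U$ a unary relation symbol) and $\tau'' = \tau'\cup\{c\}$ (with $c$ a constant symbol), and invoke Proposition~\ref{prop:Theta} to enumerate the Hintikka sets $\HinSet_{\tau',q} = \{\eta_1,\ldots,\eta_r\}$ and $\HinSet_{\tau'',q} = \{\hat\eta_1,\ldots,\hat\eta_s\}$. By Proposition~\ref{prop:reform} applied to $\tau'$-structures, the $\cL_q$-type of the expanded union depends only on the types $\eta_i$ of $(\fA,U_A)$ and $\eta_j$ of $(\fB,U_B)$, which yields a $\{0,1\}$-matrix $B_\Omega[i,j]$ recording whether $\Omega$ holds in $\fA\sqcup_{rich}\fB$. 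Applying the same proposition with one side expanded to a $\tau''$-structure and the other kept in $\tau'$ produces, for $c\in A$, flags $A_\mathbf{X}[k,j], A_\mathbf{Y}[k,j]\in\{0,1\}$ that determine whether $\phi_\mathbf{X}(c,U)$, respectively $\phi_\mathbf{Y}(c,U)$, holds whenever $(\fA,U_A,c)$ has type $\hat\eta_k$ and $(\fB,U_B)$ has type $\eta_j$; symmetric flags $A'_\mathbf{X}[i,k], A'_\mathbf{Y}[i,k]$ handle the case $c\in B$.

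Next I would define, for every pair $(i,j)\in[r]\times[r]$, an auxiliary $\cL(\tau)$-polynomial $p_{i,j}^A(\fA)$ of the form~(\ref{eq:p}) in which the outer sum is restricted to $U_A\subseteq A$ satisfying $\eta_i$, the $\mathbf{X}$-product is over those $c\in A$ whose type satisfies $\bigvee_{k:\,A_\mathbf{X}[k,j]=1}\hat\eta_k(U_A,c)$, and analogously for the $\mathbf{Y}$-product; symmetrically define $p_{i,j}^B(\fB)$ using the primed flags. These are bona-fide $\cL(\tau)$-polynomials because Hintikka sentences of $\tau'$ and $\tau''$ correspond, upon treating $U$ as a free monadic variable and $c$ as a free individual variable, to $\cL(\tau)$-formulas. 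A routine expansion of the products then yields
\[
  p(\fA\sqcup_{rich}\fB) \;=\; \sum_{i,j=1}^{r} B_\Omega[i,j]\,p_{i,j}^A(\fA)\,p_{j,i}^B(\fB),
\]
which is the desired representation: take $t = r^2$, and let $Q$ be the bilinear form whose matrix $M$ has entry $B_\Omega[i,j]$ at position $\bigl((i,j),(j,i)\bigr)$ and zeros elsewhere, proving both (i) and (ii).

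The main obstacle is bookkeeping: one must justify that $\cL$-smoothness of $\sqcup_{rich}$ transfers to the expanded vocabularies $\tau'$ and $\tau''$ (using the convention for disjoint unions with constants recalled in the footnote of Section~\ref{se:framework}), and verify that a quantifier-rank-$q$ Hintikka sentence of $\tau''$ indeed becomes an $\cL(\tau)$-formula in the free variables $U$ and $c$. Once this is done, the generalization from the restricted form~(\ref{eq:p}) to an arbitrary $\cL(\tau)$-polynomial follows from the Normal Form Lemma~\ref{lem:normal-form} by carrying longer tuples of Hintikka types through the same computation.
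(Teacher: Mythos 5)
Your proposal is correct and follows essentially the same route as the paper's proof: the split of $U$ into $U\cap A$ and $U\cap B$, the use of Proposition~\ref{prop:reform} on the expanded vocabularies (the paper's $\tau_{U}$ and $\tau_{Uc}$ are your $\tau'$ and $\tau''$), the disjunction of Hintikka sentences $\bigvee_{k}\hat\eta_k$ replacing $\phi_\mathbf{X}$ (the paper's $\psi_{\mathbf{X},1,\alpha}$), and the resulting bilinear form indexed by pairs of Hintikka types with a $(0,1)$-matrix. The bookkeeping concerns you flag at the end are exactly the points the paper handles via its explicit vocabularies $\tau_U,\tau_{Uc},\sigma_U,\sigma_{Uc}$ and the footnote on disjoint unions with constants.
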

\begin{rem}
In the following proof, the graph polynomials $p_i$ as well as $p$ all have the same quantifier rank $qr(p)$ and hence
$t$ can be taken as the number $\beta(qr(p))$ of $\L(\sigma)$-polynomials of quantifier rank at most $qr(p)$,
which is finite.
\end{rem}

\proof
Let $q$ be the maximum quantifier rank of $\Omega$, $\phi_\mathbf{X}$ and $\phi_\mathbf{Y}$. 

By definition of $p$, $p(\fA \sqcup_{rich} \fB)$ is given by 
\begin{eqnarray}
  & &\sum_{U\subseteq A\sqcup B: \Omega(U)} 
\left(\prod_{c \in A \sqcup B: \phi_\mathbf{X}(c,U)} \mathbf{X}\right)
\left(\prod_{c \in A\sqcup B:  \phi_\mathbf{Y}(c,U)} \mathbf{Y}\right) = \label{eq:two-prods} \\
  & &\sum_{U\subseteq A\sqcup B: \Omega(U)} 
         \left(\prod_{c \in A: \phi_\mathbf{X}(c,U)} \mathbf{X}\right) \left(\prod_{c \in B: \phi_\mathbf{X}(c,U)} \mathbf{X}\right)
	 \left(\prod_{c \in A: \phi_\mathbf{Y}(c,U)} \mathbf{Y}\right) \left(\prod_{c \in B: \phi_\mathbf{Y}(c,U)} \mathbf{Y}\right) 
	 \notag
\end{eqnarray}
Consider the summation in Equation (\ref{eq:two-prods}). It is a sum over all 
$U\subseteq A\sqcup B$ such that $\left\langle \fA \sqcup_{rich} \fB,U \right\rangle\models \Omega$.
There is a unique partition $(U\cap A) \sqcup (U\cap B) = U$.
% Let $\Omega' \in \L_q$ be obtained from $\Omega$ by subsituting atomic formulas of the form $U(x)$, $x$ a first order variable, with $U_A(x)\lor U_B(x)$. We use here 
% that $\L$ is a tame fragment of $\SOL$ and therefore contains the atomic formulas, is closed under disjunction, and is closed under $\L$-transductions. 
% The formulas $\Omega$ and $\Omega'$ have slightly different vocabularies:
% $\Omega$ has one unary relation symbol for $U$, whereas $\Omega'$ has two unary relation symbols for $U_A$ and $U_B$.
We have that $\left\langle \fA \sqcup_{rich} \fB,U \right\rangle\models \Omega$
iff $\left\langle \fA, U\cap A \right\rangle \sqcup_{rich} \left\langle \fB,U\cap B \right\rangle\models \Omega$
by the definition of rich disjoint union. 

Let $\tau_{U}$ be the vocabulary of $\left\langle \fA, U\cap A \right\rangle$ and $\left\langle \fB, U\cap B \right\rangle$,
and let $\sigma_{U}$ be the vocabulary of 
$\left\langle \fA, U\cap A \right\rangle \sqcup_{rich} \left\langle \fB,U\cap B \right\rangle$. 
By Proposition \ref{prop:reform}, there exists $g_{\tau_{U}}$ such that 
\[
\theta_{\sigma_{U},q}(\left\langle \fA, U\cap A \right\rangle \sqcup_{rich} \left\langle \fB, U\cap B \right\rangle ) 
= g_{\tau_{U}}(\theta_{\tau_{U},q}(\left\langle \fA, U\cap A \right\rangle),\theta_{\tau_{U},q}(\left\langle \fB ,U\cap B \right\rangle))
\]
Hence we can write: \footnote{The notation $\models$ in $g_{\tau_{U}}(\theta_1,\theta_2)\models \Omega$ in the following formula denotes that
$g_{\tau_{U}}(\theta_1,\theta_2)$ entails $\Omega$. (Note that the notation $\models$ is usually used in this paper in expressions 
such as $\fA\models \phi$, where $\models$
denotes that the structure $\fA$ satisfies $\phi$.) }
\begin{gather} \label{eq:with-Theta}
\begin{array}{lll}
\displaystyle{
p(\fA \sqcup_{rich} \fB) 
   }& = & \displaystyle{\sum_{\substack{(\theta_1,\theta_2)\in \HinSet_{\tau_{U},q}\times\HinSet_{\tau_{U},q}:\\ g_{\tau_{U}}(\theta_1,\theta_2)\models \Omega}}
 \sum_{\substack{ U = U_1 \cup U_2
                  \\ U_1\subseteq A :\left\langle \fA, U_1\right\rangle \models \theta_1
                  \\ U_2\subseteq B :\left\langle \fB, U_2 \right\rangle \models \theta_2
                }
       }}
       \left(\prod_{c \in A: \phi_\mathbf{X}(c,U)} \mathbf{X}\right)
       \\
       & & \ \ \ \ \ \displaystyle{
\left(\prod_{c \in B: \phi_\mathbf{X}(c,U)} \mathbf{X}\right)
\left(\prod_{c \in A: \phi_\mathbf{Y}(c,U)} \mathbf{Y}\right)
\left(\prod_{c \in B: \phi_\mathbf{Y}(c,U)} \mathbf{Y}\right)
}
\end{array}
\end{gather}

Now consider e.g. the product
 $\prod_{c \in A: \phi_\mathbf{X}(c,U)} \mathbf{X}$. It is a product over all $c\in A$ such that 
\[
\left\langle \fA \sqcup_{rich} \fB,c,U \right\rangle\models \phi_\mathbf{X}\,.
\]
Again this condition can be written as 
a condition on a rich disjoint union of two structures:
$\left\langle \fA \sqcup_{rich} \fB,c,U \right\rangle\models \phi_\mathbf{X}$ iff 
$\left\langle \fA, U_1, c \right\rangle \sqcup_{rich} \left\langle \fB,U_2 \right\rangle\models \phi_\mathbf{X}$.

Let $\tau_{Uc}$ and $\sigma_{Uc}$ be the vocabularies of 
$\left\langle \fA, U_1, c \right\rangle$ and
$\left\langle \fA, U_1, c \right\rangle \sqcup_{rich} \left\langle \fB,U_2 \right\rangle$
respectively. 
The vocabulary of $\left\langle \fB,U_2 \right\rangle$ is $\tau_{U}$. 

Using Proposition \ref{prop:reform} there exists a function $g_{\tau_{Uc},\tau_{U}}$ such that 
\begin{eqnarray*}
 \theta_{\sigma_{Uc},q}(\left\langle \fA, U_1, c \right\rangle \sqcup_{rich} \left\langle \fB,U_2 \right\rangle) &=&
 g_{\tau_{Uc},\tau_{U}}(\theta_{\tau_{Uc},q}(\left\langle \fA, U_1, c \right\rangle),
\theta_{\tau_{U},q}(\left\langle \fB,U_2 \right\rangle))
\,.
\end{eqnarray*}
For every $\alpha\in\L_q(\tau_{U})$, let 
\[
\psi_{\mathbf{X},1,\alpha} = \bigvee_{\substack{\theta \in \HinSet_{\tau_{Uc},q}:\\
g_{\tau_{Uc},\tau_{U}}(\theta,\alpha)\models \phi_\mathbf{X} } } \theta
\]
$\psi_{\mathbf{X},1,\alpha}$ is an $\L_q$-formula. 

If $\left\langle \fB, U_2 \right\rangle \models \alpha$, then
\[
\prod_{\substack{c \in A:\\
\left\langle \fA \sqcup_{rich} \fB,c,U \right\rangle\models \phi_\mathbf{X}}} \mathbf{X} = 
\prod_{\substack{c \in A:\\
 \left\langle \fA, U_1, c \right\rangle \models \psi_{\mathbf{X},1,\alpha}}} \mathbf{X} 
\]
Similarly, there exist $\L_q$-formulas $\psi_{\mathbf{Y},1,\alpha}$, $\psi_{\mathbf{X},2,\alpha}$ and $\psi_{\mathbf{Y},2,\alpha}$ 
such that
if $\left\langle \fB, U_2 \right\rangle \models \alpha$, then
\begin{eqnarray*}
\prod_{\substack{c \in A:\\
\left\langle \fA \sqcup_{rich} \fB,c,U \right\rangle\models \phi_\mathbf{Y}}} \mathbf{Y} &=& 
\prod_{\substack{c \in A:\\
 \left\langle \fA, U_1, c \right\rangle \models \psi_{\mathbf{Y},1,\alpha}}} \mathbf{Y} 
\end{eqnarray*}
and if $\left\langle \fA, U_1 \right\rangle \models \alpha$, then
\begin{eqnarray*}
\prod_{\substack{c \in B:\\
\left\langle \fA \sqcup_{rich} \fB,c,U \right\rangle\models \phi_\mathbf{X}}} \mathbf{X} &=& 
\prod_{\substack{c \in B:\\
 \left\langle \fB, U_2, c \right\rangle \models \psi_{\mathbf{X},2,\alpha}}} \mathbf{X} 
\\
\prod_{\substack{c \in B:\\
\left\langle \fA \sqcup_{rich} \fB,c,U \right\rangle\models \phi_\mathbf{Y}}} \mathbf{Y} &=& 
\prod_{\substack{c \in B:\\
 \left\langle \fB, U_2, c \right\rangle \models \psi_{\mathbf{Y},2,\alpha}}} \mathbf{Y} \,.
\end{eqnarray*}

For every two $\L(\tau_{Uc})$-sentence $\varphi_{\mathbf{X}}$ and $\varphi_{\mathbf{Y}}$ and 
every Hintikka sentence $\varphi\in \HinSet_{\tau_{U},q}$,
let $\tp_{\varphi_{\mathbf{X}},\varphi_{\mathbf{Y}},\varphi}$ be the following $\L(\tau)$-polynomial:
\[
\tp_{\varphi_{\mathbf{X}},\varphi_{\mathbf{Y}},\varphi}(\fM)=
\sum_{U_\fM\subseteq M :\left\langle \fA, U_\fM \right\rangle \models \varphi }
\prod_{\substack{c \in M:\\
 \left\langle \fM, U_\fM, c \right\rangle \models \varphi_{\mathbf{X}}}} \mathbf{X}
\prod_{\substack{c \in M:\\
 \left\langle \fM, U_\fM, c \right\rangle \models \varphi_{\mathbf{Y}}}} \mathbf{Y}  
\]
Note that $\tp_{\varphi_{\mathbf{X}},\varphi_{\mathbf{Y}},\varphi}$ is a $\L$-polynomial over
the vocabulary $\tau$ with quantifier rank $q$. 
Then using Eq. (\ref{eq:with-Theta}) we have
\begin{eqnarray*}
  p(\fA \sqcup_{rich} \fB) 
   & = & \sum_{\substack{(\theta_1,\theta_2)\in \HinSet_{\tau_{U},q}\times\HinSet_{\tau_{U},q}:\\ 
		g_{\tau_{U}}(\theta_1,\theta_2)\models \Omega}}
         \tp_{\psi_{\mathbf{X},1,\theta_2},\psi_{\mathbf{Y},1,\theta_2},\theta_1}(\fA) 
	 \cdot \tp_{\psi_{\mathbf{X},2,\theta_1},\psi_{\mathbf{Y},2,\theta_1},\theta_2}(\fB)\,.
\end{eqnarray*}

Let $t=|\HinSet_{\tau_{U},q} \times \HinSet_{\tau_{U},q}|$ and
let $\pi:\{1,\ldots,t\}\to \HinSet_{\tau_{U},q} \times \HinSet_{\tau_{U},q}$
be a bijection.
For each $i=1,\ldots,t$, denote $\pi(i)=(\pi_1(i),\pi_2(i))$ and let
$p_i = \tp_{\psi_{\mathbf{X},1,\pi_2(i)},\psi_{\mathbf{Y},1,\pi_2(i)},\pi_1(i)}$ and 
$p_{i+t} = \tp_{\psi_{\mathbf{X},2,\pi_1(i)},\psi_{\mathbf{Y},2,\pi_1},\pi_2(i)}$. 
We set
\[
 Q(w_1,\ldots,w_t,v_1,\ldots,v_t) = 
\sum_{\substack{(\theta_1,\theta_2) \in \HinSet_{\tau_{U},q} \times \HinSet_{\tau_{U},q}:\\ g_{\tau_{U}}(\theta_1,\theta_2)\models \Omega}}
w_{\pi^{-1}(\theta_1,\theta_2)} \cdot v_{\pi^{-1}(\theta_2,\theta_1)}
\]
and the theorem follows.
The matrix $M=(m_{ij})$ is given explicitly as
\[
 m_{ij}=
 \begin{cases} 
  1, & \pi_1(i)=\pi_2(j)\mbox{, }\pi_2(i)=\pi_1(j)\mbox{, and }g_{\tau_{U}}(\pi(i))\models \Omega \\
  0, & \mbox{otherwise.}
 \end{cases}\eqno{\qEd}
\]

\begin{rem} If the statement of Theorem \ref{th:fv-disjoint-union} is changed such that
instead of rich disjoint union we take simple disjoint union, and correspondingly, 
$p$ has vocabulary $\tau$, then the theorem holds as well. Additionally,  there are polynomials $Q_1,\ldots,Q_{2t} \in \Z[w_1,\ldots,w_t,v_1,\ldots,v_t]$
such that 
\[
p_i(\fA \sqcup \fB) = Q_i\left(p_1(\fA),\ldots,p_t(\fA),p_{t+1}(\fB),\ldots,p_{2t}(\fB)\right)
\]
and $Q_i$ can be written in bilinear form. Here we use the fact that the number
of $\L(\tau)$-polynomials with a fixed quantifier rank bound on their formulas and any fixed set of indeterminates  is finite. This
follows from the fact that
the same is true for the number of formulas 
of any  fixed vocabulary and quantifier rank. 
\end{rem}

As a consequnce of Theorem \ref{th:fv-disjoint-union} we have:
\begin{thm}
Let $\cL$ be a tame fragment of $\SOL$ such that $\sqcup_{rich}$ is $\L$-smooth.
Let $p$ be a $\L(\sigma)$-polynomial in the form of Equation (\ref{eq:p}).  
Then the connection matrix $M(\sqcup_{rich},p)$ 
has finite rank over the field of rational functions with indeterminates $\mathbf{X}$ and $\mathbf{Y}$. 
\end{thm}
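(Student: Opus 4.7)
The plan is to derive this finite-rank statement directly from the Bilinear Reduction Theorem (Theorem \ref{th:fv-disjoint-union}), which has already done all of the model-theoretic work. First I would invoke that theorem with the given $p$, obtaining an integer $t$, $\L(\tau)$-polynomials $p_1,\ldots,p_{2t}$, and a fixed $t\times t$ matrix $M$ with entries in $\Z$ (hence in $\Q(\mathbf{X},\mathbf{Y})$) such that, for all $\tau$-structures $\fA,\fB$,
\[
p(\fA \sqcup_{rich} \fB) \;=\; \bigl(p_1(\fA),\ldots,p_t(\fA)\bigr)\, M\, \bigl(p_{t+1}(\fB),\ldots,p_{2t}(\fB)\bigr)^{tr}.
\]

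The second step is to read this bilinear identity as a rank factorization of the connection matrix over the field $\Q(\mathbf{X},\mathbf{Y})$. I would introduce two infinite matrices: $P_L$, indexed by $\tau$-structures in the rows and by $\{1,\ldots,t\}$ in the columns, with $(P_L)_{\fA,i} = p_i(\fA)$; and $P_R$, indexed by $\{1,\ldots,t\}$ in the rows and by $\tau$-structures in the columns, with $(P_R)_{i,\fB} = p_{t+i}(\fB)$. Then the identity above is precisely the matrix equation
\[
M(\sqcup_{rich},p) \;=\; P_L \cdot M \cdot P_R,
\]
where the product makes sense entrywise because $M$ is finite and $P_L$, $P_R$ have only $t$ columns, respectively rows.

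The third step is the rank estimate: any product of matrices of the form (infinite $\times$ $t$) $\cdot$ ($t \times t$) $\cdot$ ($t \times$ infinite) has rank at most $t$ over $\Q(\mathbf{X},\mathbf{Y})$, because every row of the product is a $\Q(\mathbf{X},\mathbf{Y})$-linear combination of the $t$ rows of $M \cdot P_R$. Thus $M(\sqcup_{rich},p)$ has rank at most $t$, which is finite.

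There is no genuine obstacle in this argument, since the hard model-theoretic work (the construction of the Hintikka-style reduction sequences and the verification that the polynomials $p_1,\ldots,p_{2t}$ indeed lie in $\L(\tau)$) has been carried out in Theorem \ref{th:fv-disjoint-union}. The only point worth double-checking is that $p$ as given in Equation~(\ref{eq:p}) is in the exact form to which the Bilinear Reduction Theorem applies; this is immediate from the statement of that theorem, so the corollary follows at once.
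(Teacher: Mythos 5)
Your proposal is correct and follows exactly the route the paper intends: the paper states this theorem as an immediate consequence of the Bilinear Reduction Theorem (Theorem \ref{th:fv-disjoint-union}) without further argument, and your factorization $M(\sqcup_{rich},p)=P_L\cdot M\cdot P_R$ with the resulting rank bound of $t$ is precisely the standard derivation being invoked. Nothing is missing.
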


\begin{rem}
The entries of $M(\sqcup_{rich},p)$ are polynomial, and hence we consider the rank of $M(\sqcup_{rich},p)$ over the field of rational functions. 
Note that for every evaluation $p_{\mathbf{X}_0,\mathbf{Y}_0}(G)=p(G,\mathbf{X}_0,\mathbf{Y}_0)$ of $p$ such that $\mathbf{X}_0,\mathbf{Y}_0$ belong to some subfield $\mathbb{F}$ of $\mathbb{C}$,
the matrix $M(\Box,p_{\mathbf{X}_0,\mathbf{Y}_0})$ has finite rank over $\mathbb{F}$. 
\end{rem}

\subsection{Replacing \texorpdfstring{$\Z$}{Z} by arbitrary semirings} \label{subse:semi-rings}

Theorem \ref{th:fv-disjoint-union} remains true when interpret the $\L$-polynomials not over the ring $\Z$ with the standard $+$ and $\cdot$, but over other 
rings and semirings. 
 Stated explicitly, the following is true:
\begin{thm}[Bilinear Reduction Theorem for Semirings]
Let $\cL$ be a tame fragment of $\SOL$ such that $\sqcup_{rich}$ is $\L$-smooth.
Let $\mathcal{S}$ be a semiring. 
Let $p$ be a $\L(\sigma)$-polynomial in the form of Equation (\ref{eq:p}). 
There exist $t\in \mathbb{N}$, $\L(\tau)$-polynomials $p_1,\ldots,p_{2t}$ 
and a polynomial $Q\in \mathcal{S}[w_1,\ldots,w_t,v_1,\ldots,v_t]$ 
such that
\begin{renumerate}
 \item
  for any two $\tau$-structures $\fA$ and $\fB$
 \[
  p(\fA \sqcup_{rich} \fB) = Q\left(p_1(\fA),\ldots,p_t(\fA),p_{t+1}(\fB),\ldots,p_{2t}(\fB)
\right)\,.
 \]
 \item
  $Q(\bar{w}, \bar{v})$ can be written as a bilinear form
   $$\bar{w}^{tr} M \bar{v}$$
 where the matrix $M$ has as entries polynomials with coefficients from
in $\mathcal{S}$ and is independent of the $\tau$-structures. 
\end{renumerate}
\end{thm}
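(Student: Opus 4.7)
The plan is to observe that the proof of Theorem \ref{th:fv-disjoint-union} is entirely structural: it uses only the semiring operations $+$ and $\cdot$ to build and manipulate polynomials, never subtraction or any other ring-specific operation. Hence essentially the same proof works verbatim after replacing $\Z$ by $\mathcal{S}$ throughout. More precisely, I would first re-derive the Normal Form Lemma (Lemma \ref{lem:normal-form}) over $\mathcal{S}$. The normal form is obtained by distributing sums over products and collapsing nested sums using the fact that $\L$ is closed under conjunction; the only arithmetic that enters is the distributive and commutative laws, all of which hold in any commutative semiring. So every $\L(\sigma)$-polynomial over $\mathcal{S}$ can be brought into the form of Equation~(\ref{eq:p}), possibly with more indeterminates $X_1,\ldots,X_t\in\mathbf{X}\cup\mathbb{N}$, and it is enough to prove the bilinear reduction for polynomials in this form.

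Next I would repeat the decomposition argument from Section \ref{subse:proof-rich}. Given a polynomial $p$ in the form of Equation (\ref{eq:p}) and a rich disjoint union $\fA \sqcup_{rich} \fB$, one uses the unique partition $U = (U \cap A) \sqcup (U \cap B)$ to split the sum, and stratifies it by the pair of Hintikka types $(\theta_1,\theta_2) \in \HinSet_{\tau_U,q}\times\HinSet_{\tau_U,q}$. The smoothness of $\sqcup_{rich}$ (Proposition \ref{prop:reform}) gives a function $g_{\tau_U}$ that determines $\theta_{\sigma_U,q}(\langle\fA,U\cap A\rangle\sqcup_{rich}\langle\fB,U\cap B\rangle)$ from $(\theta_1,\theta_2)$, so the condition $\Omega(U)$ can be read off from $(\theta_1,\theta_2)$ alone. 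Each of the products $\prod_{c\in A}$ and $\prod_{c\in B}$ is then rewritten, again using smoothness, so that the condition on $c$ depends only on the side containing $c$ together with the Hintikka type of the opposite side. This produces the auxiliary $\L(\tau)$-polynomials $\tp_{\varphi_\mathbf{X},\varphi_\mathbf{Y},\varphi}$ exactly as in the ring case, with values in $\mathcal{S}[\mathbf{X},\mathbf{Y}]$.

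The hard part, if there is one, is simply bookkeeping: checking that the monomial-level identities used to split the product over $c \in A\sqcup B$ into a product over $c \in A$ times a product over $c \in B$ are valid in $\mathcal{S}[\mathbf{X},\mathbf{Y}]$, which they are by commutativity of $\mathcal{S}$. The matrix $M = (m_{ij})$ produced by the argument has entries in $\{0,1\}$ (according to whether $\pi_1(i)=\pi_2(j)$, $\pi_2(i)=\pi_1(j)$ and $g_{\tau_U}(\pi(i))\models\Omega$), and these constants lie in every semiring with $0$ and $1$. Thus $Q$ is represented as the bilinear form $\bar{w}^{tr} M \bar{v}$ with coefficients in $\mathcal{S}$, as required, and the enumeration $\pi$ of $\HinSet_{\tau_U,q}\times\HinSet_{\tau_U,q}$ and the auxiliary polynomials $p_i$, $p_{i+t}$ are defined identically to the ring case. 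No step uses additive inverses, so the argument goes through for arbitrary commutative semirings $\mathcal{S}$.
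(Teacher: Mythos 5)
Your proposal is correct and matches the paper's intent exactly: the paper offers no separate proof for the semiring version, merely observing that the argument for Theorem \ref{th:fv-disjoint-union} carries over since it never uses additive inverses, and your detailed check (normal form, Hintikka-type stratification, the $\{0,1\}$-valued matrix $M$) is precisely that observation spelled out. The only caveat worth noting is that you should make the (implicit, and harmless) assumption that $\mathcal{S}$ is commutative explicit, since you invoke commutativity when splitting the product over $c \in A \sqcup B$.
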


\noindent For instance, the above holds for $\mathcal{S}=\mathbb{R}\cup\{-\infty\}$ with $max$ and $+$ as the semiring's addition
and multiplication, respectively, cf. \cite{bk:Butkovic2010}. 
Note, however, that it is not immediately clear how to extend the finite rank theorem to this context, 
since ranks
of matrices over semirings can be defined in various ways.

The Bilinear Reduction Theorem for Semirings has found applications
in the theory of weighted (aka multiplicity) automata, cf. \cite{pr:LabaiMakowsky2013}.

%-----------------------------------------------------------------------------------------------------------------------------------------------------------------------------------------------------------------------------------

\subsection{Sum-like operations} \label{subse:proof-sum-like}
Here we prove a Feferman-Vaught-type theorem for sum-like operations. 
Let $\rho$ and $\tau$ be vocabularies and let $\sigma$ be the vocabulary of the disjoint union of two $\tau$-structures.
\begin{thm}
Let $\cL$ be a tame fragment of $\SOL$ such that $\sqcup_{rich}$ is $\L$-smooth
and $\Box$ be an $\cL$-sum-like operation between $\tau$-structures.
Let $p$ be a $\L(\rho)$-polynomial in the form of Equation (\ref{eq:p}). 
There exists a polynomial $Q\in \Z[w_1,\ldots,w_t,v_1,\ldots,v_t]$ 
and $\L(\tau)$-polynomials $p_1,\ldots,p_{2t}$ such that
\begin{renumerate}
 \item 
  for any two $\tau$-structures $\fA$ and $\fB$
 \[
  p(\fA \Box \fB) = Q\left(p_1(\fA),\ldots,p_t(\fA),p_{t+1}(\fB),\ldots,p_{2t}(\fB)
\right)\,.
 \]
 \item
 $Q(\bar{w}, \bar{v})$ can be written as a bilinear form.
\end{renumerate}
\end{thm}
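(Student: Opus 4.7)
The plan is to reduce the case of a general $\cL$-sum-like operation $\Box$ to the case of rich disjoint union, which is already handled by Theorem~\ref{th:fv-disjoint-union} (the Bilinear Reduction Theorem). By definition of sum-like, there is a scalar $\cL$-transduction $\Phi$ of $\sigma$-structures into $\rho$-structures such that
\[
\fA \Box \fB \;=\; \Phi^{\star}(\fA \sqcup_{rich} \fB),
\]
where $\sigma$ is the vocabulary of the rich disjoint union of two $\tau$-structures. The idea is to pull the $\L(\rho)$-polynomial $p$ back along $\Phi$ to obtain an $\L(\sigma)$-polynomial $p'$ which agrees with $p$ in the sense that $p(\fA\Box\fB) = p'(\fA\sqcup_{rich}\fB)$, and then invoke Theorem~\ref{th:fv-disjoint-union}.

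First I would carry out the pull-back explicitly. Writing $p$ in the normal form of Equation~(\ref{eq:p}) with $\cL$-sentences $\Omega,\phi_{\mathbf{X}},\phi_{\mathbf{Y}}$, I define $p'$ by
\[
p'(\fM) \;=\; \sum_{U'\subseteq M :\, \Omega'(U')} \left(\prod_{c' \in M :\, \phi'_{\mathbf{X}}(c',U')} \mathbf{X}\right) \left(\prod_{c' \in M :\, \phi'_{\mathbf{Y}}(c',U')} \mathbf{Y}\right),
\]
where $\Omega'$, $\phi'_{\mathbf{X}}$, $\phi'_{\mathbf{Y}}$ are obtained from $\Omega, \phi_{\mathbf{X}}, \phi_{\mathbf{Y}}$ by applying the backward translation $\Phi^{\sharp}$ and, because $\Phi$ is scalar, additionally conjoining the defining formula of the universe of $\Phi^{\star}$ to relativize the quantifiers, the set quantification over $U'$, and the individual quantification over $c'$ to the image of $\Phi^{\star}$. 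Proposition~\ref{p:fund} guarantees that the resulting $\Omega', \phi'_{\mathbf{X}}, \phi'_{\mathbf{Y}}$ are $\L(\sigma)$-formulas, and since $\cL$ is tame (hence closed under scalar transductions and under bounded relativization $\forall x(\varphi(x)\to\psi(x))$), $p'$ is indeed a legitimate $\L(\sigma)$-polynomial of the form of Equation~(\ref{eq:p}). The bijective correspondence between subsets of the universe of $\Phi^{\star}(\fA\sqcup_{rich}\fB)$ and subsets of the universe of $\fA\sqcup_{rich}\fB$ that are contained in the definable domain of $\Phi^{\star}$, together with the analogous correspondence for individual elements, then yields the equality $p(\fA\Box\fB) = p'(\fA\sqcup_{rich}\fB)$.

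Once this equality is established, Theorem~\ref{th:fv-disjoint-union} applied to $p'$ yields $t \in \N$, $\L(\tau)$-polynomials $p_1,\dots,p_{2t}$ and a polynomial $Q \in \Z[w_1,\dots,w_t,v_1,\dots,v_t]$ expressible as a bilinear form, such that
\[
p'(\fA \sqcup_{rich} \fB) \;=\; Q\bigl(p_1(\fA),\dots,p_t(\fA),p_{t+1}(\fB),\dots,p_{2t}(\fB)\bigr),
\]
and combining this with the previous identity finishes the proof. The step I expect to need the most care is the correct relativization to the $\Phi$-definable domain: the polynomial $p$ sums over subsets of the $\rho$-universe (i.e.\ the image of $\Phi^{\star}$) and products indexed by elements of it, whereas $p'$ must sum and multiply over the full $\sigma$-universe; tameness of $\cL$ is precisely what is needed to confine the quantifications in $\Omega',\phi'_{\mathbf{X}},\phi'_{\mathbf{Y}}$ to that definable sub-universe while remaining inside $\cL$. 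Once this bookkeeping is done, the reduction is routine.
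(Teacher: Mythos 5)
Your proposal is correct and follows essentially the same route as the paper: both reduce the sum-like operation $\Box$ to the rich disjoint union by pulling $p$ back along the scalar transduction $\Phi$ via $\Phi^{\sharp}$, relativizing the summation to the $\Phi$-definable universe (using tameness for closure under transductions and containments), and then invoking Theorem~\ref{th:fv-disjoint-union}. Your explicit attention to relativizing the individual products over $c$ as well as the set quantification over $U$ is, if anything, slightly more careful bookkeeping than the paper's own write-up.
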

\begin{proof}
Let $\Phi$ be a scalar 
$\cL$-transduction from $\sigma_1$ structures to $\rho$ structures such that
\[
 \fA \Box \fB = \Phi^*(\fA \sqcup_{rich} \fB)\,. 
\]
Using Theorem \ref{th:fv-disjoint-union} it is enough to show that there exists a $\L(\sigma)$-polynomial
$r$ such that
\[
 p(\Phi^*(\fA \sqcup_{rich} \fB)) = r(\fA \sqcup_{rich} \fB)\,.
\]
Let $\phi_{univ}(v)$ be the formula of $\Phi$ which defines the universe of the transduction. 
Such $r$ is given by
\[
 r(\fM)=\sum_{\substack{U\subseteq M:
\\
 \forall v (v\in U \to \phi_{univ}(v)) \land \Phi^{\sharp}(\Omega)}} 
\left(\prod_{c \in M: \Phi^{\sharp}(\phi_\mathbf{X}(c,U))} \mathbf{X}\right) 
\left(\prod_{c \in M: \Phi^{\sharp}(\phi_\mathbf{Y}(c,U))} \mathbf{Y}\right)\,,
\]
where we consider $U$ and $c$ in $\Omega$, $\phi_\mathbf{X}$ and $\phi_\mathbf{Y}$ 
as free variables for the purpose of applying the map $\Phi^{\sharp}$. 
We use here that $\L$ is tame and therefore closed under transductions and containments. 
\end{proof}

\begin{thm}
Let $p$ be a $\L(\rho)$-polynomial in the form of Equation (\ref{eq:p}). 
Then the connection matrix $M(\Box,p)$ 
has finite rank over the field of rational functions with indeterminates $\mathbf{X}$ and $\mathbf{Y}$. 
\end{thm}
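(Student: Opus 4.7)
The plan is to derive this finite-rank statement as a direct corollary of the Bilinear Reduction Theorem stated just above it. Since that theorem was already proven (by reducing sum-like operations to the rich disjoint union case via the transduction $\Phi$ and applying Theorem~\ref{th:fv-disjoint-union}), the only remaining work is to translate the bilinear decomposition of $p(\fA \Box \fB)$ into a factorization of the infinite matrix $M(\Box, p)$ whose rank we want to bound.

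Concretely, I would first invoke the preceding theorem to obtain an integer $t$, $\L(\tau)$-polynomials $p_1, \ldots, p_{2t}$, and a matrix $M = (m_{ij}) \in \Z^{t \times t}$ such that
\[
 p(\fA \Box \fB) \;=\; \sum_{i=1}^{t}\sum_{j=1}^{t} m_{ij}\, p_i(\fA)\, p_{t+j}(\fB)
\]
for all $\tau$-structures $\fA, \fB$. I would then define two infinite matrices $R$ and $C$, indexed by finite $\tau$-structures and with exactly $t$ columns: the row of $R$ indexed by $\fA$ is $(p_1(\fA), \ldots, p_t(\fA))$, and the row of $C$ indexed by $\fB$ is $(p_{t+1}(\fB), \ldots, p_{2t}(\fB))$. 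The displayed identity above then says precisely that
\[
 M(\Box, p) \;=\; R \cdot M \cdot C^{tr},
\]
viewed as matrices over $\Q(\mathbf{X}, \mathbf{Y})$.

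The conclusion is immediate: the rank of a product is bounded by the minimum rank of its factors, and both $R$ and $C$ have at most $t$ columns, so $\mathrm{rank}\,M(\Box, p) \le t$, which is finite. There is no real obstacle to overcome; the Bilinear Reduction Theorem has already done all the combinatorial and logical work (the Feferman--Vaught-style reasoning, the use of Hintikka sentences, smoothness of $\sqcup_{rich}$, and closure of the tame fragment $\cL$ under the transduction $\Phi$ defining $\Box$), and the passage from a uniform bilinear decomposition to a rank bound is purely linear algebra. The only mild subtlety worth mentioning in the write-up is that $R$ and $C$ are infinite matrices with entries in $\Q(\mathbf{X},\mathbf{Y})$, so ``rank'' is taken in the usual sense of the dimension of the row (equivalently column) space over this field; with $t$ columns each, this dimension is at most $t$ regardless of how many rows we list.
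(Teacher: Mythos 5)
Your proposal is correct and matches the paper's intent exactly: the paper states this theorem as an immediate consequence of the preceding Bilinear Reduction Theorem for sum-like operations, leaving the factorization $M(\Box,p)=R\cdot M\cdot C^{tr}$ and the resulting rank bound $\mathrm{rank}\,M(\Box,p)\le t$ implicit, which is precisely the linear-algebra step you spell out. The only cosmetic difference is that the entries of the middle matrix $M$ are polynomials over $\Z$ rather than integers, but this does not affect the argument since the rank is taken over $\Q(\mathbf{X},\mathbf{Y})$.
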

Again evaluations of $p$ have finite rank over the relevant subfield of $\mathbb{C}$.

%-------------------------------------------------
%\input{numeric-nondef}  % non-definability of structural parameters
\section{Proving Non-definability of \texorpdfstring{$\cL(\tau)$}{L(tau)}-invariants}
\label{se:solpolx}

Using the Finite Rank Theorems requires showing that connection matrices have infinite rank. 
In the case of numeric $\tau$-invariants and $\tau$-polynomials,
a simple technique for obtaining infinite rank is by showing that $f$ is $\Box$-maximizing or $\Box$-minimizing. 
We will also use other techniques. 

\subsubsection*{\texorpdfstring{$\Box$-maximizing and $\Box$-minimizing parameters}{Maximizing and minimizing parameters}}
We say a $\tau$-parameter $f$ is {\em $\Box$-maximizing} ({\em $\Box$-minimizing}) if
there exist an infinite sequence of non-isomorphic 
$\tau$-structures $\mathcal{A}_1,\mathcal{A}_2,\ldots,$ $\mathcal{A}_i,\ldots$
such that for any $i\not=j$, 
\[f(\mathcal{A}_i\Box\mathcal{A}_j) = \max\{f(\mathcal{A}_i), f(\mathcal{A}_j)\}\,.\]
Furthermore, if $f$ is unbounded on $\mathcal{A}_1,\mathcal{A}_2,\mathcal{A}_3,\ldots$ then 
$f$ is {\em unboundedly $\Box$-maximizing}. 
Analogously we define {\em (unboundedly) $\Box$-minimizing}. 

\begin{prop}
\label{p:maxmin}
If $f$ is a unboundedly $\Box$-maximizing ($\Box$-minimizing) $\tau$-parameter, then 
$M(\Box,f)$ has infinite rank. 
\end{prop}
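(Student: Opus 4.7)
The plan is to bound below, by an unbounded function of $n$, the rank of finite $n \times n$ sub-matrices of $M(\Box,f)$, which will force $M(\Box,f)$ itself to have infinite rank. First I would pass to a subsequence along which $f$ is strictly monotone: since $f$ is unbounded on $\mathcal{A}_1, \mathcal{A}_2, \ldots$, there exist indices $n_1 < n_2 < \cdots$ with $f(\mathcal{A}_{n_1}) < f(\mathcal{A}_{n_2}) < \cdots$. After re-indexing, I may assume the sequence itself satisfies $a_i := f(\mathcal{A}_i)$ strictly increasing. The only potential obstacle is the diagonal entries $M(\Box,f)_{\mathcal{A}_i,\mathcal{A}_i}$, which the $\Box$-maximizing hypothesis does not constrain, so I deliberately choose rows and columns from disjoint portions of the sequence.

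For each $n \in \N$, fix indices $b_1 < c_1 < b_2 < c_2 < \cdots < b_n < c_n$ and let $M^{(n)}$ be the $n \times n$ sub-matrix of $M(\Box,f)$ whose rows are indexed by $\mathcal{A}_{b_k}$ and whose columns are indexed by $\mathcal{A}_{c_\ell}$. Since $b_k < c_\ell$ iff $k \leq \ell$ and since $a$ is increasing, the $\Box$-maximizing hypothesis gives
$$
M^{(n)}_{k,\ell} \;=\; \max(a_{b_k},\, a_{c_\ell}) \;=\; \begin{cases} a_{c_\ell} & \text{if } k \leq \ell, \\ a_{b_k} & \text{if } k > \ell. \end{cases}
$$

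Next I would inspect consecutive row differences $R_k - R_{k+1}$ of $M^{(n)}$: at positions $1,\ldots,k-1$ the value is $a_{b_k} - a_{b_{k+1}} \neq 0$; at position $k$ the value is $a_{c_k} - a_{b_{k+1}} \neq 0$ (using $c_k < b_{k+1}$ and monotonicity); at positions $k+1,\ldots,n$ the two rows agree and the value is $0$. Hence the $n-1$ vectors $R_1-R_2,\ldots,R_{n-1}-R_n$ have strictly nested nonzero supports $\{1\},\{1,2\},\ldots,\{1,\ldots,n-1\}$ and are therefore linearly independent. This forces $\mathrm{rank}(M^{(n)}) \geq n-1$, and letting $n \to \infty$ yields $\mathrm{rank}\, M(\Box,f) = \infty$.

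The $\Box$-minimizing case is handled symmetrically: the same interleaved choice of indices makes $M^{(n)}_{k,\ell}$ equal to $a_{b_k}$ for $k \leq \ell$ and $a_{c_\ell}$ for $k > \ell$, so the differences $R_{k+1}-R_k$ have nested supports $\{k,k+1,\ldots,n\}$ for $k=1,\ldots,n-1$, again linearly independent; alternatively one may simply replace $f$ by $-f$. The only real obstacle is book-keeping around the diagonal of the connection matrix, which is bypassed by the interleaving trick above; the remainder is a routine rank calculation with a staircase pattern.
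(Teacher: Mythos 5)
Your proof is correct; the paper states Proposition~\ref{p:maxmin} without proof, and your argument is exactly the standard staircase/triangular sub-matrix computation it implicitly relies on. Your interleaving of row and column indices to avoid the unconstrained diagonal entries is a genuine subtlety that you handle cleanly, and the nested-support argument for the row differences is sound (the only cosmetic point being that an unbounded-below sequence yields a strictly decreasing rather than increasing subsequence, which your symmetric argument covers anyway).
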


\subsection{Non-definability: numeric \texorpdfstring{$\CMSOL(\tau)$}{CMSOL(tau)}-parameters}
Using Proposition \ref{p:maxmin}, Theorem \ref{maintheorem} and Proposition \ref{p:fol-1} we show that many $\tau$-parameters are not 
$\CMSOL$-definable:
\begin{prop}\label{p:max}
The following graph parameters are not 
$\CMSOL$-definable in the language of hypergraphs.
\\
%$\CMSOL(\tau_{hypergraph})$-definable: \\
\textsl{Spectral radius, chromatic number, 
acyclic chromatic number, arboricity, star chromatic number, clique number, Hadwiger number, Haj\'os number, 
tree-width, path-width, clique-width, edge chromatic number, total coloring number, Thue number, maximum degree, circumference,
longest path, maximal connected planar (bipartite) induced subgraph, boxicity,
minimal eigenvalue, spectral gap, girth, degeneracy,} and \textsl{minimum degree}.
\end{prop}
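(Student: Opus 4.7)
\medskip

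\noindent\textbf{Proof proposal.} The overall plan is to combine Proposition \ref{p:maxmin} with the Finite Rank Theorem (Theorem \ref{maintheorem}) contrapositively: for each parameter $f$ in the list, I will exhibit a $\CMSOL$-smooth sum-like operation $\Box$ on hypergraphs and an infinite non-isomorphic sequence $\fA_1,\fA_2,\ldots$ on which $f$ is unbounded and satisfies $f(\fA_i\Box\fA_j)=\max\{f(\fA_i),f(\fA_j)\}$ or the analogous minimum equation. Then $f$ is unboundedly $\Box$-maximizing or $\Box$-minimizing, so $M(\Box,f)$ has infinite rank by Proposition \ref{p:maxmin}, contradicting the Finite Rank Theorem had $f$ been $\CMSOL$-definable as an evaluation of a $\CMSOL$-polynomial. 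By Proposition \ref{p:fol-1}(i)--(ii), the only smooth sum-like operations I can exploit in the hypergraph vocabulary are $\sqcup$, $\sqcup_{rich}$ and the $k$-sums (join is not smooth in $\tau_{HG}$); almost all cases will use ordinary disjoint union $\sqcup$.

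\medskip

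\noindent The bulk of the list splits into an unboundedly $\sqcup$-maximizing group and an unboundedly $\sqcup$-minimizing group. For the maximizing group it suffices to observe that disjoint union is a graph-theoretic ``max'' with respect to the parameter, witnessed by an obvious infinite sequence: for the chromatic number, clique number, Hadwiger and Haj\'os numbers, acyclic chromatic number, star chromatic number, edge chromatic number, total coloring number, arboricity, tree-width, path-width, clique-width, maximum degree, maximal connected planar or bipartite induced subgraph, and boxicity, use the sequence $K_n$; for circumference and longest path use the cycle or path sequence $C_n,P_n$; for the Thue number use a sequence for which it grows (any family of long paths works). In every such case one verifies directly from the definition that each of these parameters is monotone under taking components and attains on $\fA\sqcup\fB$ the maximum of its values on $\fA$ and $\fB$, so Proposition \ref{p:maxmin} applies. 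For the spectral radius, the eigenvalues of $\fA\sqcup\fB$ are the union (with multiplicity) of those of $\fA$ and $\fB$, hence the spectral radius is literally the max; the sequence $K_n$ (spectral radius $n-1$) is unbounded.

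\medskip

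\noindent The unboundedly $\sqcup$-minimizing group is handled analogously: girth satisfies $g(\fA\sqcup\fB)=\min\{g(\fA),g(\fB)\}$, witnessed by $C_n$; minimum degree is likewise a minimum, witnessed by $n$-regular graphs of growing degree (e.g.\ $K_{n,n}$ or $K_{n+1}$); and the minimal eigenvalue of $\fA\sqcup\fB$ is the minimum of the two minimal eigenvalues, with $K_{n,n}$ providing an unbounded sequence (minimal eigenvalue $-n$). Degeneracy is actually $\sqcup$-maximizing since an optimal elimination ordering can be chosen independently on each component, so $K_n$ works there as well.

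\medskip

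\noindent The main obstacle is the spectral gap, where the clean min/max identity under $\sqcup$ fails: if $\fA$ and $\fB$ share the same top eigenvalue then the spectral gap of $\fA\sqcup\fB$ drops to $0$, so the parameter is neither $\sqcup$-maximizing nor $\sqcup$-minimizing on the naive sequences. Here I would argue slightly differently: take $\fA_n$ to be $K_n\sqcup K_1$ (or similar), so that $\fA_n$ has second-largest eigenvalue $n-1$ and the spectral gap of $\fA_n$ equals its second top eigenvalue up to an additive constant; then $\fA_n\sqcup\fA_m$ has a computable spectrum whose top two eigenvalues depend on $n$ and $m$ in a way that forces an infinite rank submatrix (e.g.\ by restricting to an infinite subsequence where the spectral gap of $\fA_n\sqcup\fA_m$ is a nontrivial, injective function of $\{n,m\}$). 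A similar tailored sequence handles any of the other parameters for which the direct min/max identity under $\sqcup$ is not transparent; if no such sequence can be found in $\sqcup$, the $k$-sum is available as a fallback $\CMSOL$-smooth sum-like operation by Proposition \ref{p:fol-1}(ii).
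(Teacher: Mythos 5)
Your strategy is exactly the paper's: the published proof consists of the single sentence that all listed parameters are unboundedly $\sqcup$-maximizing or $\sqcup$-minimizing, after which Proposition \ref{p:maxmin}, Theorem \ref{maintheorem} and the $\CMSOL$-smoothness of $\sqcup$ from Proposition \ref{p:fol-1} finish the job. Your write-up is more detailed and in one respect more careful than the paper: you are right that the spectral gap does \emph{not} obey a clean max/min identity under $\sqcup$ (on cliques, $K_n\sqcup K_m$ has gap $|n-m|$ off the diagonal and $0$ on it, which is neither the max nor the min of the individual gaps), so the paper's blanket claim is too glib there; your repair of exhibiting an infinite-rank submatrix directly is the right move, and indeed the matrix $\left(|n-m|\right)_{n,m}$ already has infinite rank, so no delicate spectral computation is needed.

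There are, however, concrete errors in several of your witness sequences, and for those parameters the proof as written has a gap because $f$ is \emph{not} unbounded on the chosen sequence, so Proposition \ref{p:maxmin} does not apply. Specifically: the clique-width of $K_n$ is $2$ for all $n\geq 2$; the boxicity of $K_n$ is at most $1$; the largest connected planar (resp.\ bipartite) induced subgraph of $K_n$ has $4$ (resp.\ $2$) vertices; and the Thue number of paths is bounded by an absolute constant (Thue's theorem on nonrepetitive sequences over three symbols), so ``any family of long paths'' does not work. Each case is easily repaired by a better unbounded sequence --- grids for clique-width, $K_{2n}$ minus a perfect matching for boxicity, paths or grids for the planar/bipartite induced-subgraph parameters, and cliques for the Thue number --- but you should make those substitutions explicitly rather than asserting that $K_n$ (or $P_n$) works uniformly across the whole maximizing group.
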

\begin{proof}
All these graph parameters $g$ are unboundedly $\sqcup$-maximizing or $\sqcup$-minimizing. 
%implying that
%their connection matrices $M(g,\sqcup)$ have infinite rank. 
%Since $\sqcup$ is $\MSOL(\tau_{hypergraph})$-smooth, none of them can be $\MSOL$-definable. 
\end{proof}

Variations of the notions of $\Box$-maximizing or $\Box$-minimizing $\tau$-parameters can 
also lead to non-definability results, e.g.:
\begin{prop}
The number of connected components (blocks, simple cycles, induced paths) of maximum (minimum) 
size is not $\CMSOL$-definable in the language of hypergraphs.
\end{prop}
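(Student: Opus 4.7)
The plan is to apply the Finite Rank Theorem (Theorem~\ref{maintheorem}) with $\Box = \sqcup$, which is $\CMSOL$-smooth on hypergraphs by Proposition~\ref{p:fol-1}(i). For each of the four parameter types I will exhibit an infinite family of graphs $\{G_i\}$ whose restricted connection submatrix has infinite rank. The prototype pattern aimed for is the matrix $J + I$ (value $2$ on the diagonal, $1$ off the diagonal), whose infinite rank follows from the identity $I = (J+I) - J$ together with $\mathrm{rank}(J) = 1$: if $J+I$ had finite rank then so would $I$, a contradiction.

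For the number of components of maximum size, take $\{P_i : i \geq 1\}$. Then $P_i \sqcup P_j$ has two components of sizes $i$ and $j$, so the count of maximum-size components equals $2$ when $i = j$ and $1$ otherwise, yielding the $J+I$ pattern. For blocks of maximum size, use $\{C_i : i \geq 3\}$; each $C_i$ is a single block, so $C_i \sqcup C_j$ has two blocks of sizes $i$ and $j$. For simple cycles of maximum size the same family works, since $C_i$ has a unique simple cycle. For induced paths of maximum size I reuse $\{P_i\}$: an induced path lies in a single connected component, and the unique maximum induced path in $P_n$ is $P_n$ itself, so the matrix is again $J+I$.

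For the minimum-size variants of components, blocks, and simple cycles, the same families produce the same $J+I$ pattern, since among any two unequal values $i \neq j$ exactly one is the minimum. The subtlety is with minimum-size induced paths, where single-vertex or single-edge induced paths could collapse the count to a linear function of $|V|$ and yield only a rank-bounded matrix on $\{P_i\}$. Here I would work instead with a family in which the minimum induced path size is forced to grow with $i$ and occur with controllable multiplicity, for instance graphs augmented with constants fixing the endpoints of a distinguished short induced path, so that the $J+I$-type pattern is restored in the disjoint union.

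I expect the main obstacle to be precisely this last case: choosing a family that both makes the minimum induced path size vary with $i$ and produces a connection submatrix of infinite rank under any reasonable convention for what counts as an induced path. Once such a family is in hand, the verification that the restricted connection matrix has the required $J+I$-like shape is routine, and Theorem~\ref{maintheorem} yields non-definability in all cases. The remaining cases, being direct consequences of the $\{P_i\}$ and $\{C_i\}$ constructions, add no further difficulty.
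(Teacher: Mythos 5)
Your proposal follows the same route as the paper: fix $\Box=\sqcup$, invoke the Finite Rank Theorem (Theorem~\ref{maintheorem}, with $\sqcup$ being $\CMSOL$-sum-like and $\sqcup_{rich}$ being $\CMSOL$-smooth by Proposition~\ref{p:fol-1}), and exhibit for each parameter an explicit family whose connection submatrix has infinite rank. The difference is in the witnesses. The paper uses the single family $n\cdot K_n$ ($n$ disjoint copies of $K_n$), for which the number of maximum-size components of $nK_n\sqcup mK_m$ is $\max\{n,m\}$ off the diagonal and $n+m$ on it, and then asserts that ``the other cases are proved similarly.'' Your case-adapted families $\{P_i\}$ and $\{C_i\}$ are cleaner: they yield the $J+I$ pattern, whose infinite rank is immediate from $\mathrm{rank}(I)\leq\mathrm{rank}(J+I)+\mathrm{rank}(J)$, and they sidestep degeneracies that the paper's single family actually runs into for some of the ``similar'' cases (in $nK_n\sqcup mK_m$ the minimum simple cycle is always a triangle and the maximum induced path is always a single edge, so those counts are sums of a function of $n$ and a function of $m$ and give rank at most $2$). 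So on the cases you complete, your argument is correct and, if anything, more careful than the paper's.

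The one genuine gap is the minimum-size induced path case, and the repair you sketch cannot work. The parameter ``number of induced paths of minimum size'' is a function of the underlying graph alone; augmenting the structures with constant symbols changes the indexing of the connection matrix but not a single entry of it, so no choice of distinguished vertices can restore a $J+I$ pattern. More fundamentally, under any standard convention an induced path may consist of one vertex (or, if an edge is required, of two adjacent vertices), so the minimum size is $1$ (respectively $2$) on every graph in sight and the count is $|V|$ (respectively $|E|$), both of which \emph{are} $\CMSOL$-definable. This sub-case therefore needs a non-standard reading of the statement or should be dropped; it is not a matter of finding a better family. The paper's proof, which handles only the first case explicitly, is silent on this point as well, so you have not missed an argument the paper supplies -- but your closing paragraph should identify the case as degenerate rather than promise a construction.
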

\begin{proof}
Consider the connection matrix with respect to the operation of disjoint union 
of graphs $i \cdot K_i$ which consists of the disjoint union of $i$ cliques of size $i$.
We denote the number of connected components of maximum size in a graph $G$ by $\#_{\mathrm{max-cc}}(G)$. 
Then
\[
 \#_{\mathrm{max-cc}}(n K_n\sqcup m K_m)= 
\begin{cases}
 \max\{n,m\} & n\not=m \\
 n+m & n=m 
\end{cases}
\]
So $M(\sqcup,\#_{\mathrm{max-cc}})$ is of infinite rank. 
The other cases are proved similarly.
\end{proof}

In Section \ref{se:means} we discuss the rank of connection matrices of
various graph parameters based on averages, such as the average degree,
and many others. These are examples where the computation of the rank is a bit
more sophisticated.

\subsection{Non-definability: \texorpdfstring{$\CMSOL(\tau)$}{CMSOL(tau)}-polynomials}

Here we use the method of connection matrices for showing that (hyper)graph polynomials are not $\MSOL$-definable. 
Some of the material here is taken from the first author's thesis \cite{phd:Kotek}.
As examples we consider the polynomials $\chi_{rainbow}(G,k)$, $\chi_{mcc(t)}(G,k)$, and $\chi_{convex}(G,k)$,
which were defined in the introduction.

To show that none of $\chi_{rainbow}(G,k)$, $\chi_{mcc(t)}(G,k)$, or $\chi_{convex}(G,k)$ are $\CMSOL$-poly\-nomials
in the language of graphs,
and that neither $\chi_{rainbow}(G,k)$ nor $\chi_{convex}(G,k)$ are $\CMSOL$-polynomials 
in the language of hypergraphs,
we prove the following general proposition: 

\begin{lem}
\label{lm:non-def-gp}
Given a  $\tau$-parameter $p$, a binary operation $\Box$ on $\tau$-structures and
an infinite sequence of non-isomorphic $\tau$-structures
$\mathcal{A}_i, i \in \N$,
let $f:\N\to \N$ be an unbounded function such that one of the following occurs:
\begin{renumerate}
 \item \label{item:1}
for every $\lambda\in \N$, 
$p(\mathcal{A}_i\Box \mathcal{A}_j,\lambda)=0$ iff $i+j>f(\lambda)$.
 \item \label{item:2}
for every $\lambda\in \N$, 
$p(\mathcal{A}_{2i}\Box \mathcal{A}_{2j+1},\lambda)=0$ iff $i+j>f(\lambda)$.
\end{renumerate}
Then the connection matrix $M(\Box,p)$ has infinite rank.
\end{lem}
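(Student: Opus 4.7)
My plan is to show directly that the connection matrix $M(\Box, p)$ contains arbitrarily large square submatrices whose determinant is not identically zero as an element of the coefficient ring. Since the rank of $M(\Box, p)$ over the field of rational functions equals the supremum of sizes of nonvanishing minors, this will force the rank to be infinite.

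Under hypothesis (\ref{item:1}), given any $n \in \N$, I would use unboundedness of $f$ to pick $\lambda_0 \in \N$ with $N := f(\lambda_0) \geq n$, and then focus on the $(N-1) \times (N-1)$ submatrix $S$ with rows and columns indexed by $\mathcal{A}_1, \ldots, \mathcal{A}_{N-1}$. The key observation is that after reversing the column order via $k = N - j$, the entry in row $i$ and new column $k$, evaluated at $\lambda_0$, equals $p(\mathcal{A}_i \Box \mathcal{A}_{N-k}, \lambda_0)$, which by hypothesis vanishes precisely when $i + (N-k) > N$, i.e.\ when $i > k$. Hence after reordering and evaluation at $\lambda_0$, $S$ exhibits an upper-triangular zero pattern, and its diagonal entries (corresponding to $i = k$, i.e.\ $i + j = N \leq f(\lambda_0)$) are all nonzero. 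The determinant at $\lambda_0$ is therefore nonzero, so $\det S$ cannot be identically zero as a polynomial, and the rank of $M(\Box, p)$ is at least $N - 1$. Letting $N$ grow would show the rank is infinite.

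Case (\ref{item:2}) I would handle by the same recipe, with rows indexed by $\mathcal{A}_{2i}$ for $i = 0, \ldots, N-1$ and columns by $\mathcal{A}_{2j+1}$ for $j = 0, \ldots, N-1$; the zero-pattern hypothesis $i + j > f(\lambda)$ is identical in shape to case (\ref{item:1}), so the column-reversal plus triangular-minor argument should transfer verbatim.

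I do not foresee a significant obstacle. The argument uses only the elementary fact that a polynomial that is nonzero at a single integer point cannot vanish identically, together with the triangular structure forced by the hypothesized zero pattern; no finer information about the nonzero entries of $p$ is needed.
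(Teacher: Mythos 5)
Your argument is correct and takes essentially the same route as the paper's own proof: both restrict to the finitely many structures $\mathcal{A}_i$ with indices bounded by $f(\lambda)$, observe that the resulting finite submatrix has a triangular zero pattern with nonzero diagonal after reversing the column order, conclude its rank is at least $f(\lambda)-1$, and let $\lambda$ grow using the unboundedness of $f$. The only difference is that you make explicit two steps the paper leaves implicit, namely the column reversal and the passage from a minor that is nonzero at one evaluation point $\lambda_0$ to a minor that is nonzero over the field of rational functions.
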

\begin{proof}
Let $\lambda\in \N$ and let $p_\lambda$ be the graph parameter given by $p_\lambda(G)=p(G,\lambda)$. 
If (\ref{item:1}) holds, consider the restriction $N$ of the connection matrix $M(\Box,p_\lambda)$ 
to the rows and columns corresponding to $\mathcal{A}_i$, 
$0\leq i\leq f(\lambda)-1$. 
If (\ref{item:2}) holds, consider the restriction $N$ of $M(\Box,p_\lambda)$ to to the rows corresponding to
$\mathcal{A}_{2i}$ and the columns corresponding to $\mathcal{A}_{2i+1}$, $0\leq i\leq f(\lambda)-1$. 
In both cases $N$ is a  finite triangular matrix with non-zero diagonal. Hence the rank of 
$M(\Box,p_\lambda)$ is at least $f(\lambda)-1$.

Using that $f$ is unbounded, we get that $M(\Box,p)$ contains infinitely 
many finite sub-matrices with ranks which tend to infinity.
Hence, the rank of $M(\Box,p)$ is infinite, 
\end{proof}

We now use Lemma \ref{lm:non-def-gp} to compute connection matrices 
where $\Box$ is the disjoint union $\sqcup$, 
the $1$-sum $ \sqcup_1$ or the join $\bowtie$.
In addition to the graph polynomials introduced in the introduction, we consider the definability
of some {\em $\cP$-polynomials} denoted $\chi_{\cP}(G,k)$ where $\cP$ is a graph property. 
$\chi_{\cP}(G,k)$ counts vertex $k$-colorings $f:V(G)\to[k]$ such that for every color $c\in [k]$, 
the graph $f^{-1}(c)$ induced by the vertices colored $c$ under $f$ belongs to $\cP$. 
Such colorings were introduced by F. Harary, cf. \cite{pr:Harary85,ar:HararyHsu1991,ar:BrownCorneil1987,ar:BrownCorneil1991}. 
Several of the colorings presented so far are $\cP$-
colorings for appropriate choices of $\cP$.

\begin{prop}
The following connection matrices have infinite rank:
\begin{renumerate}
\item
$M(\bowtie, \chi(G,k))$;
\item
For every $t >0$ the matrix
$M(\bowtie, \chi_{mcc(t)}(G,k))$;
\item
$M(\bowtie, \chi_{v-acyclic}(G,k));$
\item
$M(\bowtie, \chi_{\cP_{Bipartite}}(G,k))$;
\item
$M(\bowtie, \chi_{\cP_{Forest}}(G,k))$;
\item
$M(\bowtie, \chi_{\cP_{Tree}}(G,k))$;
\item
$M(\bowtie, \chi_{\cP_{Planar}}(G,k))$;
\item
$M(\bowtie, \chi_{\cP_{3-regular}}(G,k))$;
\item
$M(\sqcup_1, \chi_{rainbow}(G,k))$;
\item
$M(\sqcup, \chi_{convex}(G,k))$;
\item
For every $t >0$ the matrix
$M(\sqcup_1, \chi_{t-improper}(G,k))$;
\item
$M(\sqcup_1, \chi_{non-rep}(G,k))$;
\item
$M(\sqcup, \chi_{harm}(G,k))$.
\end{renumerate}
\end{prop}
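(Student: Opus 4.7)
The strategy throughout is to apply Lemma~\ref{lm:non-def-gp} in each of the thirteen cases by choosing an appropriate infinite family $\mathcal{A}_0,\mathcal{A}_1,\ldots$ of (labeled) graphs together with an unbounded threshold $f:\N\to\N$. I organize the proof by the binary operation.

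For the eight join-matrices (i)--(viii) the natural choice is $\mathcal{A}_i=K_i$, exploiting $K_i\bowtie K_j=K_{i+j}$. On a clique every proper coloring is injective, so the chromatic, $mcc(t)$, and $v$-acyclic polynomials coincide with the falling factorial and vanish iff $i+j>k$. For the $\cP$-colorings each color class induces a clique which must lie in $\cP$; since $K_n\in\cP$ only for $n$ at most a constant $c_\cP$ (namely $c_\cP=2$ for Bipartite, Forest, and Tree, and $c_\cP=4$ for Planar), $\chi_\cP(K_{i+j},k)$ vanishes iff $i+j>c_\cP\cdot k$. For the non-hereditary $3$-regular case (viii) I switch to $\mathcal{A}_i=K_{4i}$, so that $K_{4i}\bowtie K_{4j}=K_{4(i+j)}$ admits a $\cP_{3\text{-reg}}$-coloring (necessarily partitioning into copies of $K_4$) precisely when $i+j\le k$, yielding $f(k)=k$.

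For the disjoint-union matrices I take edgeless graphs and matchings: in (x), $\mathcal{A}_i=\overline{K_i}$ gives $\chi_{convex}(\overline{K_{i+j}},k)=(k)_{i+j}$ because each convex color class on an edgeless graph must collapse to a singleton; in (xiii), $\mathcal{A}_i=iK_2$ makes harmoniousness on $(i+j)K_2$ require $i+j$ pairwise-distinct unordered color pairs, so the polynomial is zero iff $i+j>\binom{k}{2}$, recovering the argument of \cite{ar:GodlinKotekMakowsky08}. For the $\sqcup_1$-matrices (ix) and (xii) I use stars with the centre distinguished, so that $K_{1,i}\sqcup_1 K_{1,j}=K_{1,i+j}$; every pair of leaves is joined by a unique path of length two, so both path-rainbow connectivity and non-repetitiveness force the $i+j$ edges at the centre to receive pairwise distinct colors, yielding $f(k)=k$ in each case.

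The most delicate case is (xi), the $t$-improper polynomial with $\sqcup_1$. Here I plan to take $\mathcal{A}_i=K_i$ with a distinguished vertex, so that $K_i\sqcup_1 K_j$ consists of two cliques sharing a vertex $v$. A $t$-improper coloring splits into color classes, each of which is (as an induced subgraph) the disjoint union of two cliques, one contained in $K_i\setminus\{v\}$ and one in $K_j\setminus\{v\}$, of sizes at most $t+1$ each; in addition, the unique class $C_0$ containing $v$ must satisfy $|A|+|B|\le t$ where $A=C_0\cap(K_i\setminus\{v\})$ and $B=C_0\cap(K_j\setminus\{v\})$, in order to keep $v$'s degree within $C_0$ bounded by $t$. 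A bookkeeping argument, packing the $k-1$ non-$v$ classes against each side and using $A,B$ to absorb the leftover, then shows that a $k$-coloring exists iff $i+j\le (2k-1)(t+1)+1$, the binding case being when $v$'s class must simultaneously absorb residues on both sides. This gives $f(k)=(2k-1)(t+1)+1$, which is unbounded, and Lemma~\ref{lm:non-def-gp} applies.
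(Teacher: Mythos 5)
Your overall strategy --- apply Lemma \ref{lm:non-def-gp} with cliques under $\bowtie$, edgeless graphs and matchings under $\sqcup$, and pointed stars or cliques under $\sqcup_1$ --- is exactly the paper's, and eleven of the thirteen items go through essentially as in the paper's proof. Your item (viii) is in fact more careful than the paper's: a color class of a clique induces a clique, which is $3$-regular only when it is a $K_4$, so passing to the family $K_{4i}$ and the threshold $i+j\le k$ is the right move. Your item (ix) replaces the paper's pointed paths by pointed stars, which works equally well.

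Two items need repair. In (ii) you assert that $\chi_{mcc(t)}(K_{i+j},k)$ coincides with the falling factorial and vanishes iff $i+j>k$; but $mcc(t)$-colorings are not required to be proper --- on a clique the only constraint is that each color class (which induces a connected clique) have at most $t$ vertices --- so the correct threshold is $i+j>kt$. The conclusion survives with $f(k)=kt$; a one-line fix. Item (xi) has a genuine gap. Writing $M=(k-1)(t+1)$ and $e_i=\max(0,\,i-1-M)$, your own case analysis shows that $\chi_{t-improper}(K_i\sqcup_1 K_j,k)\ne 0$ iff $e_i+e_j\le t$: a class avoiding the glued vertex $v$ contributes at most $t+1$ vertices \emph{per side}, while $v$'s class absorbs a combined excess of at most $t$ from the two sides. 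This condition is not a function of $i+j$: for $k=2$, $t=1$ one has $M=2$, and $(i,j)=(3,3)$ is feasible (color classes $\{v,a_1\}$ and $\{a_2,b_1,b_2\}$) while $(i,j)=(5,1)$ is not ($K_5\sqcup_1 K_1=K_5$ cannot be $1$-improperly $2$-colored since two classes cover at most four vertices), although $i+j=6$ in both cases. So the ``iff $i+j\le(2k-1)(t+1)+1$'' that Lemma \ref{lm:non-def-gp} requires is false. Moreover, the zero/nonzero pattern of this block is a union of $t+1$ nested interval supports $[1,M+1],\dots,[1,M+1+t]$ and hence has rank only $t+1$ independently of $k$; so the triangular-submatrix argument cannot be rescued by merely correcting the threshold, and one needs either a different witness family or an argument that uses the actual values of the polynomial rather than its support. (For what it is worth, the paper's own formula for this item, $\lceil(i+j-2)/k\rceil>t$, fails on the same example, so this case deserves more care than either account gives it.)
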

\begin{proof}~
\begin{renumerate}
\item
For $\chi(G,k)$ we use the fact that join of cliques is again a clique, $K_{i} \bowtie K_j = K_{i+j}$ and 
that $\chi(K_r,k)=0$ iff $r> k$. 
\item
In a similar fashion, for $\chi_{mcc(t)}(G,k)$ we use cliques and 
that $\chi_{mcc(t)}(K_r,k)=0$ iff $r> k t$. 
\item 
For $\chi_{v-acyclic}(G,k)$ we use cliques. We have $\chi_{v-acyclic}(K_n,k)=0$ iff $n>k$. 
\item
For $\chi_{\cP_{Bipartite}}(G,k)$ we use cliques and the fact that $\chi_{\cP_{Bipartite}}(K_i,k)=0$ iff $i>2k$. 
\item
For $\chi_{\cP_{Forest}}(G,k)$ we use cliques and the fact that $\chi_{\cP_{Forest}}(K_i,k)=0$ iff $i>2k$. 
\item
For $\chi_{\cP_{Tree}}(G,k)$ we use cliques and the fact that $\chi_{\cP_{Tree}}(K_i,k)=0$ iff $i>2k$. 
\item
For $\chi_{\cP_{Planar}}(G,k)$ we use cliques and the fact that $\chi_{\cP_{Planar}}(K_i,k)=0$ iff $i>4k$. 
\item
For $\chi_{\cP_{3-regular}}(G,k)$ we use cliques and the fact that $\chi_{\cP_{3-regular}}(K_i,k)=0$ iff $i>3k$. 
\item
For $\chi_{rainbow}(G,k)$, we use that the $1$-sum of paths with one end labeled is again a path 
with
$P_{i} \sqcup_1 P_j = P_{i+j-1}$ and 
that $\chi_{rainbow}(P_r,k)=0$ iff $r> k+3$. 
\item
For $\chi_{convex}(G,k)$, we use edgeless graphs and disjoint union $E_{i} \sqcup E_j = E_{i+j}$ and 
that $\chi_{convex}(E_r,k)=0$ iff $r> k$. 
\item
For $\chi_{t-improper}(G,k)$ we use cliques and that $\chi_{t-improper}(K_i \sqcup_1 K_j,k)=0$ 
iff $\left\lceil\frac{i+j-2}{k} \right\rceil>t$
\item
For $\chi_{non-rep}(G,k)$ we use that $1$-sum of stars $S_n$ is again a star. We have \linebreak
$\chi_{non-rep}(S_n,k)=0$ iff $n>k$. 
\item
For $\chi_{harm}(G,k)$ we use the graphs $nK_2$ consisting of $n$ disjoint edges. We have
$\chi_{harm}(nK_2,k)=0$ iff $n>\binom{k}{2}$. 
\end{renumerate}
\end{proof}

\begin{cor}~
\begin{renumerate}
\item
$\chi(G,k)$, $\chi_{mcc(t)}(G,k)$ (for any fixed $t>0$), $\chi_{v-acyclic}(G,k)$, $\chi_{\cP_{Bipartite}}(G,k)$, \linebreak
$\chi_{\cP_{Forest}}(G,k)$, $\chi_{\cP_{Tree}}(G,k)$, $\chi_{\cP_{Planar}}(G,k)$ and
$\chi_{\cP_{3-regular}}(G,k)$
are not $\CMSOL$-definable in the language of graphs.
\item 
$\chi_{rainbow}(G,k)$, $\chi_{convex}(G,k)$, $\chi_{non-rep}(G,k)$, and $\chi_{t-improper}(G,k)$ (for any fixed $t>0$)
 are not $\CMSOL$-definable in the languages of
graphs and hypergraphs.
\end{renumerate}
\end{cor}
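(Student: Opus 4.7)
The plan is to derive the corollary as an immediate contrapositive application of the Finite Rank Theorem for $\cL$-polynomials (Theorem \ref{maintheorem}) to the infinite rank computations of the preceding proposition. Concretely, suppose for contradiction that one of the listed polynomials, say $\chi(G,k)$, were a $\CMSOL$-polynomial in the language of graphs. Then by Theorem \ref{maintheorem}, for any $\CMSOL$-sum-like operation $\Box$ on the vocabulary of graphs, the connection matrix $M(\Box,\chi)$ would have finite rank over $\Q(k)$. But the preceding proposition exhibits an operation $\Box$ (namely $\bowtie$) for which $M(\Box,\chi)$ has infinite rank, yielding the contradiction. The same scheme applies to each entry of the list, provided that in each case the operation in question is $\CMSOL$-sum-like over the relevant vocabulary.

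Thus the content of the proof is really a verification step: for each listed polynomial, the operation $\Box$ used in the previous proposition is $\CMSOL$-smooth and sum-like in the appropriate vocabulary. For part (i) the operation is $\bowtie$, which by Proposition \ref{p:fol-1}(iv) is $\CMSOL$-smooth in the vocabulary of graphs, and sum-like there since it is obtained as a quantifier-free scalar transduction on $\sqcup_{rich}$ (add an edge between every $v\in A$ and $w\in B$ using the unary predicates distinguishing $A$ and $B$). This is exactly why the conclusion in (i) is restricted to the graph vocabulary: $\bowtie$ is \emph{not} $\CMSOL$-smooth over hypergraphs, since adding all the incidence edges requires introducing new $E$-sort elements. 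For part (ii) the operations are $\sqcup$ and $\sqcup_1$, which by Proposition \ref{p:fol-1}(i)--(ii) are $\CMSOL$-smooth in every relational vocabulary, in particular over both graphs and hypergraphs; both are sum-like, with $\sqcup$ being essentially the reduct of $\sqcup_{rich}$ and $\sqcup_1$ being obtained from $\sqcup_{rich}$ by a quantifier-free scalar transduction that identifies the two distinguished label vertices. Hence the hypotheses of Theorem \ref{maintheorem} are met in the stated vocabularies, and the contradiction goes through in each case.

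The only genuine subtlety, and the place where one must tread carefully, is matching each polynomial with a vocabulary in which the chosen operation is actually sum-like: in particular one should not claim non-definability over hypergraphs in part (i), since $\bowtie$ fails to be smooth there and indeed some of these polynomials may well be hypergraph-$\CMSOL$-definable. Beyond that, there are no calculations to perform, as all the work has been done in the previous proposition; the corollary itself is a uniform ``assemble the pieces'' argument.
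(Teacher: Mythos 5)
Your proposal is correct and matches the paper's own proof, which likewise disposes of the corollary by noting that the join is $\CMSOL$-sum-like and $\CMSOL$-smooth for graphs (but not hypergraphs) while the $1$-sum and disjoint union are sum-like and smooth for both vocabularies, and then invoking the Finite Rank Theorem against the infinite-rank computations of the preceding proposition. Your added explanation of \emph{why} the vocabulary restrictions fall out this way is a more detailed version of the same argument, not a different one.
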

\begin{proof}
(i) The join is 
$\CMSOL$-sum-like and $\CMSOL$-smooth for graphs (but not for hypergraphs). 
(ii) The $1$-sum and the disjoint union are 
$\CMSOL$-sum-like and $\CMSOL$-smooth for graphs and hypergraphs.
\end{proof}

%-------------------------------------------------
%\input{means} %non-definability of averages
\subsection{Non-definability: Means and \texorpdfstring{$\CMSOL(\tau)$}{CMSOL(tau)}}
\label{se:means}
Maximization and minimization can sometimes be thought of as a type of mean.
For example, the maximal (minimal) degree of a graph is obtained from the generalized mean 
\[
 \left(\frac{\sum_{v\in V(G)} \mathrm{degree}(v)^p}{|V(G)|}\right)^{\frac{1}{p}}
\]
when $p$ tends to $\infty$ ($-\infty$).

Other instances of the generalized mean also lead to infinite connection matrices ranks.
In particular, below we show examples for $p=1$, $p=2$ and $p=-1$. 
The following Lemma will be useful:
\begin{lem}~\label{lem:rankprop}
Let $\Box$ be any binary operation between $\tau$-structures. 
%\marginpar{Should be somewhere else?}
\begin{renumerate}
\item The $\Box$-connection matrix of a linear combination of $\tau$-invariants with $\Box$-connection matrices of
finite rank is of finite rank.
 \item The $\Box$-connection matrix of a finite product of $\tau$-invariants with $\Box$-connection matrices of finite rank
is of finite rank.
 \end{renumerate}
\end{lem}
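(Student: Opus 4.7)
The plan is to observe that both operations on graph parameters translate into standard matrix operations on the associated connection matrices, and then invoke classical rank inequalities.

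For (i), suppose $f = \sum_{k=1}^{n} \alpha_k f_k$ with each $M(\Box, f_k)$ of finite rank $r_k$. Then entry-wise
\[
M(\Box, f)_{\mathcal{A}, \mathcal{B}} = f(\mathcal{A} \Box \mathcal{B}) = \sum_{k=1}^{n} \alpha_k\, f_k(\mathcal{A} \Box \mathcal{B}) = \sum_{k=1}^{n} \alpha_k\, M(\Box, f_k)_{\mathcal{A}, \mathcal{B}},
\]
so $M(\Box, f) = \sum_{k=1}^{n} \alpha_k M(\Box, f_k)$. Since rank is subadditive under (even infinite) linear combinations of matrices, we get that $\mathrm{rank}(M(\Box, f)) \le \sum_{k=1}^{n} r_k$, which is finite. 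This step is essentially a one-line verification.

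For (ii), suppose $f = \prod_{k=1}^{n} f_k$ with each $M(\Box, f_k)$ of finite rank. The key observation is that
\[
M(\Box, f)_{\mathcal{A}, \mathcal{B}} = \prod_{k=1}^{n} f_k(\mathcal{A} \Box \mathcal{B}) = \prod_{k=1}^{n} M(\Box, f_k)_{\mathcal{A}, \mathcal{B}},
\]
i.e.\ $M(\Box, f)$ is the entry-wise (Hadamard/Schur) product of the matrices $M(\Box, f_k)$. The main technical ingredient is the well-known rank inequality for Hadamard products: if $A$ and $B$ have finite ranks $r$ and $s$ respectively, then $A \circ B$ has rank at most $rs$. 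This is seen by writing $A = \sum_{i=1}^{r} u_i v_i^{tr}$ and $B = \sum_{j=1}^{s} x_j y_j^{tr}$ as sums of rank-one matrices, and noting that $A \circ B = \sum_{i,j} (u_i \circ x_j)(v_i \circ y_j)^{tr}$ is a sum of $rs$ rank-one matrices. Iterating over the $n$ factors gives $\mathrm{rank}(M(\Box, f)) \le \prod_{k=1}^{n} \mathrm{rank}(M(\Box, f_k))$, which is finite.

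Neither step presents a serious obstacle; the only thing to be careful about is that the rank-one decomposition argument for Hadamard products is stated here for infinite matrices indexed by isomorphism classes of $\tau$-structures, but the standard proof goes through verbatim since the decomposition is purely formal and does not depend on the size of the index set.
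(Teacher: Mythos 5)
Your proposal is correct and takes essentially the same approach as the paper: part (i) is rank subadditivity, and part (ii) is the Hadamard-product rank bound $\mathrm{rank}(A\circ B)\le \mathrm{rank}(A)\,\mathrm{rank}(B)$, which the paper proves in the equivalent form that every row of the product matrix lies in the span of $t^2$ fixed rows. Your explicit remark that the rank-one decomposition argument is index-set independent is a welcome (if minor) addition.
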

\begin{proof}~
\begin{renumerate}
\item Follows from the sub-additivity of the rank of matrices.
 \item It is enough to prove the claim for the product of two graph invariants,
$f$ and $g$. Denote the $\Box$-connection matrices of $f$ and $g$
by $M$ and $N$ respectively. Since $M$ and $N$ are of finite rank,
there exists $t\in \N$ such that for every $i\in \N$, 
the row $M_i$ is a linear combination of the rows $M_1,\ldots,M_t$
and $N_i$ is a linear combination of the rows $N_1,\ldots,N_t$. 
Hence, for every $i\in \N$ there exist $c_1,\ldots,c_t,d_1,\ldots,d_t$ such that
for every $j\in \N$,
\[
 M_{i,j}=\sum_{r\leq t} c_r M_{r,j} \mbox{\ \ \ and \ \ \ } 
 N_{i,j}=\sum_{s\leq t} d_s N_{s,j}\,. 
\]
Hence, 
\[
 M_{i,j} N_{i,j}=\sum_{r,s\leq t} c_{r} d_{s} M_{r,j} N_{s,j}\,.
\]
So, $M(\Box,f \cdot g)$ is spanned by $t^2$ rows.
\end{renumerate}
\end{proof}

Let $\mathrm{avg\,val}(G)$ denote the average degree: 
\[
 \mathrm{avg\,val}(G) = \frac{\sum_{v\in V(G)} \mathrm{degree}(v) }{|V(G)|}\,.
\] 
%where $\mathrm{degree}(v)$ is the degree of $v$ in $G$.
\begin{prop}[Arithmetic mean]
The rank of $M(\sqcup,\mathrm{avg\,val})$ is infinite. 
\end{prop}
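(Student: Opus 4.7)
The plan is to exhibit an explicit infinite submatrix of $M(\sqcup, \mathrm{avg\,val})$ whose structure reduces, after a trivial row rescaling, to a classical Cauchy matrix, which is well-known to have arbitrarily large nonzero minors.

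First I would fix rows indexed by complete graphs $K_n$ with $n \geq 2$ and columns indexed by edgeless graphs $E_m$ with $m \geq 1$. A short computation using $|V(K_n \sqcup E_m)| = n+m$ and $|E(K_n \sqcup E_m)| = \binom{n}{2}$ gives
\[
\mathrm{avg\,val}(K_n \sqcup E_m) \;=\; \frac{n(n-1)}{n+m}.
\]
Thus, on the chosen index sets, the submatrix equals $D \cdot C$, where $D$ is the diagonal matrix with entries $n(n-1)$ and $C$ is the submatrix of the infinite Cauchy matrix with entries $\frac{1}{n+m}$.

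Next I would invoke the Cauchy determinant formula: for any distinct integers $n_1 < n_2 < \cdots < n_N$ (all $\geq 2$) and distinct integers $m_1 < \cdots < m_N$ (all $\geq 1$),
\[
\det\!\left(\frac{1}{n_i+m_j}\right)_{i,j=1}^{N} \;=\; \frac{\prod_{i<j}(n_j-n_i)(m_j-m_i)}{\prod_{i,j}(n_i+m_j)} \;\neq\; 0.
\]
Since $n(n-1) \neq 0$ for $n \geq 2$, the diagonal rescaling $D$ is invertible on this $N \times N$ block, so the corresponding $N \times N$ submatrix of $M(\sqcup, \mathrm{avg\,val})$ has rank exactly $N$. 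Letting $N \to \infty$ then yields that $M(\sqcup, \mathrm{avg\,val})$ has infinite rank.

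There is no real obstacle here beyond spotting the right pair of families: the work is in recognizing that probing rows indexed by $K_n$ with columns indexed by $E_m$ separates the numerator (which depends only on the row) from the denominator $n+m$, which is exactly what produces a Cauchy pattern. Any alternative approach, e.g.\ a direct linear-independence argument showing that $\sum_i c_i \frac{n_i(n_i-1)}{n_i+m}$ can only vanish for all $m$ when all $c_i = 0$ (by clearing denominators and invoking that a nonzero polynomial in $m$ has finitely many roots), would work equally well, but the Cauchy-matrix argument is the cleanest.
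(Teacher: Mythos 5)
Your proof is correct and follows essentially the same strategy as the paper: both exhibit a Cauchy submatrix of $M(\sqcup,\mathrm{avg\,val})$ to conclude infinite rank. The only difference is cosmetic — the paper indexes both rows and columns by graphs with exactly one edge on $i$ vertices, obtaining the Cauchy matrix $(\tfrac{2}{i+j})_{i,j}$ directly, whereas you use $K_n$ versus $E_m$ and need an extra (harmless) diagonal rescaling.
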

\begin{proof}
To see this,
look at the connection matrix $M(\sqcup,\mathrm{avg\,val})$, whose entries can be expressed as follows:
$$M(\sqcup,\mathrm{avg\,val})_{i,j}= 2 \frac{|E_i| + |E_j| }{|V_i| + |V_j|}.$$
The sub-matrix of $M(\sqcup,\mathrm{avg\,val})$ which consists only of rows and columns
corresponding to graphs with exactly one edge is 
the Cauchy matrix $(\frac{2}{i+j})_{i,j}$, hence the rank of $M(\sqcup,\mathrm{avg\,val})$ is infinite.
\end{proof}
We can prove a similar statement for other graph parameters given as arithmetic means:
\begin{prop}\label{p:averages}
The following arithmetic means have $\sqcup$-connection matrices of infinite rank:
\begin{renumerate}
 \item For every $i\in \N$, the average size of the $i$-th neighborhood of vertices $v\in V(G)$,  \linebreak
$|\{ u \mid 0\leq \mathrm{distance}(u,v)\leq i\}|$.
 \item Average number of simple cycles in which vertices $v\in V(G)$ occur.
 \item Average number of triangles in which vertices $v\in V(G)$ occur.
 \item Average size of connected component in which vertices $v\in V(G)$ occur.
\item Average number of edges incident to an edge $e\in E(G)$.
\item Average number of cycles which include $e\in E(G)$.
\item For every $i\geq 2$, the average number of $i$-paths which include $e\in E(G)$. 
\end{renumerate}
\end{prop}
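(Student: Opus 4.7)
The plan is to mimic the argument just given for $\mathrm{avg\,val}$. In each of (i)--(vii) the parameter $p$ has the form $p(G) = N(G)/D(G)$, where $D(G) = |V(G)|$ in (i)--(iv) and $D(G) = |E(G)|$ in (v)--(vii), and $N(G)$ is a sum over vertices or edges $x$ of $G$ of a local count (the size of the $i$-th neighborhood of $x$, the number of simple cycles through $x$, the number of triangles through $x$, the size of the component containing $x$, the number of edges incident to $x$, the number of cycles through $x$, or the number of $i$-paths through $x$). Each of these local counts depends only on the connected component containing $x$, so both $N$ and $D$ are additive under $\sqcup$, and the entries of $M(\sqcup, p)$ take the form
\[
 M(\sqcup, p)_{G, H} \;=\; \frac{N(G) + N(H)}{D(G) + D(H)}.
\]

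The key step is to exhibit, for each item, an infinite sequence $\{G_n\}_{n\in\N}$ of pairwise non-isomorphic graphs producing a Cauchy-type submatrix. The general recipe is to take a fixed ``core'' gadget, which accounts for all of $N$, and pad it with enough isolated vertices or isolated edges to make $D$ grow linearly in $n$. Concretely, for (ii) and (iii) a single triangle together with $n$ isolated vertices gives $N(G_n) = 3$ and $|V(G_n)| = n+3$, so the submatrix has entries $6/(n+m+6)$; for (v) and (vi) the analogous choice is a triangle together with $n$ disjoint extra edges (which contribute $0$ to $N$); for (vii) the core is a path $P_{i+1}$ together with $n$ disjoint extra edges; for (iv) one takes an edge together with $n-2$ isolated vertices, giving $M_{n,m} = 1 + 4/(n+m)$; for (i) the stars $K_{1,n}$ are convenient, yielding either a translated Cauchy matrix (when $i=1$) or, when $i$ exceeds the diameter, an expression $(a^2+b^2)/(a+b) = a+b - 2ab/(a+b)$ with $a = n+1$, $b = m+1$, in which $ab/(a+b) = 1/(1/a + 1/b)$ is itself an infinite-rank Cauchy matrix in the distinct parameters $1/(n+1)$. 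In every case the resulting submatrix is the sum of a matrix of rank at most two and a translated classical Cauchy matrix, and hence has infinite rank by Lemma~\ref{lem:rankprop}(i).

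The main obstacle is the case-by-case nature of the argument: no single universal family works for all seven items, and for each parameter one must choose the family so that $N(G_n)$ is either constant or differs from a constant by a low-rank expression in $n$, so that after subtracting the rank-two perturbation a genuine Cauchy matrix with pairwise distinct parameters remains. Once the families are correctly selected, the fact that $N$ splits additively over components makes the computation of each entry routine, and infinite rank follows as in the case of $\mathrm{avg\,val}$ from the infinite rank of the Cauchy matrix $(1/(n+m))_{n,m}$.
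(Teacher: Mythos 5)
Your proposal is correct and follows exactly the route the paper intends: the paper gives no explicit proof of Proposition~\ref{p:averages}, merely asserting it is ``similar'' to the $\mathrm{avg\,val}$ case, and your padding construction (fixed core gadget plus isolated vertices or edges, yielding a Cauchy submatrix up to a low-rank perturbation) is precisely that argument with the details filled in. The only caveats are cosmetic: item (i) is vacuous for $i=0$ (the average is identically $1$), and in item (vii) the contribution of the padding edges depends on whether an ``$i$-path'' is counted by edges or vertices, but in either convention the entry is still a constant plus a Cauchy term, so the conclusion stands.
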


\noindent Let the quadratic mean of the degrees of vertices $v\in V(G)$ be
\[
 \mathrm{q\,avg\,val}(G)=\left(\frac{\sum_{v\in V(G)} \mathrm{degree}(v)^2 }{|V(G)|}\right)^{\frac{1}{2}}\,.
\]
\begin{prop}[Quadratic mean]
The rank of $M(\sqcup,\mathrm{q\,avg\,val})$ is infinite.
\end{prop}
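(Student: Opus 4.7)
The plan is to reduce to the arithmetic-mean case handled in the previous proposition. Observe that
\[
 \mathrm{q\,avg\,val}(G)^2 = \frac{\sum_{v \in V(G)} \mathrm{degree}(v)^2}{|V(G)|}
\]
is itself an arithmetic mean (of the squared degrees). By Lemma \ref{lem:rankprop}(ii), the connection matrix of a product of graph parameters whose connection matrices have finite rank again has finite rank. Consequently, if $M(\sqcup, \mathrm{q\,avg\,val})$ had finite rank $r$, then the Hadamard square $M(\sqcup, \mathrm{q\,avg\,val}^2)$ would have rank at most $r^2$, and in particular would be of finite rank.

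First I would exhibit an explicit infinite-rank sub-matrix of $M(\sqcup, \mathrm{q\,avg\,val}^2)$. Let $G_i$ for $i \geq 2$ be the disjoint union of a single edge with $i-2$ isolated vertices, so that $|V(G_i)| = i$ and $\sum_{v \in V(G_i)} \mathrm{degree}(v)^2 = 2$. A direct computation gives
\[
 \mathrm{q\,avg\,val}(G_i \sqcup G_j)^2 = \frac{2 + 2}{i+j} = \frac{4}{i+j},
\]
which, as $i,j$ vary, is (up to the scalar $4$) the Cauchy matrix $(1/(i+j))_{i,j}$ and hence has infinite rank. Combined with the reduction in the previous paragraph, this contradicts the assumption that $M(\sqcup, \mathrm{q\,avg\,val})$ has finite rank.

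The argument has no real obstacle; the only conceptual point is that infinite rank transfers from $\mathrm{q\,avg\,val}^2$ back to $\mathrm{q\,avg\,val}$ via Lemma \ref{lem:rankprop}(ii) applied to the product of $\mathrm{q\,avg\,val}$ with itself. Alternatively, one can argue directly: the sub-matrix of $M(\sqcup, \mathrm{q\,avg\,val})$ indexed by the graphs $G_i$ has entries $2/\sqrt{i+j}$, whose entry-wise square is a scaled Cauchy matrix of infinite rank, and since finite rank is preserved under taking the Hadamard square (each entry of $M^{(2)}$ lies in the span of the $r^2$ products of a fixed basis of rows), the original matrix must itself have infinite rank.
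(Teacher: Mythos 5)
Your proof is correct and follows essentially the same route as the paper: show that $M(\sqcup,\mathrm{q\,avg\,val}^2)$ restricted to graphs with exactly one edge is a scaled Cauchy matrix of infinite rank, then transfer infinite rank back to $M(\sqcup,\mathrm{q\,avg\,val})$ via the contrapositive of Lemma \ref{lem:rankprop}(ii) applied to the product of $\mathrm{q\,avg\,val}$ with itself. (Your constant $4/(i+j)$ is in fact the correct entry; the discrepancy with the paper's $2/(i+j)$ is immaterial for the rank.)
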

\begin{proof}
$(\mathrm{q\,avg\,val}(G))^2$ has infinite connection matrix rank with respect to $\sqcup$
by looking again at graphs with exactly one edge. Again, $M(\sqcup,\mathrm{q\,avg\,val}^2)$ 
has entries $M_{i,j}=\frac{2}{i+j}$. Hence, by Lemma \ref{lem:rankprop},
$M(\sqcup, \mathrm{q\,avg\,val}(G))$ has infinite rank.
\end{proof}

With similar proofs, we have:
\begin{prop}
The corresponding quadratic means to those in Proposition \ref{p:averages} have $\sqcup$-connection matrices of infinite rank.  
\end{prop}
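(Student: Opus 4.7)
The plan is to mimic the argument the authors gave for the quadratic mean of the degree sequence and reduce the problem to the fact that (shifted) Cauchy matrices $\bigl(1/(i+j+c)\bigr)_{i,j}$ have infinite rank. For each of the seven averages listed in Proposition \ref{p:averages}, let $f$ denote the arithmetic mean over vertices (cases (i)--(iv)) or over edges (cases (v)--(vii)) of a local quantity $h(x,G)$, so $f(G)=S_1(G)/N(G)$ with $S_1(G)=\sum_x h(x,G)$ and $N(G)$ equal to $|V(G)|$ or $|E(G)|$. The quadratic mean is then $f_q(G)=\bigl(S_2(G)/N(G)\bigr)^{1/2}$ with $S_2(G)=\sum_x h(x,G)^2$, and since $h$ is local, $S_2$ and $N$ are additive under disjoint union, so
\[
f_q(G\sqcup H)^2 \;=\; \frac{S_2(G)+S_2(H)}{N(G)+N(H)}.
\]

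The main step will be, for each parameter, to exhibit an infinite family $(G_n)_{n\in I}$ of pairwise non-isomorphic graphs for which both $S_2(G_n)$ and $N(G_n)$ are affine functions of $n$ in a non-degenerate way. Once such a family is in hand, the entries of $M(\sqcup, f_q^2)$ restricted to the family take the form $(a(i+j)+b)/(c(i+j)+d)$, which by long division equals $\alpha+\beta/(c(i+j)+d)$ with $\beta\neq 0$. This is a constant plus a shifted Cauchy matrix; by Lemma \ref{lem:rankprop}(i) the connection matrix $M(\sqcup,f_q^2)$ therefore has infinite rank. To conclude infinite rank of $M(\sqcup, f_q)$ itself, I would argue by contradiction: if $M(\sqcup,f_q)$ had finite rank, Lemma \ref{lem:rankprop}(ii) applied to $f_q\cdot f_q = f_q^2$ would give finite rank for $M(\sqcup,f_q^2)$.

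For the family, a uniform choice works for most of the vertex-based parameters: take $G_n = K_2\sqcup E_n$ where $E_n$ is $n$ isolated vertices. For case (i), $|V(G_n)|=n+2$ while $\sum_v |N_i(v)|^2 = n+8$ since isolated vertices contribute $1$ each and the two endpoints of the edge contribute $4$ each (for $i\ge 1$); the ratio is affine and non-degenerate. For (ii)--(iv), isolated vertices contribute $0$ to $h(v,G_n)^2$ (they lie on no cycle, no triangle, and in singleton components only, up to harmless tweaks), so $S_2(G_n)$ is a positive constant independent of $n$ while $N(G_n)=n+2$, and the Cauchy shape appears directly. For the edge-based parameters (v)--(vii) I would take $G_n = P_n$ (the path on $n$ vertices): then $N(G_n)=n-1$ and a short count at the two end-edges versus the $n-3$ interior edges shows $S_2(G_n)$ is linear in $n$ with a non-degenerate leading behaviour.

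The only real obstacle is the case-by-case verification that the Cauchy non-degeneracy $\beta\neq 0$ holds, i.e.\ that $S_2(G_n)$ is not simply a constant multiple of $N(G_n)$. This is a routine local calculation in each of the seven instances, and for the marginal case (iii) (triangles), one may prefer to use $G_n = K_3\sqcup E_n$ instead so that $S_2(G_n)=3$ and $N(G_n)=n+3$, giving an unambiguous Cauchy matrix. Apart from these small per-parameter checks, the general scheme above yields all seven non-definability statements uniformly.
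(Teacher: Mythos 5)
Your overall scheme is exactly what the paper intends: the paper gives no proof for this proposition beyond the words ``with similar proofs,'' and the model it points to (the degree case) is precisely your reduction --- square the quadratic mean, restrict to a family on which numerator and denominator are affine in the index so the entries become a constant plus a shifted Cauchy matrix, conclude infinite rank of $M(\sqcup,f_q^2)$, and then get infinite rank of $M(\sqcup,f_q)$ by contradiction via Lemma \ref{lem:rankprop}(ii). That chain of reasoning is sound.

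There is, however, a concrete slip in two of the seven instantiations. Your uniform vertex family $K_2\sqcup E_n$ is acyclic, so for case (ii) (simple cycles through a vertex) every vertex contributes $0$ and $S_2(G_n)\equiv 0$; the resulting connection submatrix is identically zero and has rank $0$, not infinite. The same happens for case (vi) with the family $P_n$, since paths contain no cycles. You already identified the right repair for the triangle case (iii) --- replace $K_2\sqcup E_n$ by $K_3\sqcup E_n$ --- and the identical repair handles (ii) (each vertex of $K_3$ lies on exactly one simple cycle, so $S_2=3$ and $N=n+3$, giving a genuine Cauchy matrix). For (vi) one needs an analogous edge family containing a cycle, e.g.\ a triangle with a pendant path of length $n$, where the three triangle edges each lie on one cycle and the path edges on none, so $S_2=3$ and $N=n+3$ again. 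Also, your parenthetical claim that for (ii)--(iv) ``$S_2(G_n)$ is a positive constant'' is off for (iv): an isolated vertex lies in a component of size $1$, so it contributes $1$, making $S_2$ affine in $n$ rather than constant --- but this does not harm the argument, since affine-over-affine with nonzero Cauchy part is all you need. With these per-case corrections the proof goes through as you describe.
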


Let the harmonic mean of the degrees
of vertices $v\in V(G)$ be
\[
 \mathrm{h\,avg\,val}(G)=\frac{|V(G)|}{\sum_{v\in V(G)} \frac{1}{\mathrm{degree}(v)}}\,.
\]
We only consider graphs with no isolated vertices in order to avoid devision by zero. 
\begin{prop}[Harmonious mean]
 The rank of $M(\bowtie,\mathrm{h\,avg\,val})$ is infinite. 
\end{prop}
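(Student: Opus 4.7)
The plan is to follow the same template as the arithmetic and quadratic mean cases: restrict the connection matrix to an infinite, explicitly understood subfamily and show that the resulting submatrix has infinite rank. The natural choice here is the family of edgeless graphs $E_n$ with $n$ vertices for $n \geq 1$. Although $E_n$ itself has isolated vertices, the join $E_n \bowtie E_m = K_{n,m}$ does not (for $n,m\geq 1$), so $\mathrm{h\,avg\,val}$ is well defined on every entry of this submatrix. A one-line computation, noting that in $K_{n,m}$ the $n$ vertices of one side have degree $m$ and the $m$ vertices of the other side have degree $n$, gives
\[
\mathrm{h\,avg\,val}(E_n \bowtie E_m) \;=\; \frac{n+m}{\tfrac{n}{m} + \tfrac{m}{n}} \;=\; \frac{nm(n+m)}{n^2 + m^2} \;=:\; M_{n,m}.
\]

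The task thus reduces to showing that the rational matrix $M = (M_{n,m})_{n,m \geq 1}$ has infinite rank. I plan to prove that for every finite set of distinct positive integers $n_1,\dots,n_k$, the corresponding rows are linearly independent. The trick is to view each row $M_{n_0, \cdot}$ as the restriction to $\N_{>0}$ of the complex rational function
\[
R_{n_0}(z) \;=\; \frac{n_0\, z\,(n_0 + z)}{n_0^2 + z^2}.
\]
Since any nonzero rational function has only finitely many zeros, linear independence of $R_{n_1},\dots,R_{n_k}$ over $\C$ implies linear independence of the rows.

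To verify this independence I will use a residue argument. The function $R_{n_j}$ has a simple pole at $z = i\,n_j$ with residue $\tfrac{1}{2}n_j^{\,2}(1+i) \neq 0$, while each $R_{n_\ell}$ with $\ell \neq j$ is analytic at $z = i\,n_j$, because the poles $\pm i\,n_\ell$ lie at distinct locations when the $n_\ell$ are distinct positive reals. If $\sum_j c_j R_{n_j}$ vanishes identically, reading off the residue at each $z = i\,n_j$ forces $c_j = 0$. No real obstacle is expected: the residue computation is elementary and the separation of poles is automatic from the distinctness of the $n_j$. The infinite rank of $M(\bowtie, \mathrm{h\,avg\,val})$ follows immediately.
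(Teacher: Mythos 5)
Your proof is correct, but it takes a genuinely different route from the paper's. Both arguments begin identically: restrict to the edgeless graphs, use $E_n\bowtie E_m=K_{n,m}$, and compute $\mathrm{h\,avg\,val}(K_{n,m})=\frac{nm(n+m)}{n^2+m^2}$. From there the paper argues by contradiction: assuming finite rank, it repeatedly applies Lemma \ref{lem:rankprop} (closure of finite rank under linear combinations and entrywise products) together with row/column scalings to strip away the factors $i$, $j$ and $(i+j)$, eventually concluding that the Cauchy matrix $N\bigl(\frac{1}{i^2+j^2}\bigr)$ would have finite rank --- contradicting the known infinite rank of Cauchy matrices. You instead prove directly that any finite set of distinct rows is linearly independent, by interpreting the row indexed by $n_0$ as the restriction to $\N_{>0}$ of the rational function $R_{n_0}(z)=\frac{n_0 z(n_0+z)}{n_0^2+z^2}$ and isolating the coefficient $c_j$ via the residue at the pole $z=in_j$, which no other $R_{n_\ell}$ shares; the residue $\tfrac12 n_j^2(1+i)$ is nonzero, and a rational function vanishing at infinitely many integers vanishes identically, so all coefficients are forced to zero. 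Your approach is self-contained: it needs neither the rank-product lemma nor the Cauchy-matrix fact (which the paper's chain of reductions is ultimately engineered to reach), and it yields the slightly stronger conclusion that \emph{every} finite subset of rows of the restricted matrix is independent. The paper's approach, on the other hand, showcases the reusable algebraic toolkit of Lemma \ref{lem:rankprop}, which it also exploits for the quadratic mean. Either argument is acceptable; yours is arguably the more transparent of the two.
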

\begin{proof}
Note that $K_{n,m}=E_n \bowtie E_m$ and 
consider 
\begin{eqnarray*}
\mathrm{h\,avg\,val}(K_{n,m}) 
&=& \frac{n+m}{\frac{n}{m} + \frac{m}{n}} \\
&=& \frac{1}{n^2 + m^2}\cdot nm (n+m) 
\end{eqnarray*}
For every function $f:\N\times\N\to \Q$, let
$N(f)$ be the matrix such that the entry in row $i$ and column $j$ is $f(i,j)$. 
Assume $M(\mathrm{h\,avg\,val},\bowtie)$ is of finite rank. 
Then $N_1 = N\left(\frac{ij(i+j)}{i^2+j^2}  \right)$ is of finite rank.
Clearly, $N_2 = N\left(\frac{j(i+j)}{i^2+j^2} \right)$ is also of finite rank,
since $N_2$ is obtained from $N_1$ by multiplying each row $i$ of $N_1$ by a scalar $\frac{1}{i}$.  
The matrix $N\left(\frac{1}{j}\right)$ is of row rank $1$ because all of its rows are equal. So,
$N_3 = N\left(\frac{i+j}{i^2+j^2} \right)$ is of finite rank by Lemma \ref{lem:rankprop}.
The matrix $N(i+j)$ is of row rank $2$, since it is spanned by the vectors
$(1\ 2\ 3\ 4\ \cdots)$ and $\mathbf{1}$.  
So, 
$N_4 = N\left(\frac{(i+j)^2}{i^2+j^2} \right)$ is of finite rank again by Lemma \ref{lem:rankprop}.
Now notice
\[           
\frac{(i+j)^2}{i^2+j^2}   = 
1 + \frac{2ij}{i^2+j^2}  \,.
\]
Hence, we have that $N_5 =N\left(\frac{1}{i^2+j^2}\right)$ 
is of finite rank, but $N_5$ is a Cauchy matrix and is therefore of infinite rank, in contradiction. 
\end{proof}

\subsection{Non definability: \texorpdfstring{$\CFOL(\tau)$}{CFOL(tau)}-parameters}
Here we discuss definability and non-definability of $\tau$-parameters in $\CFOL$. Recall from Section \ref{se:solpol} that
formulas in $\CFOL$-parameters may not have second order variables. Examples of $\CFOL$-parameters $p(G)$, $G=(V,E)$, in the vocabulary of graphs:
\[\begin{array}{llllll}
 |V| & =& \displaystyle  \sum_{v \in V} 1 \ \ \ \ \ \ \ \ \ \ \  \ \ \ \ \ \ \ \ \ \  \ \ \ \ \ \ \ \ \ \ \ \ \  \ &  |\mathit{Apex(G)}| &=&\displaystyle  \sum_{v \in V : \forall x ((x \approx v) \lor E(x,v))} 1
\\
 |E| & = & \sum_{v,u\in V: E(u,v) \land (u<v)} 1 & \mathit{odd-deg}(G) &= & \sum_{x\in V: \dd_{2,1}x E(y,x)} 1
\end{array}
\]
where $\mathit{Apex}(G)$ is the set of vertices of $G$ adjacent to all other vertices, 
$\mathit{dgen}(G)$ is the generating function 
 of the degrees of vertices in $G$, and $\mathit{odd-deg}(G)$
is the number of vertices of odd degree. 

On the other hand, we can use Theorem \ref{maintheorem-fo} to show non-definability of $\CFOL$-para\-meters:
\begin{prop}\label{prop:fo-non-def-poly} The following are not $\CFOL$-definable:
\begin{renumerate}
 \item The number $spn-f(G)$ of  spanning forests
 \item The number $spn-t(G)$ of  spanning trees. 
 \item The number $cyc(G)$ of  cycles in $G$. 
 \item The number $k(G)$ of connected components in $G$. 
 \item The tree-width $tw(G)$ of $G$. 
 \item The number $blk(G)$ of blocks of $G$. 
\end{renumerate}
\end{prop}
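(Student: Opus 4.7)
The plan is to apply Theorem \ref{maintheorem-fo} in each of the six cases. For every parameter $f$ in the statement I will exhibit a $\CFOL$-product-like operation $\boxtimes$ on graphs together with an infinite family $\{G_n\}_{n\in\N}$ such that the restriction of the connection matrix $M(\boxtimes,f)$ to $\{G_n\}$ has infinite rank; by Theorem \ref{maintheorem-fo} this rules out $\CFOL$-definability of $f$.

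For tree-width (v), I would take $\boxtimes=\sqcup$, which is $\FOL$-sum-like and hence $\CFOL$-product-like, together with $G_n=K_{n+1}$. Since $tw(G\sqcup H)=\max(tw(G),tw(H))$ and $tw(K_{n+1})=n$, the parameter $tw$ is unboundedly $\sqcup$-maximizing along this family, and Proposition \ref{p:maxmin} delivers infinite rank. For spanning trees (ii) and spanning forests (i), I would use $\boxtimes=\bowtie$ (the join, which is $\FOL$-smooth and sum-like on graphs by Proposition \ref{p:fol-1}(iv)) together with $G_n=K_n$. Since $K_n\bowtie K_m=K_{n+m}$, the resulting connection matrices are Hankel matrices $[a_{n+m}]_{n,m}$ with $a_k=k^{k-2}$ for spanning trees (Cayley's formula) and $a_k$ equal to the number of labeled forests on $k$ vertices for spanning forests. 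Both sequences grow super-exponentially, and hence cannot satisfy any finite-order linear recurrence, so the corresponding Hankel matrices have infinite rank.

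For cycle count (iii), component count (iv), and block count (vi), I would use the operation $\boxtimes(G,H)=\Phi_F^{\star}(G\times H)$, where $\Phi_F$ is the quantifier-free scalar transduction of Section \ref{se:nondef}. Since $\times$ is $\CFOL$-product-like by Theorem \ref{th:FV-CFOL}(ii) and $\Phi_F$ is a quantifier-free transduction, the composition $\boxtimes$ is again $\CFOL$-product-like. Applied to the directed paths $P_n$ with marked endpoints, $\Phi_F(P_n\times P_m)$ is the disjoint union of the $n+m-1$ diagonal chains of the tensor product, together with one additional edge connecting $(start_1,start_2)$ to $(end_1,end_2)$; this added edge closes a simple cycle of length $n$ precisely when $n=m$, and otherwise merges two chains without creating a cycle. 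Counting simple cycles immediately yields $cyc(\Phi_F(P_n\times P_m))=\mathbb{1}_{n=m}$, an identity sub-matrix of infinite rank. For $k$ and $blk$ a direct edge-and-component count shows that the corresponding matrices can be written in the form $A+D$, where $A$ is bilinear in $n$ and $m$ (and therefore of rank $\leq 3$) and $D$ is a diagonal matrix with distinct nonzero entries for all sufficiently large $n$; the inequality $\mathrm{rank}(A+D)\geq\mathrm{rank}(D)-\mathrm{rank}(A)$ then forces infinite rank.

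I expect the main technical obstacle to lie in the Hankel-rank argument of (i) and (ii). It rests on the well-known characterization that a Hankel matrix $[a_{n+m}]$ has finite rank if and only if the sequence $(a_k)$ satisfies a linear recurrence of that order; one must then verify that the sequences $k^{k-2}$ and the labeled forest counts dominate every geometric progression and therefore admit no such recurrence. The verifications in (iii), (iv), (vi) are purely combinatorial bookkeeping on the edges, components and blocks of $\Phi_F(P_n\times P_m)$ and should pose no serious difficulty once the cycle-closing behaviour of $\Phi_F$ is recorded.
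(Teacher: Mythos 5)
Your proposal is correct, but only items (iii), (iv) and (vi) follow the paper's route; for the remaining items you substitute genuinely different constructions. The paper proves all six claims uniformly with the product-based transductions of Section \ref{se:nondef} applied to pairs of directed paths: $\Phi_F$ for spanning forests and cycle count, $\Phi_T$ for spanning trees and connected components, $\Phi_P$ for tree-width, and $\Phi_B$ for blocks; in each case the parameter jumps at (or on one side of) the diagonal $n_1=n_2$, producing a triangular or diagonal-perturbation submatrix of infinite rank with no further input. You instead handle tree-width by the unboundedly $\sqcup$-maximizing argument of Proposition \ref{p:maxmin} on cliques --- simpler than the paper's $\Phi_P$ construction and legitimate, since $\sqcup$ is $\CFOL$-product-like --- and spanning trees and forests by joining cliques, reducing to Kronecker's criterion that a Hankel matrix $[a_{n+m}]$ has finite rank iff $(a_k)$ eventually satisfies a constant-coefficient linear recurrence, which the super-exponential sequences $k^{k-2}$ and the labelled-forest counts violate. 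That argument is sound (the join is a quantifier-free scalar transduction of $\sqcup_{rich}$, hence $\CFOL$-sum-like, product-like and smooth by Theorem \ref{th:FV-CFOL}(i) and Proposition \ref{p:fol-1}(v)), but it imports Kronecker's theorem as an extra analytic ingredient where the paper's $\Phi_F$ and $\Phi_T$ computations give explicit $0/1$ triangular submatrices. Your treatment of $k(G)$ and $blk(G)$ via $\Phi_F$ rather than the paper's $\Phi_T$ and $\Phi_B$ also checks out: the component count equals $n+m+1$ on the diagonal and $n+m$ off it, and the block count is bilinear off the diagonal and drops by $n$ on it, so each matrix is a rank-$\le 2$ bilinear part plus a diagonal perturbation of infinite rank, exactly as you describe.
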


\begin{proof}
We use the constructions from Section \ref{se:fol}. 
\begin{renumerate}
 \item The number of spanning forests $spn-f$ in the graph from Figure \ref{fg:Forests}, the graph obtained by applying $\Phi_F$ to two directed paths of length $n_1$ and $n_2$,
is $1$ if $n_1\not=n_2$, and is $n_1$ if $n_1=n_2$. Hence, $M(\Phi_F,spn-f)$ has infinite rank. Since $\Phi_F$ is a $\CFOL$-product-like operation, 
$spn-f$ is not $\CFOL$-definable. 
\end{renumerate}
The other cases are similar. We describe them shortly. 
 \begin{renumerate}
\setcounter{enumi}{1}
\item We use $\Phi_T$ from Figure \ref{fg:Trees}. The graph is connected iff $n_1\leq n_2$, implying that the connection matrix of the number of spanning trees is zero below the diagonal and non-zero otherwise, so has infinite rank.
\item We use $\Phi_F$ from  Figure \ref{fg:Forests} again. We have $cyc(G) =1$ iff $n_1=n_2$, otherwise $cyc(G)=0$. Hence the connection matrix of $cyc(G)$  
has $2$ on the diagonal and $1$ otherwise. Consequently it has infinite rank. 
\item We use $\Phi_T$ from Figure \ref{fg:Trees}. We have $k(G) = 1$ if $n_1\leq n_2$ and $k(G)= n_1-n_2+1$ otherwise. Hence, the connection matrix of $k(G)$
so has infinite rank. 
\item We use $\Phi_P$ from Figure \ref{fg:Planar}. We have $tw(G)=4$ iff $n_1\not=n_2$, and otherwise $tw(G)=5$. 
\item We use $\Phi_B$ from Figure \ref{fg:Bridgeless}. We have $blk(G)=1$ if $n_1<n_2+1$, $blk(G)=2$ if $n_1=n_2+1$ and otherwise $blk(G)=3$. 
\end{renumerate}
\end{proof}

\section*{Acknowledgements}
 We are grateful for the detailed feedback from the referees which allowed us to improve the presentation of the paper.

%-------------------------------------------------
% A Feferman-Vaught-type theorem for $\CMSOL$-polynomials
%\input{fv}

%-----------------------------------------------------------------------------------------------------------------------------

%BBBB
%-------------------------------------------------
%%%%%%%%%%%%%%%%%%%%%%%%%%%%%%%%%%%%%%%%%%%%%%%%%%%%%%%%%%%%%%%%%%%%%%%%%%%%%%%%%%%%%
%%%%%%%%%%%%%%%%%%%%%%%%%%%%%%%%%%%%%%%%%%%%%%%%%%%%%%%%%%%%%%%%%%%%%%%%%%%%%%%%%%%%%

%-------------------------------------------------
\end{document}